\documentclass[10pt]{IEEEtran}
\usepackage{amsmath,amssymb}

\usepackage{cite}
\usepackage{graphicx}
\usepackage{psfrag}
\usepackage{epsfig} 
\usepackage{subfigure}
\usepackage{url}
\usepackage{amsmath}
\usepackage{array}
\usepackage{amssymb}
\usepackage{amsfonts}

\setlength{\textheight}{9.0in}
\setlength{\topmargin}{+0.0in}

\begin{document}

\interdisplaylinepenalty=2500
\newtheorem{theorem}{Theorem}
\newtheorem{lemma}{Lemma}
\newtheorem{definition}{Definition}
\newtheorem{proposition}{Proposition}
\newtheorem{corollary}{Corollary}
\newtheorem{example}{Example}
\newtheorem{remark}{Remark}
\newtheorem{note}{Note}
\newcounter{mytempeqncnt}

\title{Maximum-rate, Minimum-Decoding-Complexity STBCs from Clifford Algebras}

\vspace{1.00cm}

\author{Sanjay Karmakar and B. Sundar Rajan, Senior Member, IEEE
\thanks{This work was partly supported by the DRDO-IISc Program on Advanced Research in Mathematical
Engineering and by the Council of Scientific \&
Industrial Research (CSIR), India, through Research Grant (22(0365)/04/EMR-II) to B.S.~Rajan.}
\thanks{The authors are with the Department of Electrical Communication Engineering, Indian Institute of Science, Bangalore, India 560012. email:bsrajan@ece.iisc.ernet.in.}
\thanks{Different parts of the content of this paper appear in the Proc. of IEEE International Symposium on Information on Information Theory (ISIT 2006), Seattle, Washington, July 09-14, 2006.}
}

\markboth{IEEE Transactions on Information Theory,~Vol.~XX, No.~XX,(  Submitted  )}{Karmakar and Rajan: Maximum-rate, MDC STBCs from Clifford Algebras}

\maketitle
\thispagestyle{empty}

\begin{abstract}
It is well known that Space-Time Block Codes (STBCs) from orthogonal designs (ODs) are  single-symbol decodable/symbol-by-symbol decodable (SSD) and are obtainable from unitary matrix representations of Clifford algebras. However, SSD codes are obtainable from designs that are not orthogonal also. Recently, two such classes of SSD codes have been studied: (i) Coordinate Interleaved Orthogonal Designs (CIODs) and (ii) Minimum-Decoding-Complexity (MDC) STBCs from Quasi-ODs (QODs). Codes from ODs, CIODs and MDC-QODs are mutually non-intersecting classes of codes. The class of CIODs have {\it non-unitary weight matrices} when written as a Linear Dispersion Code (LDC) proposed by Hassibi and Hochwald, whereas several known SSD codes including CODs  have {\it unitary weight matrices}. In this paper, we obtain SSD codes with unitary weight matrices (that are not CODs) called Clifford Unitary Weight SSDs (CUW-SSDs) from matrix representations of Clifford algebras. A  main result of this paper is the derivation of an achievable upper bound on the rate of any unitary weight SSD code as $\frac{a}{2^{a-1}}$ for $2^a$ antennas which is larger than that of the CODs which is $\frac{a+1}{2^a}$. It is shown that several known classes of SSD codes are CUW-SSD codes and CUW-SSD codes meet this upper bound. Also, for the codes of this paper conditions on the signal sets which ensure full-diversity and expressions for the coding gain are presented. A large class of SSD codes with non-unitary weight matrices are obtained which include  CIODs as a proper subclass.
\end{abstract}
Index Terms: Clifford algebras, Minimum decoding complexity, Orthogonal designs and Quasi-orthogonal designs, Space-time codes.
\section{Introduction and Preliminaries}
\label{sec1}
\pagenumbering{arabic}
We consider a multiple antenna transmission system with $n$  number of transmit antennas and $m$  number of receive antennas. At each time slot $t$, the complex signals, $s_{ti}, \, i=1,2, \cdots ,n$ are transmitted from the $n$ transmit antennas simultaneously. Let $h_{ij}={\alpha}_{ij}{e}^{\mathbf{j}\theta_{ij}}$ denote the path gain from the transmit antenna $i$ to the receive antenna $j$, where $\mathbf{j}=\sqrt{-1}$. Assuming that the path gain are constant over a frame length $n$ (we consider only square designs or square codeword matrices), the received signal $y_{tj}$ at the receive antenna $j$ at time $t$ is given by,
\begin{equation*}
\label{bchmodel}
 y_{tj}={\sum}_{i=1}^{n} s_{ti}h_{ij}+n_{tj},
\end{equation*}
for $j=1,2, \cdots, m, ~~ t=1,2, \cdots, n,$ which in matrix notation is,
\begin{equation*}
\label{mateq1}
Y=SH+N
\end{equation*}
where $Y\in {\mathbb{C}}^{n\times m}$ is the received signal matrix, $S \in {\mathbb{C}}^{n\times n}$ is the transmission matrix (also referred as codeword matrix), $N \in {\mathbb{C}}^{n\times m}$ is the additive noise matrix and $H \in {\mathbb{C}}^{n\times m}$ is the channel matrix, where $\mathbb{C}$ denotes the complex field. The set of all possible codeword matrices $\{s_{ti},~~i,t=1,2,\cdots,n \}$ is the Space-Time Block Code (STBC) used. The entries of $H$ are complex Gaussian with zero mean and unit variance and the entries of $N$ are complex Gaussian with zero mean and variance ${\sigma}^2$. Both are assumed to be temporally and spatially white. We further assume that transmission power constraint is given by $E\left[tr\{SS^{H}\}\right]={n}^2$.

An $n\times n$ linear dispersion STBC \cite{HaH} with $K$ complex variables $x_1,x_2,\cdots,x_K$ is  given by
\begin{equation}
\label{ldceqn}
S  = \sum_{i=1}^{K} (x_{iI}A_{iI}+x_{iQ}A_{iQ})
\end{equation}
\begin{equation}
\label{nopathology}
A_{iQ}  \neq c_i A_{iI}, \mbox{ for some }c_i \in {\mathbb R},~~1 \leq i \leq K
\end{equation}
where ${\bf j}=\sqrt{-1}$ and $x_i= x_{iI}+{\bf j}x_{iQ}$,  $1\leq i \leq K,$  are the $K$ complex variables ($x_{iI}$ and $x_{iQ}$ denoting, respectively,  the in-phase and quadrature components of $x_i$) taking values from a complex signal set ${\cal A}_i$. Then the  number of codewords is $\prod_{i=1}^K {\cal A}_i$. The set of $n\times n$ complex matrices $\{A_{iI}, A_{iQ} \}$, called {\it the weight matrices} define $S$. Notice that in \eqref{nopathology}, it is assumed that the components of the pair $(A_{iI},A_{iQ})$ is not a real scaled version of one another. For otherwise, the code may not be decodable for some signal sets as follows: Suppose $A_{jQ}=cA_{jI}$ for some $j$ and real number $c$. Then in the term $x_{jI}A_{jI}+x_{jQ}A_{jQ} = (x_{jI}+cx_{jQ})A_{jI}$ the real quantity $(x_{jI}+cx_{jQ})$ can turn out to be the same for two different complex signal points leading to the same space time codeword for two different sets of information symbols.

Assuming that perfect channel state information (CSI) is available at the receiver, the maximum likelihood (ML) decision rule  minimizes the metric,
\begin{equation}
\label{mlmetric}
M(S) \triangleq \min_{S} tr({({Y-SH})}^{H}({Y-SH})) = {\parallel {Y}-{SH} \parallel}^2
\end{equation}
where $tr(.)$ denotes the trace of a matrix, $\parallel .\parallel$ denotes the Frobonius norm of the argument and $A^H$ stands for the Hermitian (conjugate transpose) of the matrix $A$. 
It is clear that there are $\prod_{i=1}^{K}|\mathcal{A}_i|$ different codewords and, in general, the ML decoding requires $\prod_{i=1}^{K}|\mathcal{A}_i|$ computations, one for each codeword. If the set of weight matrices  are chosen such that the decoding metric \eqref{mlmetric} could be decomposed into,
\begin{equation*}
M(S)  = \sum_{j=1}^p {f_j(x_{(j-1)q+1},x_{(j-1)q+2},\cdots, x_{(j-1)q+q})}
\end{equation*}
sum of $p$ positive terms, each involving exactly $q$ complex variables only, where $pq=K$,  then the decoding requires $\sum_{j=1}^p \{ \prod_{i=1}^q |\mathcal{A}_{i+(j-1)q} | \}$ ( $ <  \prod_{i=1}^{K}|\mathcal{A}_i|$ as $|\mathcal{A}_i|\geq 2~ \forall~~ i$) computations and the code is called a $q$-symbol decodable code. The case $q=1$ corresponds to \textbf{Single-Symbol Decodable (SSD)} codes that includes the well known Orthogonal Designs (ODs) as a proper subclass, and have been extensively studied  \cite{TJC,Ala,TiH,GaS,KhR1,KhR2,Kha,KhR3,KhR4,KRL1,KRL2,WWX1,WWX2,YGT1,YGT2,YGT3,YGT4,YGT5}. The codes corresponding to $q=2$, are called \textbf{Double-Symbol-Decodable (DSD)} codes. The Quasi-Orthogonal Designs studied in \cite{Jaf,SuX,SuX3,TiH1,ShP1,ShP2,TBH} and \cite{KRL1} are proper subclasses of DSD codes. Codes from Orthogonal designs \cite{TJC,Ala},\cite{GaS} and Quasi-orthogonal designs \cite{Jaf,SuX,SuX3,TiH1,ShP1,ShP2,TBH} and their relationship with Hurwitz-Radon family of matrices \cite{TiH,GaS} and unitary representations of Clifford algebras \cite{TJC}, \cite{TiH}.
\begin{definition}\cite{TJC}
A {\textit{complex orthogonal design}} $ G (x_1, x_2, . . . , x_k )$ (in short $G$)  of size $ p \times n $ is a $ p \times n $ matrix satisfying the following conditions:
\begin{itemize}
\item the entries of $G$ are complex linear combination of  $ x_1 , x_2 , . . . , x_k $ and their complex conjugates $ x_1^* , x_2^* , . . . , x_k^* $ and 
\item (Orthonormality:)
\begin{eqnarray*}
  G^H G=({\vert x_1 \vert}^2 + ...+{\vert x_k \vert}^2)I_n
 \end{eqnarray*}
holds for any complex values for $ x_i , i = 1, 2, . . . , k , $ where $I_n$ is the $ n \times n $ identity matrix and $\vert x \vert$ stands for the magnitude of a complex number $x$.
\end{itemize}
The matrix $ G $ is also said to be a $ [p,n,k] $ complex orthogonal design (COD). If the non-zero entries are the indeterminates $ \pm x_1,\pm x_2,...,\pm x_k $ or their conjugates $ \pm x_1^*,\pm x_2^*,...,\pm x_k^* $ only (not arbitrary complex linear combinations), then $G$ is said to be a {\textit {restricted complex orthogonal design}} (RCOD).
\end{definition}

Notice that the linear STBC $S$ in \eqref{ldceqn} is a complex design which may or may not be a COD. A set of necessary and sufficient conditions for $S$ to be a COD is  \cite{TJC,TiH}
\begin{eqnarray}
\label{A1}
A_{iI}^{H}A_{iI}  =A_{iQ}^{H}A_{iQ}  =I_n,~~~   i=1,2,\cdots,K;
\end{eqnarray}
\begin{subequations}
\label{A2}
\begin{align}
A_{iI}^{H}A_{jQ} + A_{jQ}^{H}A_{iI} & =0 \label{A21} \\
A_{iI}^{H}A_{jI} + A_{jI}^{H}A_{iI} & =0 \label{A22} \\
A_{iQ}^{H}A_{jQ} + A_{jQ}^{H}A_{iQ} & =0 \label{A23}
\end{align}
\end{subequations}
for $1 \leq i \neq j \leq K $, and
\begin{eqnarray}
\label{A3}
A_{iI}^{H}A_{iQ}+ A_{iQ}^{H}A_{iI}  =0,  ~~~   i=1,2,\cdots,K.
\end{eqnarray}
\noindent
STBCs obtained from CODs \cite{TJC},\cite{TiH} are SSD like the well known Alamouti code \cite{Ala}, and  satisfy all the three equations \eqref{A1}, \eqref{A2} and \eqref{A3}. For $S$ to be  SSD  it is not necessary that it satisfies \eqref{A1} and \eqref{A3}; i.e., it is sufficient that it satisfies only  \eqref{A2} - this result was first shown in \cite{KhR1,KhR2,Kha}. Since then, different classes of SSD codes have been studied by several authors, \cite{KhR1,KhR2,Kha,KhR3,KhR4,KRL1,KRL2,WWX1,WWX2,YGT1,YGT2,YGT3,YGT4,YGT5} that are not CODs. To systematically study various possible classes of SSD codes we introduce the following classification:
\begin{enumerate}
\item Linear STBCs satisfying \eqref{A1}, \eqref{A2} and \eqref{A3} are Complex Orthogonal Designs (CODs).
\item Linear STBCs satisfying \eqref{A2} are called {\it SSD codes}; these may or may not satisfy \eqref{A1} and \eqref{A3}.
\item Linear STBCs satisfying \eqref{A1} and \eqref{A2} and not satisfying \eqref{A3} are called Unitary-Weight SSD codes ({\it UW-SSD codes}).
\item Linear STBCs satisfying \eqref{A2} and not satisfying \eqref{A1} are called Non-Unitary weight SSDs ({\it NU-SSD codes}); these may or may not satisfy \eqref{A3} 
\begin{itemize}
\item NU-SSD codes that do not satisfy \eqref{A3} are called Proper-SSD codes ({\it PSSD codes}).
\item NU-SSD codes that satisfy \eqref{A3} are called Non-unitary CODs ({\it NU-CODs}) since these differ from the well known CODs only by the feature that the weight matrices are not unitary.
\end{itemize}
\end{enumerate}
Fig. \ref{fig1p2} shows all these classes of codes along with some more classes of codes discussed in the sequel. The codes discussed in  \cite{KhR1, KhR2, Kha,KhR3,KhR4,KRL1, KRL2}, calling them {\it Coordinate Interleaved Orthogonal Designs, (CIODs)} constitute an example class of NU-SSD codes. The classes of codes  studied in \cite{YGT1}-\cite{YGT5} are UW-SSD codes. The classes of codes studied in \cite{WWX1,WWX2} called Minimum Decoding Complexity codes from Quasi-Orthogonal Designs (MDC-QOD codes) include UW-SSD and NU-SSD codes including PSSD codes. 


The notion of SSD codes have been extended to coding for MIMO-OFDM systems in \cite{DSKR,GoR} and recently, low-decoding complexity codes called 2-group and 4-group decodable codes \cite{KiR1,KiR2,KiR3,RaR} and SSD codes \cite{YiK} in particular are studied for use in cooperative networks as distributed STBCs.

In this paper, we construct several classes of SSD codes from representations (both irreducible and reducible) of Clifford algebras and study their full-diversity, coding gain and rate properties. Specifically, the contributions of this paper are
\begin{itemize}
\item derivation of an upper bound on the rate of any UW-SSD code.
                                                                                
\item construction of a class of UW-SSD codes from matrix representations of real Clifford algebras with rate equal to the upper bound.
                                                                                
\item identification of signal sets which will give full-diversity for the class of codes constructed using Clifford algebras.
                                                                                
\item identification of code and signal set parameters that influence the coding gain of the codes constructed.

\item By using a pair of linear transformations on the weight matrices of any UW-SSD code we obtain a class of NU codes called Transformed Non-Unitary codes (TNU codes). We show that these are indeed SSD codes \eqref{nussdthm} and call them TNU-SSD codes.
                                                                                
\item We identify a set of necessary and sufficient conditions for a TNU-SSD code to be a PSSD code (Theorem \ref{nupssdthm}).
                                                                                
\item We identify the class of linear transformations using which the resulting TNU-SSD codes coincide with CIODs of \cite{KhR1}-\cite{KRL2} (Section \ref{sec4}).

\item Every Clifford algebra is with respect to an underlying quadratic space. Generally one uses Clifford algebras that are with respect to the Euclidean quadratic space (see Appendix -\ref{Append1}). In Appendix \ref{Append2} it is shown that SSD codes can be constructed using  Clifford algebras based on Minkowski spaces. Also, it is proved that if normalized these codes coincide with CUW-SSD codes. 

\end{itemize}

\begin{figure}
\centering
\includegraphics[width=6.0cm,height=6.0cm]{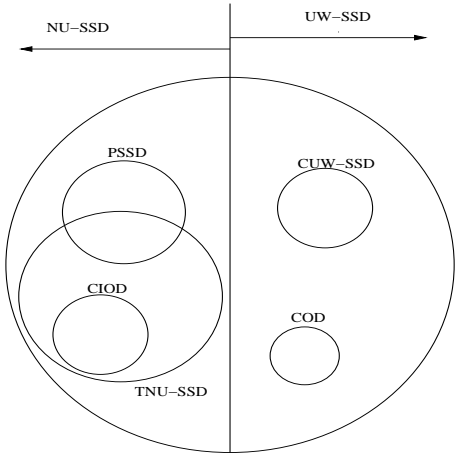}
\caption{Interrelationship of TNU-SSD codes with other classes}
\label{fig1p2}
\end{figure}
                                                                                
Notice that Fig. \ref{fig1p2} shows the class of TNU-SSD codes in relation to CODs, CIODs and the UW-SSD codes of this paper denoted as CUW-SSD codes.
                                                                                
The remaining part of the paper is organized as follows: In Section \ref{sec2} we present a set of sufficient conditions for a linear STBC to be SSD and which are also constructable using representations of real Clifford algebras. The codes satisfying this set of conditions are called Clifford Unitary Weight SSD (CUW-SSD) codes. Section \ref{sec3} presents the construction of CUW-SSD codes and illustrates with examples. Also, a known class of SSD codes is shown to be CUW-SSD codes. It is shown in Section \ref{sec4} that the achievable upper bound on the rate of a UW-SSD code is $\frac{a}{2^{a-1}}.$ Diversity gain and coding gain of CUW-SSD codes are studied in Section \ref{sec5} and compared with those of MDC-SSD codes. Few simulation results are also presented. In Section \ref{sec6}, NUW-SSD codes are discussed- the classes of TNU-SSD codes, PSSD codes and NU-CODs are studied. Section \ref{sec7} shows that the class of CIODs is obtainable as a special case of TNU-SSD codes. Concluding remarks and several directions for further research constitute Section \ref{sec8}. Appendix \ref{Append1} gives a self-contained introduction to quadratic forms, quadratic spaces and different kinds of Clifford algebras. Appendix \ref{Append2} presents SSD codes constructed based on Minkowski Clifford algebras. It is shown that these codes become CUW-SSD codes when normalized. 

\section{Clifford UW-SSD codes}
\label{sec2}
It is well known that SSD codes are closely related to Hurwitz-Radon family of matrices and also Clifford algebras \cite{GaS},\cite{TiH}. In the following sections we obtain a large class of UW-SSD codes using representations of different Clifford Algebras.  In this section, we introduce an important notion called {\it normalizing a linear STBC} which not only simplifies the analysis of the codes but also provides deep insight various aspects of different classes of codes discussed in this paper.   

Towards this end, let 
\begin{equation}
\label{ldcuwssd}
S_U = \sum_{i=1}^{K} (x_{iI}A_{iI}^\prime+x_{iQ}A_{iQ}^\prime)
\end{equation}
be a Unitary Weight code (UW code), i.e., for which all the weight matrices are unitary. We normalize the weight matrices of the code as 
\begin{eqnarray}
\label{normalizessd}
\begin{array}{rl}
A_{iI} & = A_{1I}^{\prime H} A_{iI}^\prime \\
A_{iQ} & = A_{1I}^{\prime H} A_{iQ}^\prime.
\end{array}
\end{eqnarray}
to get {\it the normalized version} of \eqref{ldcuwssd} to be
\begin{equation}
\label{ldcuwnssd}
S_{N} = x_{1I}I_n+x_{1Q}A_{1Q}+\sum_{i=2}^{K} (x_{iI}A_{iI}+x_{iQ}A_{iQ}).
\end{equation}
We call the code $S_{N}$ to be the normalized code of $S_U$. 
\begin{theorem}
\label{ssdnochange}
The code $S_U$ is SSD iff  $S_N$ is SSD. In other words normalization does not affect the SSD property.
\end{theorem}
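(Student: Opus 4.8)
The plan is to reduce the statement to the algebraic characterization of SSD codes recalled earlier: a linear STBC is SSD exactly when its weight matrices satisfy the conditions \eqref{A2}. Since $S_N$ differs from $S_U$ only through the replacement \eqref{normalizessd} of the weight matrices, it then suffices to prove that the conditions \eqref{A2} hold for the weight matrices $\{A_{iI}^\prime, A_{iQ}^\prime\}$ of $S_U$ if and only if they hold for the weight matrices $\{A_{iI}, A_{iQ}\}$ of $S_N$.

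First I would observe that the normalization \eqref{normalizessd} amounts to left-multiplying every weight matrix by the single fixed unitary matrix $U^H$, where $U = A_{1I}^\prime$. Indeed $A_{iI} = U^H A_{iI}^\prime$ and $A_{iQ} = U^H A_{iQ}^\prime$ for every $i$; in particular $A_{1I} = U^H U = I_n$, which is exactly the leading term $x_{1I}I_n$ of $S_N$. Equivalently, $S_N = U^H S_U$.

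The key step is that the Hermitian bilinear expressions occurring in \eqref{A21}, \eqref{A22} and \eqref{A23} are invariant under this left-multiplication. For any two weight matrices $B$ and $C$, the unitarity $UU^H = I_n$ gives
\begin{equation*}
(U^H B)^H (U^H C) + (U^H C)^H (U^H B) = B^H U U^H C + C^H U U^H B = B^H C + C^H B.
\end{equation*}
Substituting $B,C$ by the appropriate pairs of original weight matrices shows that, for every index pair $i \neq j$, the left-hand side of each of \eqref{A21}, \eqref{A22}, \eqref{A23} written for $\{A_{iI}, A_{iQ}\}$ equals the corresponding expression written for $\{A_{iI}^\prime, A_{iQ}^\prime\}$. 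Hence the full set \eqref{A2} is satisfied by the normalized matrices precisely when it is satisfied by the original ones, giving the asserted equivalence.

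The computation itself is immediate once normalization is recognized as multiplication by a fixed unitary, so the main point requiring attention is of a bookkeeping nature: one must check that every weight matrix of $S_N$---including the normalized first pair, which yields $I_n$ and $A_{1Q}$---is genuinely of the form $U^H(\cdot)$, so that the invariance identity applies uniformly across all pairs $i \neq j$ and no special case escapes. Granting this, no further obstacle arises, since the entire content of the theorem is carried by the cancellation $UU^H = I_n$.
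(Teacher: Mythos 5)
Your proposal is correct and takes essentially the same route as the paper: the paper also verifies each equation of \eqref{A2} by substituting \eqref{normalizessd} and cancelling $A_{1I}^\prime A_{1I}^{\prime H}=I_n$, which is exactly your identity $(U^HB)^H(U^HC)+(U^HC)^H(U^HB)=B^HC+C^HB$ applied to the three pairs. Your only (harmless) refinement is packaging the three separate verifications into one uniform invariance statement.
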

\begin{proof} For $1 \leq i \neq j \leq K$, all the three equations of \eqref{A2} are satisfied by the weight matrices of $S_U$ iff they are satisfied by the weight matrices of $S_N$ as shown below:

\begin{eqnarray*}
\begin{array}{l}
(i) A_{iI}^{H}A_{jQ} + A_{jQ}^{H}A_{iI}  =0 \\
\Leftrightarrow A_{iI}^{\prime H}A_{1I}^\prime A_{1I}^{\prime H}A_{jQ}^\prime + A_{jQ}^{\prime H} A_{1I}^\prime A_{1I}^{\prime H}A_{iI}  =0 \\
\Leftrightarrow A_{iI}^{\prime H}A_{jQ}^\prime + A_{jQ}^{\prime H}  A_{iI}^\prime  =0 \\
\end{array} 
\end{eqnarray*}
\begin{eqnarray*}
\begin{array}{ll}
(ii) A_{iI}^{H}A_{jI} + A_{jI}^{H}A_{iI}  =0 \\
\Leftrightarrow A_{iI}^{\prime H} A_{1I}^\prime A_{1I}^{\prime H}A_{jI}^\prime + A_{jI}^{\prime H} A_{1I}^\prime A_{1I}^{\prime H}A_{iI}^\prime  =0 \\
\Leftrightarrow A_{iI}^{\prime H}A_{jI}^\prime + A_{jI}^{\prime H} A_{iI}^\prime  =0 \\
\end{array}
\end{eqnarray*}
\begin{eqnarray*}
\begin{array}{ll}
(iii) A_{iQ}^{H}A_{jQ} + A_{jQ}^{H}A_{iQ}  =0 \\
\Leftrightarrow A_{iQ}^{\prime H} A_{1I}^\prime A_{1I}^{\prime H}A_{jQ}^\prime + A_{jQ}^{\prime H} A{1I}^{\prime} A_{1I}^{\prime H}A_{iQ}^\prime  =0 \\
\Leftrightarrow A_{iQ}^{\prime H}A_{jQ}^\prime + A_{jQ}^{\prime H} A_{iQ}^\prime  =0 \\
\end{array}
\end{eqnarray*}
\end{proof}
The following theorem shows  that this normalization does not alter the coding gain also. 
\begin{theorem}
\label{gainnochange}
$S_U$ and $S_{N}$ have the same coding gain.
\end{theorem}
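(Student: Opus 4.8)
The plan is to show that normalization amounts to left-multiplying every codeword by a single fixed unitary matrix, and then to invoke the fact that the coding gain of a space-time code depends only on the spectra of the codeword-difference Gram matrices, which are invariant under such a multiplication.

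First I would verify that the normalized codeword is obtained from the original by left multiplication. Substituting the definitions \eqref{normalizessd} into \eqref{ldcuwnssd} and factoring out $A_{1I}^{\prime H}$ gives, for each assignment of the variables,
\begin{equation*}
S_N = \sum_{i=1}^K (x_{iI} A_{1I}^{\prime H} A_{iI}^\prime + x_{iQ} A_{1I}^{\prime H} A_{iQ}^\prime) = A_{1I}^{\prime H}\, S_U.
\end{equation*}
Writing $U := A_{1I}^{\prime H}$, which is unitary because $A_{1I}^\prime$ is a unitary weight matrix of the UW code, we obtain the clean relation $S_N = U S_U$. Since the variables $x_i$ range over the same signal sets $\mathcal{A}_i$ in both codes, this is a bijection between the two codebooks that preserves the underlying information symbols.

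Next I would recall the coding-gain criterion: the pairwise error probability, and hence the coding (and diversity) gain, is governed by the nonzero eigenvalues of the difference Gram matrix $(S-\hat S)^H(S-\hat S)$ taken over all pairs of distinct codewords $S \neq \hat S$; for a full-diversity code the coding gain is $\min_{S \neq \hat S} \det\big((S-\hat S)^H(S-\hat S)\big)^{1/n}$. The key computation is then immediate: for a pair of codewords of $S_N$ corresponding to the pair $S_U,\hat S_U$ of $S_U$, linearity gives $S_N - \hat S_N = U(S_U - \hat S_U)$, so that
\begin{equation*}
(S_N - \hat S_N)^H (S_N - \hat S_N) = (S_U - \hat S_U)^H U^H U (S_U - \hat S_U) = (S_U - \hat S_U)^H (S_U - \hat S_U),
\end{equation*}
using $U^H U = I_n$. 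Thus the difference Gram matrix for each pair is literally unchanged, so its eigenvalues, rank and determinant are identical for the two codes.

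Finally, since the bijection $S \mapsto U S$ matches pairs of codewords in $S_U$ with pairs in $S_N$ having identical difference Gram matrices, the minimum over all distinct pairs of the relevant eigenvalue product is the same for both codes; hence $S_U$ and $S_N$ have the same coding gain (and the same diversity order). I do not expect a genuine obstacle here: the only point requiring care is to note that the signal sets, and therefore the family of admissible codeword differences, are untouched by normalization, so that the minimizations defining the coding gain of the two codes are taken over matching families of difference matrices.
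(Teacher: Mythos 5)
Your proof is correct, and it takes a genuinely different route from the paper's. The paper never isolates the identity $S_N = A_{1I}^{\prime H}S_U$; instead it expands both diversity products into the form \eqref{dpguwssd}--\eqref{dpnuwssd}, i.e.
\begin{equation*}
\det\Bigl[\textstyle\sum_{i}\bigl((\triangle x_{iI}^2+\triangle x_{iQ}^2)I_n+\triangle x_{iI}\triangle x_{iQ}\,(A_{iI}^{H}A_{iQ}+A_{iQ}^{H}A_{iI})\bigr)\Bigr],
\end{equation*}
and then checks that the only normalization-sensitive ingredient, $A_{iI}^{H}A_{iQ}+A_{iQ}^{H}A_{iI}$, is unchanged because $A_{1I}^{\prime}A_{1I}^{\prime H}=I_n$. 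Note that this expansion is valid only when the cross terms for $i\neq j$ cancel, so the paper's argument tacitly uses the SSD conditions \eqref{A2} in addition to unitarity of the weight matrices; your argument needs neither. By recognizing normalization as left multiplication of every codeword by the single fixed unitary matrix $U=A_{1I}^{\prime H}$, you obtain that each difference Gram matrix $(S-\widetilde{S})^{H}(S-\widetilde{S})$ is literally unchanged, so any functional of these matrices (determinant, eigenvalues, rank, hence coding gain and diversity order) is preserved; this is shorter and strictly more general, applying to any UW linear dispersion code whether or not it is SSD, which matches the theorem's hypotheses exactly. What the paper's term-by-term expansion buys in exchange is the explicit closed form \eqref{dpnuwssd} for the diversity product, which is not wasted work: it is reused in Section \ref{sec5} (equation \eqref{div_prod1} onward) to evaluate the coding gain of CUW-SSD codes, to identify the discriminant $A_{1Q}$ as the quantity controlling it, and to derive the full-diversity criterion of Theorem \ref{fullrankcond}. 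Your one point requiring care --- that the signal sets, and hence the family of admissible difference matrices, are untouched by normalization --- is indeed the right thing to check, and you handled it correctly via the codeword bijection $S\mapsto US$.
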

\begin{proof}
Let $\mathbf{DP}(S_U)$ and $\mathbf{DP}(S_{N})$ respectively denote the diversity product of $S_U$ and $S_{N}$. Then  
{\small
\begin{equation}
\label{div_prod}
\mathbf{DP}(S_U)\triangleq\frac{1}{2\sqrt{n}} \min_{S_U \neq \widetilde{S_U}} \Bigg\lvert  \mathrm{det} \Bigg[{\left(S_U-\widetilde{S_U}\right)}^{H} \left(S_U-\widetilde{S_U}\right) \Bigg] \Bigg\rvert   \end{equation}
}
where
\begin{displaymath}
S_U-{\widetilde{S_U}}= \sum_{i=1}^{k} \left(     \triangle x_{iI}A_{iI}^\prime+\triangle x_{iQ}A_{iQ}^\prime\right) \end {displaymath}
Inserting which in  \eqref{div_prod} we get, if $\triangle \boldsymbol{x}= \left( \triangle x_{1},\triangle x_{2}, \cdots \triangle x_{k} \right)$
                                                                                
\begin{eqnarray} 
\label{dpguwssd}
\mathbf{DP}(S_U)=\frac{1}{2\sqrt{n}} \min_{\triangle\mathbf{x}\neq \mathbf{0}} \Bigg\lvert  \mathrm{det} \Bigg[\sum_{i=1}^{k} \Big( \left( {\triangle x_{iI}}^{2}+{\triangle x_{iQ}}^{2} \right)I_{n}  \nonumber \\
{}+\triangle x_{iI}\triangle x_{iQ}\left(A_{iI}^{\prime H}A_{iQ}^\prime+A_{iQ}^{\prime H}A_{iI}^\prime \right). \Big) \Bigg]  \Bigg \rvert  \end{eqnarray}
\noindent
Similarly, for the normalized code, we have                                     
\begin{eqnarray}
\label{dpnuwssd}
\mathbf{DP}(S_{N})  =  \frac{1}{2\sqrt{n}} \min_{\triangle\mathbf{x}\neq \mathbf{0}}  \Bigg\lvert  \mathrm{det} \Bigg[\sum_{i=1}^{k} \Big( \left( {\triangle x_{iI}}^{2}+{\triangle x_{iQ}}^{2} \right)I_{n} \nonumber \\
 {}+\triangle x_{iI}\triangle x_{iQ}\left(A_{iI}^{H}A_{iQ}+A_{iQ}^{H}A_{iI}\right) \Big) \Bigg] \Bigg \rvert  
\end{eqnarray}
\noindent                                                                       Now from the normalization process \eqref{normalizessd} we have, for all $i$, 
\begin{eqnarray*}
\begin{array}{rl}
 A_{iI}^{H}A_{iQ}+A_{iQ}^{H}A_{iI} & = A_{iI}^{\prime H}A_{1I}^\prime A_{1I}^{\prime H}A_{iQ}^\prime+A_{iQ}^{\prime H}A_{1I}^\prime A_{1I}^{\prime H}A_{iI}^\prime \\
        & = A_{iI}^{\prime H}A_{iQ}^\prime+A_{iQ}^{\prime H}A_{iI}^\prime 
\end{array}
\end{eqnarray*}
since $A_{1I}^\prime A_{1I}^{\prime H}=I_{n}$  which implies that the expressions in \eqref{dpguwssd} and \eqref{dpnuwssd} are identical, i.e.,  $\mathbf{DP}(S_U)=\mathbf{DP}(S_{N})$.
\end{proof}

The following theorem identifies a set of sufficient conditions for a UW code to be UW-SSD. In the sequel, we will provide several constructions of  UW-SSD codes using representations of real Clifford algebras satisfying these sufficient conditions.
\begin{theorem}
\label{thm2}
An $n \times n$ UW code described by \eqref{ldcuwssd} and its normalized version given by \eqref{ldcuwnssd} are both UW-SSD code if the weight matrices of the normalized code satisfy the following conditions:
\begin{eqnarray}
\label{suffcondssd}
\begin{array}{rl}
 A_{iI}^{H}  & =-A_{iI} ~~~  2 \leq i \leq K \\
 A_{iI}A_{jI}  & = - A_{jI}A_{iI}, \quad   2\leq i\neq j\leq K \\
A_{1Q}^{H}   &=A_{1Q} \\
 A_{iQ}   & =A_{1Q}A_{iI}, ~~~~  2 \leq i\leq K \\
A_{1Q}A_{jI} & =A_{jI}A_{1Q} ~~~~ 1 \leq j \leq K \\
\end{array}
\end{eqnarray}
\end{theorem}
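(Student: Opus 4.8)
The plan is to reduce everything to verifying the three SSD conditions \eqref{A2} for the \emph{normalized} code $S_N$, and then to check that \eqref{A3} genuinely fails so the code qualifies as UW-SSD rather than as a COD. Two ingredients come for free. First, condition \eqref{A1} (unitarity of every weight matrix) holds for $S_U$ by hypothesis, and for $S_N$ because each normalized matrix $A_{iI}=A_{1I}^{\prime H}A_{iI}^\prime$ and $A_{iQ}=A_{1I}^{\prime H}A_{iQ}^\prime$ is a product of unitary matrices (and $A_{1I}=I_n$); hence $S_N$ is again a UW code. Second, by Theorem \ref{ssdnochange} the SSD property is invariant under normalization, so it suffices to establish \eqref{A2} for $S_N$ alone and then transfer the conclusion back to $S_U$.

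Before the case analysis I would record one consequence of \eqref{suffcondssd} that streamlines all three verifications: every $A_{iQ}$ with $i\ge 2$ is skew-Hermitian. Indeed $A_{iQ}^{H}=(A_{1Q}A_{iI})^{H}=A_{iI}^{H}A_{1Q}^{H}=(-A_{iI})A_{1Q}=-A_{1Q}A_{iI}=-A_{iQ}$, using $A_{iI}^{H}=-A_{iI}$, $A_{1Q}^{H}=A_{1Q}$, and that $A_{1Q}$ commutes with every $A_{iI}$. So among the weight matrices of $S_N$, the matrices $A_{iI}$ and $A_{iQ}$ for $i\ge 2$ are skew-Hermitian, while $A_{1I}=I_n$ and $A_{1Q}$ are Hermitian; and $A_{1Q}$ is central with respect to all the $A_{jI}$.

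Next I would verify \eqref{A21}, \eqref{A22}, \eqref{A23} by splitting the index pair $(i,j)$ according to whether either index equals $1$. The cases with both indices $\ge 2$ all collapse to the anticommutation relation $A_{iI}A_{jI}+A_{jI}A_{iI}=0$ after pulling the commuting factor $A_{1Q}$ (or $A_{1Q}^{2}$) to the front; for instance \eqref{A23} with $i,j\ge 2$ becomes $-A_{1Q}^{2}(A_{iI}A_{jI}+A_{jI}A_{iI})=0$. The cases with one index equal to $1$ reduce either to skew-Hermiticity of a single matrix (since $A_{1I}=I_n$, \eqref{A22} and \eqref{A21} become $A_{jI}+A_{jI}^{H}=0$ and $A_{jQ}+A_{jQ}^{H}=0$) or to the commuting relation $A_{1Q}A_{jI}=A_{jI}A_{1Q}$. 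None of these is individually hard; the only care required is tracking signs and repeatedly invoking the centrality of $A_{1Q}$, which is where the (modest) bookkeeping obstacle lives.

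Finally, to conclude UW-SSD rather than COD I would exhibit the failure of \eqref{A3}. Taking $i=1$ and using $A_{1I}=I_n$ together with $A_{1Q}^{H}=A_{1Q}$ gives $A_{1I}^{H}A_{1Q}+A_{1Q}^{H}A_{1I}=2A_{1Q}$, which is nonzero since $A_{1Q}$ is unitary and hence $A_{1Q}\neq 0$. Thus $S_N$, and therefore $S_U$ by Theorem \ref{ssdnochange}, satisfies \eqref{A1} and \eqref{A2} but not \eqref{A3}, which is precisely the definition of a UW-SSD code.
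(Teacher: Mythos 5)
Your proof is correct, and its core is the same as the paper's: direct verification of \eqref{A2} for the normalized code using the hypotheses \eqref{suffcondssd}, followed by transfer to the unnormalized code via Theorem \ref{ssdnochange}. The differences are all in completeness, and in your favor. The paper's verification implicitly assumes both indices are at least $2$ (it substitutes $A_{jQ}=A_{1Q}A_{jI}$ and uses skew-Hermiticity of $A_{iI}$, neither of which applies to index $1$), whereas you split off the $i=1$ or $j=1$ cases and dispose of them via the skew-Hermiticity of $A_{jQ}$ (your preliminary observation) and the fact that $A_{1Q}$ commutes with every $A_{jI}$. More substantively, the paper's own classification defines a UW-SSD code as one satisfying \eqref{A1} and \eqref{A2} but \emph{not} \eqref{A3}; the paper's proof checks only \eqref{A2}, while you also note \eqref{A1} for $S_N$ and exhibit the failure of \eqref{A3} at $i=1$ (namely $A_{1I}^{H}A_{1Q}+A_{1Q}^{H}A_{1I}=2A_{1Q}\neq 0$ since $A_{1Q}$ is unitary), so your argument establishes the theorem as literally stated. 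One small imprecision: you invoke Theorem \ref{ssdnochange} to transfer the failure of \eqref{A3} back to $S_U$, but that theorem concerns only the equations \eqref{A2}. The fix is immediate and already appears in the paper: as in the proof of Theorem \ref{gainnochange}, $A_{iI}^{H}A_{iQ}+A_{iQ}^{H}A_{iI}=A_{iI}^{\prime H}A_{iQ}^{\prime}+A_{iQ}^{\prime H}A_{iI}^{\prime}$ because $A_{1I}^{\prime}A_{1I}^{\prime H}=I_n$, so the \eqref{A3} quantities of $S_U$ and $S_N$ coincide and the failure transfers verbatim.
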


\begin{proof}
The proof is by direct verification of \eqref{A2} for the weight matrices. \\
{\bf Proof for the normalized code:}
\begin{eqnarray*}
\begin{array}{rl}
 A_{iI}^{H}A_{jQ}+A_{jQ}^{H}A_{iI} & = A_{iI}^{H}A_{1Q}A_{jI}+A_{jI}^{H}A_{1Q}^{H}A_{iI}  \\
&  = -\left[A_{iI}A_{1Q}A_{jI}+A_{jI}A_{1Q}A_{iI} \right]\\
&  = -(A_{iI}A_{jI}+A_{jI}A_{iI})A_{1Q}\\
&  = 0(A_{1Q})=0.
\end{array}
\end{eqnarray*}
This shows that \eqref{A21} is satisfied for the normalized code. Next, we show that \eqref{A22} is also satisfied:
\begin{eqnarray*}
\begin{array}{rl}
A_{iI}^HA_{jI}+A_{jI}^HA_{iI} & = -(A_{iI}A_{jI}+A_{jI}A_{iI})=0.\\
\end{array}
\end{eqnarray*}
To prove \eqref{A23}:
\begin{eqnarray*}
\begin{array}{rl}
A_{iQ}^HA_{jQ}+A_{jQ}^HA_{iQ} & = A_{iI}^HA_{1Q}^HA_{1Q}A_{jI}+A_{jI}^HA_{1Q}^HA_{1Q}A_{iI}\\
& =A_{iI}^HA_{jI}+A_{jI}^HA_{iI} \\
& = -(A_{iI}A_{jI}+A_{jI}A_{iI}) =0.
\end{array}
\end{eqnarray*}
This shows that the normalized code is UW-SSD. The proof for the unnormalized code follows from Theorem \ref{ssdnochange}. 

\end{proof}
\begin{definition}
A UW-SSD code satisfying the conditions of \eqref{suffcondssd} is defined to be a Clifford Unitary Weight SSD (CUW-SSD) codes.
\end{definition}

The name in the above definition is due to the fact that such codes are constructable using matrix representations of real Clifford algebras which is shown in the following section.
\section{Construction of CUW-SSD codes}
\label{sec3}
Our construction of new classes of both UW-SSD codes and Non-Unitary SSD codes will  make use of the matrix representations (both reducible and irreducible) of different real Clifford algebras. Moreover, in Section \ref{sec5} an upper bound on the rate of CUW-SSD codes is obtained making extensive use of properties of representations real Clifford algebras.  Hence, in Appendix \ref{Append1} we give a brief and self-contained introduction to quadratic forms, quadratic spaces and the associated Clifford algebras. It is assumed that the reader is familiar with basic ideas concerning algebras \cite{Jac}. Every Clifford algebra is based on a quadratic space. Generally Clifford algebras based on Euclidean quadratic spaces are used in the STBC literature as well as throughout this paper except in Appendix \ref{Append2} where using Clifford algebras based on Minkowski quadratic spaces we construct UW-SSD codes and call them MCUW-SSD codes. It is also shown that when normalized these codes coincide with CUW-SSD codes.                                                                       
\subsection{CUW-SSD codes from Euclidean Clifford algebras}
In this subsection we obtain CUW-SSD codes from Euclidean Clifford algebras (see Appendix-I) and in Appendix-II we construct UW-SSD codes from Minkowski Clifford algebras. 
\begin{definition}
The Euclidean Clifford algebra, denoted by $CA_L$, which was described in Appendix-I in terms of an appropriate quadratic form can also be defined as the algebra over the real field $\mathbb R$ generated by $L$ objects $\gamma_k, ~~~k=1,2,\cdots,L$ which are anti-commuting
$$\gamma_k\gamma_j = -\gamma_j \gamma_k, ~~~~ \forall k\neq j$$
and squaring to $-1$
$$\gamma_k^2 = -1 ~~~~ \forall k=1,2,\cdots,L.$$
\end{definition} 
The basis of $CA_L$ is
{\small
$$B_L=\{1\} \bigcup \{\gamma_k \}_{k=1}^{L} \bigcup_{m=2}^L\{\prod_{i=1}^m \gamma_{k_i}| i \leq k_i < k_{i+1} \leq L  \}.$$
}
Note that the number of basis elements is the number of non-ordered combinations of $L$ objects which is $2^L$.
                                                                                
A matrix representation of an algebra is completely specified by the representation of its basis, which in turn is completely specified by a representation of its generators. For a Clifford algebra, we are thus interested in matrix representation of the generators $\gamma_k$'s. In $N$-dimensional representation 1 is represented by $I_N$, the $N\times N$ identity matrix and the generators are anti-commuting matrices that square to $-I_N$. In the following sections, we will use the fact that a double cover of the basis of a Clifford algebra
\begin{equation}
\label{doublecover}
G_L = B_L \bigcup \{ -b | b \in B_L\}
\end{equation}
is a finite group \cite{TiH}.
                                                                                
\begin{lemma}
\label{lemma2}
We can have $2a-1$ Hurwitz-Radon matrices in $N=2^a$ dimension along with a non-identity Hermitian matrix which commutes with all these $2a-1$ matrices.
\end{lemma}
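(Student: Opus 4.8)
The plan is to realise the desired matrices inside a \emph{maximal} pairwise anti-commuting family in dimension $N=2^a$ and then manufacture the commuting Hermitian matrix out of the two ``leftover'' members of that family. Recall that a Hurwitz--Radon matrix here is a unitary, skew-Hermitian matrix squaring to $-I_N$, i.e. (after rescaling by $\mathbf{j}$) a representation of a Clifford generator $\gamma_k$. The key observation is that $N=2^a$ can accommodate not merely $2a-1$ but $2a+1$ pairwise anti-commuting such matrices, so there is room to spare.

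First I would exhibit, by iterated Kronecker products of the $2\times 2$ Pauli matrices $\sigma_x,\sigma_y,\sigma_z$, a family of $2a+1$ pairwise anti-commuting \emph{Hermitian} unitary matrices $\Gamma_1,\dots,\Gamma_{2a+1}$ in dimension $N=2^a$, each squaring to $I_N$; explicitly
\begin{eqnarray*}
\Gamma_{2k-1} &=& \sigma_z^{\otimes(k-1)}\otimes\sigma_x\otimes I_2^{\otimes(a-k)},\\
\Gamma_{2k}   &=& \sigma_z^{\otimes(k-1)}\otimes\sigma_y\otimes I_2^{\otimes(a-k)},\qquad 1\le k\le a,\\
\Gamma_{2a+1} &=& \sigma_z^{\otimes a}.
\end{eqnarray*}
The relations $\Gamma_p\Gamma_q=-\Gamma_q\Gamma_p$ ($p\neq q$) and $\Gamma_p^2=I_N$ follow slot-by-slot from the Pauli identities; this is routine bookkeeping I would not belabour. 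Setting $E_i=\mathbf{j}\,\Gamma_i$ for $1\le i\le 2a-1$ turns each Hermitian $\Gamma_i$ with $\Gamma_i^2=I_N$ into a skew-Hermitian unitary matrix with $E_i^2=-I_N$, and the common scalar $\mathbf{j}$ preserves anti-commutation, so $E_1,\dots,E_{2a-1}$ are $2a-1$ Hurwitz--Radon matrices.

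For the commuting Hermitian matrix I would use the two unused members $\Gamma_{2a},\Gamma_{2a+1}$. Each of them anti-commutes with every $\Gamma_i$, $1\le i\le 2a-1$, so the product $\Gamma_{2a}\Gamma_{2a+1}$ commutes with each such $\Gamma_i$ (the two sign changes cancel), hence commutes with every $E_i$. Since $\Gamma_{2a},\Gamma_{2a+1}$ are anti-commuting Hermitians, $(\Gamma_{2a}\Gamma_{2a+1})^H=-\Gamma_{2a}\Gamma_{2a+1}$, so $P:=\mathbf{j}\,\Gamma_{2a}\Gamma_{2a+1}$ is Hermitian and unitary and still commutes with all the $E_i$. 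A direct Kronecker computation using $\sigma_y\sigma_z=\mathbf{j}\,\sigma_x$ collapses this to $P=-\,I_2^{\otimes(a-1)}\otimes\sigma_x$, which is plainly not a scalar multiple of $I_N$. Thus $P$ is the required non-identity Hermitian matrix commuting with the $2a-1$ Hurwitz--Radon matrices.

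The one point deserving care --- and the only genuine obstacle --- is to confirm that the commuting Hermitian matrix is \emph{not} a scalar multiple of the identity. This is precisely why the argument must be carried out in dimension $N=2^a$ rather than in the $2^{a-1}$-dimensional irreducible representation of $CA_{2a-1}$: in the latter the volume element $E_1\cdots E_{2a-1}$ (which is proportional to $P$) is central and acts as a scalar, so $P$ would collapse to $\pm I_N$ or $\pm\mathbf{j}I_N$ and the claim would fail. Working in $N=2^a$ makes the representation of the subalgebra generated by $E_1,\dots,E_{2a-1}$ reducible --- it splits into the two inequivalent $2^{a-1}$-dimensional irreducibles, on which the central volume element takes opposite scalar values --- so $P$ is non-scalar overall, a fact the explicit form $P=-\,I_2^{\otimes(a-1)}\otimes\sigma_x$ (eigenvalues $\pm1$, each of multiplicity $2^{a-1}$) makes transparent. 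Everything else reduces to the Pauli/Clifford anti-commutation computations sketched above.
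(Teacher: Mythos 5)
Your proof is correct and takes essentially the same route as the paper: both realize $2a+1$ pairwise anti-commuting unitary matrices in dimension $2^a$ via Kronecker products of Pauli matrices, keep $2a-1$ of them as the Hurwitz--Radon family, and take $\mathbf{j}$ times the product of the two leftover generators as the Hermitian matrix commuting with the rest. The only cosmetic differences are that you work with Hermitian involutions $\Gamma_i$ rescaled by $\mathbf{j}$ rather than directly with skew-Hermitian generators, and that you verify non-scalarity explicitly, which the paper leaves implicit in its closed form $R^\prime(\gamma_1)=j\sigma_1\otimes I_2^{\otimes(a-1)}$.
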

\begin{proof} Let 
\begin{equation}
\label{paulimatrices}
\sigma_1 =\left[ \begin{array}{rr}
0 & 1 \\
-1 & 0
\end{array}
\right],
\sigma_2 =\left[ \begin{array}{rr}
0 & j \\
j & 0
\end{array}
\right], 
\sigma_3 =\left[ \begin{array}{rr}
1 & 0 \\
0 & -1
\end{array}
\right]
\end{equation}
and $ ~~~~~~~~$
$A^{\otimes^{m}} = \underbrace{A\otimes A\otimes A \cdots \otimes A }_{m~~times  } $. \\
From \cite{TiH} we know that the representation of the generators of $CA_{2a+1}$ is given by
\begin{equation}
\label{repmatrices}
\begin{array}{rl}
R(\gamma_2) &= I_2^{\otimes^{a-1}}  \bigotimes \sigma_1 \\
                                                                                
R(\gamma_3) &= I_2^{\otimes^{a-1}}  \bigotimes \sigma_2 \\                      . & . \\
. & . \\
. & . \\
R(\gamma_{2k}) &= I_2^{\otimes^{a-k}} \bigotimes  \sigma_1 \bigotimes \sigma_3^{\otimes^{k-1}} \\
R(\gamma_{2k+1}) &= I_2^{\otimes^{a-k}} \bigotimes \sigma_2 \bigotimes \sigma_3^{\otimes^{k-1}} \\
. & . \\
. & . \\
. & . \\
R(\gamma_{2a}) &= \sigma_1 \bigotimes \sigma_3^{\otimes^{a-1}} \\            R(\gamma_{2a+1}) &= \sigma_2 \bigotimes \sigma_3^{\otimes^{a-1}} \\
R(\gamma_1) &=\pm j \sigma_3^{\otimes^{a}}.
\end{array}
\end{equation}
From the above list of representation matrices we take the first $(2a-1)$ of them, i.e.,
\begin{equation}
\label{requiredHR}
\{ R(\gamma_2), R(\gamma_3), \cdots,  R(\gamma_{2a}) \}
\end{equation}
as our required set of H-R matrices and
\begin{equation*}
\label{requiredHer}
R^\prime(\gamma_1) =jR(\gamma_{2a+1})R(\gamma_1) =j\sigma_1 \otimes I_2^{\otimes^{a-1}} 
\end{equation*}
to be the required Hermitian matrix. 

Using the relation $\sigma_1 \sigma_2 = j \sigma_3$ and the following properties of the tensor products of matrices $A,B,C$ and $D$
\begin{eqnarray*}
\begin{array}{rl}
(A \bigotimes B)^H & = A^H \bigotimes B^H \\
(A \bigotimes B)(C \bigotimes D) & = AC \bigotimes BD
\end{array}
\end{eqnarray*}
it can be easily checked that $R^\prime(\gamma_1)$ commutes with all the $(2a-1)$  matrices of \eqref{requiredHR}.
\end{proof}
Now, we are ready to construct the CUW-SSD codes. Theorem~\ref{thm2} and Lemma~\ref{lemma2} suggests an elegant method of constructing rate $\frac{a}{2^{a-1}}$ UW-SSD codes. Now, we describe this construction in the following theorem followed  by  illustrative  examples. 
\begin{theorem}
\label{cuwssd}
Consider the following $2^a\times 2^a$  weight matrices 
\begin{equation}
\label{wtmatrices}
\begin{array}{rl}
A_{1I} &= I_n \\
A_{iI} &= R(\gamma_i), ~~ 2 \leq i \leq 2a \\
\mbox{  and       } A_{iQ} & =A_{1Q}A_{iI}, ~~~~~ 2 \leq i \leq 2a \\

\mbox{ where      } A_{1Q} &= j \sigma_1 \bigotimes I_2^{\otimes^{a-1}}
\end{array}
\end{equation}
and $\sigma_1, \sigma_2$ and $\sigma_3$ are given by \eqref{paulimatrices}. 
 With these weight matrices the resulting  $2^a \times 2^a$ code $S(x_1,x_2,\cdots,x_{2a})$ given by \eqref{cuwssdeq} at the top of the next page,  where
\begin{figure*}
\begin{equation}
\label{cuwssdeq}
\begin{array}{l}
 \sigma_{x_1} \bigotimes I_2^{\otimes^{a-1}} +\rho_{x_{2a}} \bigotimes \sigma_3 ^{\otimes^{a-1}}
  + \sum_{i=1}^{a-1}
\left[
\sigma_{x_{2i}} \bigotimes I_2^{\otimes^{a-i-1}}  \bigotimes \sigma_1 \bigotimes \sigma_3^{\otimes^{i-1}}+\sigma_{x_{2i+1}} \bigotimes I_2^{\otimes^{a-i-1}} \bigotimes \sigma_2 \bigotimes \sigma_3^{\otimes^{i-1}}
\right]  
\end{array}  
\end{equation} \hrule
\end{figure*}  
\[\begin{array}{rl}
x_i &= x_{iI}+jx_{iQ} \\
                                                                                
\sigma_{x_i} &= \left[
\begin{array}{rr}
x_{iI} & jx_{iQ} \\
-jx_{iQ} & x_{iI}
\end{array}
\right]  \mbox{  and }\\
                                                                                
\rho_{x_i} &= \left[
\begin{array}{rr}
-jx_{iQ} & jx_{iI} \\
-x_{iI} & -jx_{iQ}
\end{array}
\right]                                            
\end{array}
\]
is a CUW-SSD code in $2a$ complex variables with rate ($\frac{a}{2^{a-1}}$).
\end{theorem}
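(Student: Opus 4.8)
The plan is to prove the statement in three stages: (i) check that the weight matrices \eqref{wtmatrices} satisfy the five sufficient conditions \eqref{suffcondssd}, so that Theorem~\ref{thm2} together with the definition of CUW-SSD codes yields the single-symbol-decodability at once; (ii) confirm that the explicit array \eqref{cuwssdeq} is genuinely $\sum_{i=1}^{2a}(x_{iI}A_{iI}+x_{iQ}A_{iQ})$; and (iii) count variables and channel uses to read off the rate. The bulk of the work is reduced to invoking results already proved earlier in the paper.

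First I would dispose of \eqref{suffcondssd}, each line of which follows from facts in hand. The matrices $A_{iI}=R(\gamma_i)$, $2\le i\le 2a$, come from \eqref{requiredHR}: each is a Kronecker product containing exactly one skew-Hermitian Pauli factor ($\sigma_1$ or $\sigma_2$, with $\sigma_1^H=-\sigma_1$, $\sigma_2^H=-\sigma_2$) and otherwise only the Hermitian factors $I_2,\sigma_3$, so by $(A\otimes B)^H=A^H\otimes B^H$ they are skew-Hermitian, giving line~1; being representations of the Clifford generators $\gamma_2,\dots,\gamma_{2a}$ they anti-commute, giving line~2. For $A_{1Q}=j\sigma_1\otimes I_2^{\otimes^{a-1}}$ one checks $(j\sigma_1)^H=j\sigma_1$, so $A_{1Q}$ is Hermitian (line~3); line~4 is just the defining relation $A_{iQ}=A_{1Q}A_{iI}$; and line~5 is precisely Lemma~\ref{lemma2}, which supplies $A_{1Q}=R^\prime(\gamma_1)$ as a Hermitian matrix commuting with all of \eqref{requiredHR} (the case $j=1$ being trivial since $A_{1I}=I_n$). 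I would also note in passing that every $A_{iI}$ and $A_{iQ}$ is unitary (products and Kronecker products of unitary Pauli matrices), so the code is a genuine UW code, and that $A_{1Q}$ is not a real scalar matrix so \eqref{nopathology} holds.

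Next I would verify \eqref{cuwssdeq} by expanding the definition, writing the $i$-th contribution as $x_{iI}A_{iI}+x_{iQ}A_{iQ}=(x_{iI}I_n+x_{iQ}A_{1Q})A_{iI}$ and using $(A\otimes B)(C\otimes D)=AC\otimes BD$. The $i=1$ term collapses to $\sigma_{x_1}\otimes I_2^{\otimes^{a-1}}$ because $x_{1I}I_2+x_{1Q}\,j\sigma_1=\sigma_{x_1}$. For the generators $R(\gamma_{2i}),R(\gamma_{2i+1})$ with $1\le i\le a-1$ the leading Kronecker slot is $I_2$, so $A_{1Q}$ multiplies only that slot and the two terms become $\sigma_{x_{2i}}\otimes I_2^{\otimes^{a-i-1}}\otimes\sigma_1\otimes\sigma_3^{\otimes^{i-1}}$ and $\sigma_{x_{2i+1}}\otimes I_2^{\otimes^{a-i-1}}\otimes\sigma_2\otimes\sigma_3^{\otimes^{i-1}}$, reproducing the summation in \eqref{cuwssdeq}. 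The one genuinely different case is $i=2a$, where $R(\gamma_{2a})=\sigma_1\otimes\sigma_3^{\otimes^{a-1}}$ carries $\sigma_1$ rather than $I_2$ in the leading slot; there the product $A_{1Q}A_{2a,I}$ collapses through $\sigma_1^2=-I_2$, and recombining the in-phase and quadrature parts leaves the single remaining block $\rho_{x_{2a}}\otimes\sigma_3^{\otimes^{a-1}}$. This collapsing last term, with its careful index bookkeeping, is the one place demanding attention; everything else is mechanical.

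Finally, the rate is immediate: the design carries the $2a$ complex variables $x_1,\dots,x_{2a}$ over the $n=2^a$ time slots of the square $2^a\times 2^a$ codeword, so the rate is $\tfrac{2a}{2^a}=\tfrac{a}{2^{a-1}}$. Combined with the single-symbol-decodability and unitarity established in stage~(i), which identify the code as a CUW-SSD code, this completes the proof.
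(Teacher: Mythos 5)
Your proof is correct and takes essentially the same approach as the paper: the paper's entire proof is your stage (i), namely a direct verification that the weight matrices supplied by Lemma~\ref{lemma2} satisfy the sufficient conditions \eqref{suffcondssd} of Theorem~\ref{thm2}, and your stages (ii) and (iii) simply spell out the expansion of \eqref{cuwssdeq} and the rate count that the paper treats as immediate. The one apparent discrepancy in your stage (ii) — the collapsed last block is $(x_{2aI}\sigma_1 - jx_{2aQ}I_2)\otimes\sigma_3^{\otimes^{a-1}}$, whose $(1,2)$ entry is $x_{2aI}$ rather than the $jx_{2aI}$ appearing in the printed definition of $\rho_{x_i}$ — is a typo in the paper's definition of $\rho$ (the paper's own $2\times 2$ example agrees with your expansion, not with its stated $\rho$), so your argument stands as written.
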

\begin{proof}
From the representation matrices of Lemma~\ref{lemma2} and by the construction of weight matrices it is easily checked by direct verification that all the sufficient conditions of Theorem \ref{thm2} given by \eqref{suffcondssd} for an UW-SSD are satisfied.
\end{proof}
\begin{remark}
In Theorem \ref{cuwssd} the first $(2a-1)$ matrices of the list \eqref{repmatrices} have been set equal to the $(2a-1)$ matrices $A_{iI},~~i=2,\cdots,2a,$ and the product of the remaining two matrices of the list have been set equal to $R_{1Q}.$  It can be verified that the theorem holds if we set any $(2a-1)$ matrices of the list \eqref{repmatrices} to be $A_{iI},~~i=2,\cdots,2a$ and the product of the remaining two to be $A_{1Q}.$ 
\end{remark}
\begin{definition}
The $2^a\times 2^a$ STBCs given by \eqref{cuwssdeq} are defined to be a $2^a-$Clifford Unitary Weight SSD (CUW-SSD) code.
\end{definition}

The $2-$CUW-SSD code is
\[
\begin{array}{c}
S(x_1,x_2)=\sigma_{x_1}+\rho_{x_2} = \left[
\begin{array}{rr}
x_{1I}-jx_{2Q}  & x_{2I}+jx_{1Q} \\
-x_{2I}-jx_{1Q} & x_{1I}-jx_{2Q}
\end{array}
\right]
\end{array}
\]
and the $4-$CUW-SSD code is 
\begin{eqnarray*}
\begin{array}{l}
S(x_1,x_2,x_3,x_4) \\
=\sigma_{x_1}\bigotimes I_2 + \rho_{x_1} \bigotimes \sigma_3 +\sigma_{x_2} \bigotimes \sigma_1 +\sigma_{x_3}\bigotimes \sigma_2
\end{array}
\end{eqnarray*}
which is 
{\small
\begin{eqnarray*}  
\left[
\begin{array}{rrrr}
x_{1I}-jx_{4Q}  & x_{2I}+jx_{3Q} & x_{4I}+jx_{1Q}&-x_{3Q}+jx_{2Q}  \\
-x_{2I}-jx_{3I} & x_{1I}-jx_{4Q} & -x_{3Q}-jx_{2Q} & -x_{4I}+jx_{1Q} \\
                                                                                
-x_{4I}-jx_{1Q} & x_{3Q}-jx_{2Q} & x_{1I}-jx_{4Q} & x_{2I}+jx_{3I} \\
x_{3I}+jx_{2Q} & x_{4I}-jx_{1Q} & -x_{2I}+jx_{3I} & x_{1I}+jx_{4Q}
\end{array}
\right].
\end{eqnarray*}
}

\subsection{YGT codes are CUW-SSD codes}
   In \cite{YGT2} and \cite{YGT4} Yuen, Guan and Tjhung have constructed a class of MDC-QOD codes, which are SSD (with Unitary weight matrices) from Orthogonal designs.  We call these codes YGT codes and show in this subsection that these codes form a proper subclass of CUW-SSD codes.

For constructing a $n\times n$ MDC-QOD code where $n=2^a$, YGT codes begin with  an $\frac{n}{2}\times \frac{n}{2}$ orthogonal design,
\begin{equation}
\label{oygt1}
\mathbf{S}^{OD}_{\frac{n}{2}}={\sum}_{u=1}^{K}x_{uI}\underline{A_u}+jx_{uQ}\underline{B_u}
\end{equation}
and construct the $n\times n$ weight matrices of the MDC-QOD code,
\begin{equation}
\label{oygt2}
\mathbf{S}^{MDC-QOD}_{n}={\sum}_{u=1}^{2K}x_{uI}A_u+jx_{uQ}B_u
\end{equation}
in the following way,
\begin{eqnarray*}
\begin{array}{cc}
A_u=\left[\begin{array}{cc}
\underline{A_u} & 0\\
0 & \underline{A_u} \end{array}\right] & A_{u+K}=\left[\begin{array}{cc}
j\underline{B_u} & 0\\
0 & j\underline{B_u} \end{array}\right]   \\

B_u=\left[\begin{array}{cc}
0 & j\underline{A_u} \\
j\underline{A_u} & 0  \end{array}\right] &  B_{u+K}=\left[\begin{array}{cc}
0 & \underline{B_u} \\
\underline{B_u} & 0 \end{array}\right].  
\end{array}
\end{eqnarray*}
Note that in writing the expression for the linear dispersion codes in  \eqref{oygt1} and \eqref{oygt2} the $j$ has not been included in the corresponding weight matrices. But in our construction we have absorbed the $j$ in the corresponding weight matrices. To facilitate comparison,  we describe the construction procedure in a different way taking $j$ into the corresponding weight matrices. For constructing a $n\times n$ MDC-QOD code where $n=2^a$, we take an $\frac{n}{2}\times \frac{n}{2}$ orthogonal design,
\begin{equation*}
\label{mygt1}
\mathbf{S}^{OD}_{\frac{n}{2}}={\sum}_{u=1}^{K}x_{uI}\underline{A^{\prime}_u}+x_{uQ}\underline{B^{\prime}_u}
\end{equation*}
(here $\underline{A^{\prime}_u}=\underline{A_u}$ and $\underline{B^{\prime}_u}=j\underline{B_u}$) and construct the $n\times n$ weight matrices of the MDC-QOD code,
\begin{equation*}
\label{mygt2}
\mathbf{S}^{MDC-QOD}_{n}={\sum}_{u=1}^{2K}x_{uI}A^{\prime}_u+x_{uQ}B^{\prime}_u
\end{equation*}
\noindent
(here $A^{\prime}_u=A_u$ and $B^{\prime}_u=jB_u$) in the following way,
\begin{eqnarray*}
\begin{array}{cc}
A^{\prime}_u=\left[\begin{array}{cc}
\underline{A^{\prime}_u} & 0\\
0 & \underline{A^{\prime}_u} \end{array}\right] & A^{\prime}_{u+K}=\left[\begin{array}{cc}
\underline{B^{\prime}_u} & 0\\
0 & \underline{B^{\prime}_u} \end{array}\right]   \\
                                                                                
B^{\prime}_u=\left[\begin{array}{cc}
0 & -\underline{A^{\prime}_u} \\
-\underline{A^{\prime}_u} & 0  \end{array}\right] &  B^{\prime}_{u+K}=\left[\begin{array}{cc}
0 & \underline{B^{\prime}_u} \\
\underline{B^{\prime}_u} & 0 \end{array}\right]  \end{array}
\end{eqnarray*}
Note that these weight matrices have the following structure,
\begin{eqnarray*}
A^{\prime}_1=I_{n\times n}, \quad {\{A^{\prime}_u\}}_{u=2}^{2a} \, \textrm{ is an HR family} \\
B^{\prime}_1=j{\sigma}_2\otimes I_{\frac{n}{2}\times \frac{n}{2}}, \quad B^{\prime}_u=\pm B^{\prime}_1 A^{\prime}_u \textrm{  for  } 1 \leq u \leq 2a.
\end{eqnarray*}
Note that $B^{\prime}_1$ is a unitary Hermitian matrix that commutes with all $A^{\prime}_u$ for $2 \leq u \leq 2a$. Hence the YGT codes satisfy all the conditions of \eqref{suffcondssd} and has the following two special features which have been obtained without the use of representations of Clifford algebras.
\begin{itemize}
\item The set, ${\{A^{\prime}_u\}}_{u=2}^{2a}$ is constructed in a particular way.
\item $B^{\prime}_1$ is a special matrix satisfying all the constraints in \eqref{suffcondssd}.
\end{itemize}
If we choose a different $B^{\prime}_1$ we get a different code. Similarly if we select the set ${\{A^{\prime}_u\}}_{u=2}^{2a}$ in a different manner we also get a different code. So the codes described in \cite{YGT2} and \cite{YGT4} are proper subclasses of the class of CUW-SSD codes.

\section{An upper bound on the rate of UW-SSD codes}
\label{sec4}
In this section we show that for  arbitrary $2^a\times 2^a$ UW-SSD codes (not necessarily CUW-SSD codes) the rate $\frac{K}{2^a}$ in complex symbols per channel use is upper bounded by $\frac{2a}{2^a}=\frac{a}{2^{a-1}}$ which is larger than the upper bound for CODs which is $\frac{a+1}{2^a}$. Our upper bound proved in this section implies that the CUW-SSD codes constructed in previous section are rate-optimal.\\

Towards establishing an upper bound we first rewrite \eqref{ldcuwnssd} as
\begin{equation}
\label{doublenormal}
S_N=(x_{1I}I_n +\sum_{i=2}^{K} x_{iI}A_{iI})+A_{1Q}(x_{1Q}I_n+\sum_{i=2}^{K}x_{iQ} A^\prime_{iQ})
\end{equation}
where
\begin{equation*}
\label{secondnormal}
A^\prime_{iQ}= A_{1Q}^HA_{iQ}, ~~~~ 2 \leq i \leq K, \mbox{  with } A^\prime_{1Q}=I_n.
\end{equation*}
Now, if the code given by \eqref{ldcuwnssd} is UW-SSD then so is the code given by \eqref{doublenormal} and hence an upper bound on the rate of the UW-SSD codes of the form \eqref{doublenormal} is also an upper bound on the rate of the UW-SSD codes of the form \eqref{ldcuwnssd} and hence of the UW-SSD codes of the form \eqref{ldcuwssd}. Now, we proceed to obtain an upper bound on the rate of the code given by \eqref{doublenormal} when it is UW-SSD. When \eqref{doublenormal} is UW-SSD the following relations hold:
\begin{eqnarray*}
\label{a6}
A_{1I}=I_n, \quad A_{iI}^{H}=-A_{iI}\quad \textrm{for  }\, 2\leq i\leq K \\
\label{a7}
A_{iI}A_{jI}=-A_{jI}A_{iI}\quad \textrm{for  }\, 2\leq i\neq j\leq K \\
\label{a8}
{A^{\prime}}_{1Q}=I_n, \quad {A^{\prime}}_{iQ}^{H}=-{A^{\prime}}_{iQ}\quad \textrm{for  }\, 2\leq i\leq K \\
\label{a9}
{A^{\prime}}_{iQ}{A^{\prime}}_{jQ}=-{A^{\prime}}_{jQ}{A^{\prime}}_{iQ}\quad \textrm{for}\, 2\leq i\neq j\leq K \\
\label{a10}
A_{iQ}=A_{1Q}{A^{\prime}}_{iQ}\quad \textrm{for}\, 2\leq i\leq K.
\end{eqnarray*}
These relations can be proved by straight forward substitution of the weight matrices in to the set of equations given by \eqref{A2}.

The following three lemmas concerning the representations of groups will be used to prove our upper bound.
\begin{lemma}[Schur's Lemma]
\label{schurs_lemma}
For a finite group $G$, if $\{A_g \in M^{n\times n} | g \in G  \}$ is a unitary  matrix representation and $P\in M^{n\times n}$ is a nonsingular matrix that commutes with all $A_g, ~ g\in G,$ then $P= \lambda I_n$ for some non-zero $\lambda \in \mathbb{R}$.
\end{lemma}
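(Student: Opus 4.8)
The plan is to recognize this as the classical corollary of Schur's lemma and prove it by the standard eigenvalue argument over the complex field. I note at the outset that the statement as written cannot hold without assuming the representation $\{A_g\}$ is \emph{irreducible}: for a reducible representation (for instance the trivial representation on $\mathbb{C}^n$ with $n \geq 2$) every nonsingular $P$ commutes with all $A_g$ yet is typically not scalar. Since the lemma is applied in the sequel to the irreducible $2^a$-dimensional representation of the finite group $G_L$ of \eqref{doublecover}, I take irreducibility as the operative hypothesis. The key fact I will exploit is that $\mathbb{C}$ is algebraically closed, so $P$ is guaranteed a genuine eigenvalue.

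First I would pick an eigenvalue $\lambda$ of $P$, which exists over $\mathbb{C}$ and is nonzero because $P$ is nonsingular. Set $Q = P - \lambda I_n$. Since $P$ commutes with every $A_g$ and $I_n$ commutes with everything, $Q$ also commutes with every $A_g$. The eigenspace $W = \ker Q$ is nonzero by the choice of $\lambda$. The crucial observation is that $W$ is $G$-invariant: if $Qw = 0$, then for each $g$ we have $Q(A_g w) = A_g(Q w) = 0$, so $A_g w \in W$. By irreducibility the only nonzero invariant subspace is all of $\mathbb{C}^n$, hence $W = \mathbb{C}^n$, which forces $Q = 0$ and $P = \lambda I_n$. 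The unitarity hypothesis is not actually needed for this step; it would become relevant only if one wished to reduce to the irreducible case via a Maschke-type decomposition.

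The remaining point, that $\lambda$ can be taken \emph{real}, is the part that requires care and is, in my view, the main subtlety, since it is false for a general nonsingular commuting $P$ (the eigenvalue argument only delivers $\lambda \in \mathbb{C}$). What rescues the real conclusion is the context: in Lemma \ref{lemma2} the commuting matrix $P$ is \emph{Hermitian}, and a Hermitian matrix equal to $\lambda I_n$ must satisfy $\lambda = \lambda^{*}$, i.e. $\lambda \in \mathbb{R}$, with nonsingularity giving $\lambda \neq 0$. I would therefore either adjoin the hypothesis that $P$ is Hermitian or state the general conclusion as $\lambda \in \mathbb{C}$ and specialize to real $\lambda$ in the Hermitian case. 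The only genuine obstacle is thus bookkeeping about which extra hypotheses (irreducibility, Hermiticity) are silently in force; the algebraic core is the one-line invariant-subspace argument above.
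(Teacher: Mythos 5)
Your proof is correct, but there is nothing in the paper to compare it against: the authors state this lemma without proof, citing it as the classical Schur's Lemma, and the invariant-subspace argument you give (an eigenvalue $\lambda$ of $P$ exists over $\mathbb{C}$ and is nonzero by nonsingularity; $\ker(P-\lambda I_n)$ is a nonzero $G$-invariant subspace; irreducibility forces it to be all of $\mathbb{C}^n$) is exactly the standard proof they implicitly rely on. More importantly, your two corrections to the statement are both genuine. First, irreducibility is indispensable and is silently assumed: every place the paper invokes the lemma (the proof of Lemma \ref{unique_last}, Claim 1 in the proof of Theorem \ref{ubthm}, and Appendix \ref{Append3}) it is applied to an irreducible representation of a Clifford algebra, in combination with Lemma \ref{cliff_group} to pass from commuting with the generators to commuting with the whole finite group $G_L$. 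Second, the conclusion $\lambda \in \mathbb{R}$ is an overstatement; the eigenvalue argument only yields $\lambda \in \mathbb{C}$, and the paper's own proof of Theorem \ref{ubthm} concedes this by splitting into the cases ``$\lambda \in \mathbb{R}$'' and ``$\lambda \notin \mathbb{R}$ but $\lambda \in \mathbb{C}$'' --- a case distinction that would be vacuous if the lemma as stated were true. Where a real scalar (or $\pm 1$) is actually needed in the paper, it comes from extra structure of the commuting matrix, exactly as you observe: the matrix $H$ in Appendix \ref{Append3} is Hermitian, and the products of generators in Lemma \ref{unique_last} square to $\pm I_n$, which is what pins the scalar down to $\{+1,-1\}$. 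You are also right that unitarity of the representation plays no role in the argument itself; it matters only if one wants complete reducibility in the reducible case.
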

\begin{lemma}
\label{cliff_group}
For the finite group $G_L$ of \eqref{doublecover} if there exist a matrix $P$ which commutes with all the representation matrices of the generators of $G_L$ then it commutes with all the representation matrices, i.e.,
\[
\begin{array}{l}
PA(\gamma_i)=A(\gamma_i)P, ~~ \forall i=1,2,\cdots,L, ~~ P \notin R(G_L) \\
\Longrightarrow PA(\gamma)=A(\gamma)P, ~~ \forall \gamma \in G_L.
\end{array}
\]
\end{lemma}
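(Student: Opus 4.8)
The plan is to exploit the fact that the generators $\gamma_1, \gamma_2, \ldots, \gamma_L$ generate the entire group $G_L$ under multiplication, together with the multiplicative (homomorphism) property of the representation $A$. First I would recall the structure of $G_L$ from \eqref{doublecover}: every element of $B_L$ is either $1$ or an ordered product $\gamma_{k_1}\gamma_{k_2}\cdots\gamma_{k_m}$ of distinct generators, and $G_L$ adjoins the negatives of these. Since $\gamma_k^2 = -1$ in the Clifford algebra, the element $-1$ — and hence every element $-b$ with $b \in B_L$ — already lies in the subgroup generated by the $\gamma_i$'s. Thus each $\gamma \in G_L$ can be written as a finite product of the generators (allowing repetitions), so that $A(\gamma)$ equals the corresponding product of the matrices $A(\gamma_i)$ because $A$ is a group homomorphism.

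Next I would establish the elementary closure property that commuting is preserved under products: if a matrix $P$ commutes with both $B$ and $C$, then $P(BC) = (PB)C = (BP)C = B(PC) = B(CP) = (BC)P$. By hypothesis $P$ commutes with every generator matrix $A(\gamma_i)$; applying this closure property inductively to any word $A(\gamma_{i_1}) A(\gamma_{i_2}) \cdots A(\gamma_{i_m})$ shows that $P$ commutes with $A(\gamma)$ for that product, while the overall sign contributes a scalar factor $\pm I_N$ that commutes trivially. This yields $PA(\gamma) = A(\gamma)P$ for all $\gamma \in G_L$, which is exactly the claim.

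I do not expect a genuine obstacle here: the lemma is essentially the standard observation that commuting with a generating set of a group implies commuting with the whole group. The only point demanding a moment's care is confirming that the generators really do generate all of $G_L$, including the $-1$ factor, which follows at once from the relation $\gamma_k^2 = -1$. The side condition $P \notin R(G_L)$ stated in the lemma plays no role in this implication itself; it becomes relevant only when the lemma is subsequently combined with Schur's Lemma \ref{schurs_lemma} to force such a commuting $P$ to be a scalar multiple of $I_n$.
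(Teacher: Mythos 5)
Your proposal is correct and follows essentially the same route as the paper's own proof: write an arbitrary element of $G_L$ as a word in the generators and move $P$ across the product one factor at a time. Your treatment is in fact slightly more careful than the paper's, since you explicitly note that the sign elements $-b$ are covered because $-1=\gamma_k^2$ lies in the subgroup generated by the $\gamma_i$'s, and that the hypothesis $P \notin R(G_L)$ is immaterial to the implication.
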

\begin{proof}
Let for an arbitrary element $\gamma$ of $R(G_L)$, the representation in terms of those of the generators  be
\[
A(\gamma)= A(\gamma_{i1})A(\gamma_{i2}) \cdots A(\gamma_{iL}).
\]
Then,
\[
\begin{array}{l}
PA(\gamma) = PA(\gamma_{i1})\cdots A(\gamma_{iL})=A(\gamma_{i1})P \cdots A(\gamma_{iL}) \\
=A(\gamma_{i1}) \cdots A(\gamma_{iL})P=A(\gamma)P.
\end{array}
\]
\end{proof}
\begin{lemma}
\label{unique_last}
If ${\{A_{i}\}}_{i=1}^{2a+1}$ is a $2^a\times 2^a$ irreducible representation of $CA_{2a+1}$  and for a $M \in M^{2^a\times 2^a}$,  $$\{M\}\bigcup\{{\{ A_{i} \}}_{i=1}^{2a+1}\setminus\{A_{k}\}\}$$
is also an irreducible representation of $CA_{2a+1}$, then, $ M = \pm A_{k} $.
\end{lemma}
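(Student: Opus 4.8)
The plan is to exploit the fact that for an odd number of generators the ``volume element'' (pseudoscalar) of the Clifford algebra is central, so that its image in any irreducible representation is a scalar matrix by Schur's Lemma. The whole lemma then reduces to a one-line normalization computation.

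First I would observe that the product of all generators $\omega = \gamma_1\gamma_2\cdots\gamma_{2a+1}$ is a central element of the finite group $G_{2a+1}$ of \eqref{doublecover}. Moving any single generator $\gamma_j$ past $\omega$ requires anticommuting it with every factor other than $\gamma_j$ itself, i.e. $2a$ anticommutations, and $(-1)^{2a}=1$, so $\gamma_j\omega = \omega\gamma_j$ for all $j$; since the generators generate $G_{2a+1}$, $\omega$ commutes with the whole group. Consequently, in the given $2^a\times 2^a$ irreducible representation $\{A_i\}$, the matrix $A_1A_2\cdots A_{2a+1}$ commutes with every representation matrix, and Lemma~\ref{schurs_lemma} forces $A_1\cdots A_{2a+1}=\lambda I$ for some scalar $\lambda$. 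Because each $A_i$ is invertible ($A_i^2=-I$), the product is invertible and $\lambda\neq 0$.

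Next I would apply exactly the same reasoning to the modified set $\{A_1,\dots,A_{k-1},M,A_{k+1},\dots,A_{2a+1}\}$, which by hypothesis is again an irreducible representation of $CA_{2a+1}$ with $M$ in the role of the $k$-th generator. Its pseudoscalar $A_1\cdots A_{k-1}M A_{k+1}\cdots A_{2a+1}$ is therefore a nonzero scalar $\mu I$. Writing $L=A_1\cdots A_{k-1}$ and $R=A_{k+1}\cdots A_{2a+1}$ (both invertible), the two identities read $L A_k R=\lambda I$ and $L M R=\mu I$. Solving each for the common factor $L^{-1}R^{-1}$ gives $A_k=\lambda\,L^{-1}R^{-1}$ and $M=\mu\,L^{-1}R^{-1}$, whence $M=(\mu/\lambda)A_k$. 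Finally, substituting this into the defining relation $M^2=-I=A_k^2$ yields $(\mu/\lambda)^2 A_k^2 = A_k^2$, so $(\mu/\lambda)^2=1$ and hence $M=\pm A_k$, as claimed.

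The only genuine content is the first step: recognizing that the product of the generators is central precisely because $2a+1$ is odd, which is what makes its image a scalar; after that Schur's Lemma does all the work and the rest is bookkeeping. I expect the main (mild) point to verify is that the scalars $\lambda,\mu$ are nonzero and that Schur's Lemma is being applied over a field for which the commutant is one-dimensional, both of which hold here since the $A_i$ and $M$ are invertible complex matrices; no value of $\lambda$ or $\mu$ beyond their nonvanishing is ever needed.
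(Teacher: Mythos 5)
Your proof is correct and takes essentially the same route as the paper's: both arguments rest on the observation that, because $2a+1$ is odd, the product of all the generators' representation matrices is central and hence (by Schur's Lemma together with Lemma~\ref{cliff_group}) a nonzero scalar multiple of $I$ in each of the two irreducible representations, after which comparing the two scalar identities forces $M$ to be a scalar multiple of $A_k$. If anything, your final normalization step using $M^2=A_k^2=-I$ to pin the ratio down to $\pm 1$ is slightly more careful than the paper's, which directly asserts that each scalar lies in $\{+1,-1\}$.
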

                                                                                
\begin{proof} Since $2a+1$ is an odd number, the product of the representation matrices of the generators of the $CA_{2a+1}$ commutes with all the generators (Proposition A.2 of \cite{TiH}). Hence from Lemma~\ref{cliff_group}, this product term commutes with all the elements of the finite group generated by the generators of $CA_{2a+1}$. Then, from Schur's Lamma it follow that,\\
\begin{equation*}
\label{a13}
{\prod}_{i=1}^{2a+1}A_{i}={\lambda}_{1}I_{n\times n}\quad \textrm{for some}\, {\lambda}_{1}\, \in\, \{+1,-1\}. 
\end{equation*}
By the same argument it follows that,
\begin{equation*}
\label{a14}
{\prod}_{i=1,i\neq k}^{2a+1}A_{i}M={\lambda}_{2}I_{n \times n}\quad \textrm{for some}\, {\lambda}_{2}\, \in\, \{+1,-1\}. 
\end{equation*}
From the above two equations we have,
\begin{eqnarray*}
\left.\begin{array}{cccc}
{\prod}_{i=1}^{2a+1}A_{i} &= &\frac{\lambda_{1}}{\lambda_{2}}{\prod}_{i=1,i\neq k}^{2a+1}A_{i}M \\
\Rightarrow \, A_{k}& = & \pm \frac{\lambda_{1}}{\lambda_{2}}M \\
\Rightarrow \, A_{k}& = & \lambda M & \textrm{for some } \lambda \in \{+1,-1\}\\
\Rightarrow \, M &=& \pm \,A_{k}.
\end{array}\right. 
\end{eqnarray*}
\end{proof}
Now we are ready to prove the main result of this paper, which is an achievable upper bound on the rate of the US-SSD codes (not necessarily CUW-SSD codes) which is larger than that of the CODs. To the best our knowledge though SSD codes with rates meeting this bound have been reported no where this bound has been proved.

\begin{theorem}
\label{ubthm}
The rate $\frac{K}{2^a}$ of an $2^a\times 2^a$ UW-SSD code given by \eqref{doublenormal} is upper bounded by 
\[
\frac{K}{2^a} \leq \frac{2a}{2^a}= \frac{a}{2^{a-1}}.
\]
\end{theorem}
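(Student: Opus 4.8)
The plan is to convert the UW-SSD hypothesis into purely algebraic relations among the weight matrices and then bound the number of mutually anti-commuting unitary matrices that can live in dimension $2^a$. First I would record the relations forced by \eqref{A2} together with unitarity on the normalized form \eqref{doublenormal}. Since $A_{1I}=I_n$, condition \eqref{A22} with $i=1$ gives $A_{iI}^{H}=-A_{iI}$ and \eqref{A21} with $i=1$ gives $A_{iQ}^{H}=-A_{iQ}$ for $2\le i\le K$; hence every $A_{iI}$ and every $A_{iQ}$ with $i\ge 2$ is anti-Hermitian, unitary, and squares to $-I_n$. Feeding anti-Hermiticity back into \eqref{A22}, \eqref{A23} and \eqref{A21} turns those conditions into pure anti-commutation: $\{A_{iI},A_{jI}\}=0$, $\{A_{iQ},A_{jQ}\}=0$, and crucially $\{A_{iI},A_{jQ}\}=0$, all for $i\ne j$. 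The single relation that is \emph{absent} is the diagonal one $\{A_{iI},A_{iQ}\}=0$, precisely because a UW-SSD code need not satisfy \eqref{A3}. Thus each of $\{A_{iI}\}_{i=2}^{K}$ and $\{A_{iQ}\}_{i=2}^{K}$ is a set of anti-commuting generators squaring to $-I_n$, i.e. each is a representation of $CA_{K-1}$ in dimension $2^a$.

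Next I would extract a coarse bound from the representation theory of Clifford algebras. Because the irreducible representation of $CA_{2a+1}$ already has dimension $2^a$ while that of $CA_{2a+2}$ has dimension $2^{a+1}$, there cannot exist more than $2a+1$ anti-commuting anti-Hermitian unitary $2^a\times 2^a$ matrices; applied to $\{A_{iI}\}_{i=2}^{K}$ this gives $K-1\le 2a+1$, that is, $K\le 2a+2$. The work is then to shave off the two extremal values $K=2a+2$ and $K=2a+1$.

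For $K=2a+2$ the set $\{A_{iI}\}_{i=2}^{2a+2}$ consists of $2a+1$ anti-commuting generators squaring to $-I_n$ in the minimal dimension $2^a$, hence is an irreducible representation of $CA_{2a+1}$. Fixing any $i$ and replacing the generator $A_{iI}$ by $A_{iQ}$, the cross relation $\{A_{iQ},A_{jI}\}=0$ for $j\ne i$ together with $A_{iQ}^2=-I_n$ shows that the new set is again such an irreducible representation, so Lemma \ref{unique_last} forces $A_{iQ}=\pm A_{iI}$; this is a real scalar multiple, contradicting \eqref{nopathology}, so $K\ne 2a+2$. For $K=2a+1$ the $2a$ matrices $\{A_{iI}\}_{i=2}^{2a+1}$ form a representation of $CA_{2a}$ that extends by a single generator $\Gamma$ (a suitable scalar multiple of the product $A_{2I}\cdots A_{(2a+1)I}$, which anti-commutes with each factor) to an irreducible representation of the odd algebra $CA_{2a+1}$, whose image is all of $M_{2^a}(\mathbb{C})$; every weight matrix therefore expands in the Clifford monomial basis. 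Imposing $\{A_{iQ},A_{jI}\}=0$ for $j\ne i$ and using that the product of all generators but one is a scalar multiple of the remaining one (as in the proof of Lemma \ref{unique_last}) collapses the expansion of each $A_{iQ}$ into $\mathrm{span}\{A_{iI},\Gamma\}$. Writing $A_{iQ}=\alpha_i A_{iI}+\beta_i\Gamma$ and imposing the surviving anti-commutator $\{A_{iQ},A_{i'Q}\}=0$ yields $\beta_i\beta_{i'}\,\Gamma^2=0$, hence $\beta_i\beta_{i'}=0$ for all $i\ne i'$, so at most one $\beta_i$ is nonzero; but $\beta_i=0$ makes $A_{iQ}$ a real multiple of $A_{iI}$, again contradicting \eqref{nopathology}. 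Hence $K\ne 2a+1$, and therefore $K\le 2a$, which is the claimed rate bound $\tfrac{K}{2^a}\le\tfrac{a}{2^{a-1}}$.

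I expect the main obstacle to be the $K=2a+1$ case. Unlike the clean $K=2a+2$ argument, where Lemma \ref{unique_last} applies verbatim, here one must first justify the extension to a full generating set $\{A_{2I},\dots,A_{(2a+1)I},\Gamma\}$ and then carry the anti-commutator bookkeeping through the Clifford basis to prove that every $A_{iQ}$ is confined to the two-dimensional span $\mathrm{span}\{A_{iI},\Gamma\}$; it is the no-pathology condition \eqref{nopathology} that finally converts this algebraic collapse into the strict inequality $K\le 2a$.
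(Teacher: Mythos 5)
Your proof is correct, and it takes a genuinely different route from the paper's in both of the hard steps. For $K=2a+2$ the paper never compares $A_{kQ}$ with $A_{kI}$ directly: it builds an auxiliary generating set $\{B_i\}$ out of products of the $A_{iI}$'s, shows that $A_{1Q}$ commutes with every $B_i$, and then uses Lemma \ref{cliff_group} and Schur's lemma to force $A_{1Q}=\lambda I_n$, which contradicts either \eqref{nopathology} or the relation $A_{1Q}A_{iI}=A_{iI}A_{1Q}^H$. You instead apply Lemma \ref{unique_last} directly: swapping one generator $A_{kI}$ for $A_{kQ}$ (legitimate, since \eqref{A21} with first index $1$ makes $A_{kQ}$ anti-Hermitian, and the cross relations make the swapped set an irreducible representation of $CA_{2a+1}$ in dimension $2^a$) yields $A_{kQ}=\pm A_{kI}$ and kills \eqref{nopathology} at once. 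For $K=2a+1$ the divergence is much larger: the paper's Appendix III splits $A_{1Q}$ into Hermitian and skew-Hermitian parts, shows via Schur that the Hermitian part is scalar and the skew part is $\pm A_{2a+2}$, then invokes the uniqueness of the $2^a$-dimensional irreducible representation of $CA_{2a}$ to conjugate $\{A'_{iQ}\}$ onto $\{A_{p(i)I}\}$ by a special unitary $V$, and only after the $M_i$, $N_i$ bookkeeping reaches $A_{kI}=cA_{kQ}$. You bypass all of that by noting that the $2a$ matrices $\{A_{iI}\}_{i=2}^{2a+1}$ already generate all of $M_{2^a}(\mathbb C)$, so each $A_{iQ}$ has a unique Clifford-monomial expansion, and anti-commutation with every $A_{jI}$, $j\neq i$, leaves only the two monomials $A_{iI}$ and $\Gamma$ (sizes $1$ and $2a$); the surviving relations $\{A_{iQ},A_{i'Q}\}=0$ then force all but one $\Gamma$-coefficient to vanish. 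Your route buys brevity and makes the obstruction transparent (the only room available to $A_{iQ}$ is $\alpha_i A_{iI}+\beta_i\Gamma$, and the $\Gamma$-components cannot coexist); the paper's route avoids the monomial-basis expansion and stays within Schur-type rigidity and equivalence of representations, at the cost of considerable length.

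One step you should make explicit: when $\beta_i=0$ you obtain $A_{iQ}=\alpha_i A_{iI}$ with $\alpha_i$ a priori \emph{complex}, whereas \eqref{nopathology} only excludes real multiples. The fix is one line: both $A_{iQ}$ and $A_{iI}$ are anti-Hermitian and unitary, so $\bar{\alpha}_i=\alpha_i$ and $|\alpha_i|=1$, forcing $\alpha_i\in\{+1,-1\}$; only then does the contradiction with \eqref{nopathology} go through.
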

\begin{proof}
Since in \eqref{doublenormal} the set of matrices $A_{iI}, ~ 2 \leq i \leq K,$ constitute an Hurwitz- Radon family of matrices, for $n=2^a$, we have,
\begin{equation}
\label{b15}
K \leq 2a+2. 
\end{equation}
{\bf Claim 1  $K \neq 2a+2$:} We prove this claim by contradiction- suppose  \eqref{doublenormal} is of rate $\frac{K}{n}$ where $K=2a+2$ and $n=2^{a}$. 

Since the code is SSD, the set ${\{A_{iI}\}}_{i=2}^{2a+2}$ is a set of skew-Hermitian anticommuting unitary matrices. Hence, they represent an irreducible representation of the generators of $CA_{2a+1}$. Also, by putting $j=1$ and $2 \leq i$ in \eqref{A21}, we get,
\[ A_{iI}^{H}A_{1Q}+A_{1Q}^{H}A_{iI}=0, \quad 2\leq i \leq 2a+2 \]
\begin{equation}
\label{b17}
\left.
\begin{array}{c}
\Rightarrow \quad A_{iI}A_{1Q}=A_{1Q}^{H}A_{iI}\\
\Rightarrow \quad A_{1Q}A_{iI}=A_{iI}A_{1Q}^{H} \end{array} \right\}.
\end{equation}
Now from the set ${\{A_{iI}\}}_{i=2}^{2a+2}$ we construct another set ${\{B_{i}\}}_{i=2}^{2a+2}$, given by
\begin{equation}
\label{b18}
\left.\begin{array}{c}
 B_{i}=A_{2I}A_{iI} \quad  3\leq i \leq 2a+2 \\
\textrm{and} \quad B_{2}=j{\prod}_{i=3}^{2a+2}A_{iI} \end{array} \right\}.
\end{equation}

Now it can easily be verified that this new set ${\{B_{i}\}}_{i=2}^{2a+2}$ is also a set of skew-Hermitian anticommuting unitary matrices. Hence this also represents an irreducible representation of the generators of $CA_{2a+1}$. Further, for $2\leq i \leq 2a+2 $, using \eqref{b17} and \eqref{b18}, we have
\begin{eqnarray*}
\label{b19}
A_{1Q}B_{i}=A_{1Q}A_{2I}A_{iI}=A_{2I}A_{1Q}^{H}A_{iI} \\
=A_{2I}A_{iI}A_{1Q}=B_{i}A_{1Q}.
\end{eqnarray*}
                                                                                
Moreover,
\begin{equation}
\label{b20}
A_{1Q}B_{2}=A_{1Q}j{\prod}_{i=2}^{2a+2}A_{iI}=j{\prod}_{i=2}^{2a+2}A_{iI}A_{1Q}=B_{2}A_{1Q}
\end{equation}
which follows from repeated use of \eqref{b17} noting that there are even no of terms in the product. So $A_{1Q}$ is a matrix that commutes with all the representation matrices of the generators of the $CA_{2a+1}$. Hence using Lemma~\ref{schurs_lemma} and Lemma~\ref{cliff_group} we have,
\[A_{1Q}=\lambda I_{n\times n}.\]
If $\lambda \in \mathbb{R}$, then $A_{1I}=\lambda A_{1Q}$ which contradicts \eqref{nopathology}. On the other hand, if $\lambda \notin \mathbb{R}$ but $\lambda \in \mathbb{C}$ then condition \eqref{b17} is violated. This means there does not exists an $A_{1Q}$ which satisfies all the conditions and hence a code of the assumed rate does not exist. So $K\neq 2a+2$.\\

{\bf Claim 2 $K \neq 2a+1$:} The proof for this claim is given in Appendix-III.\\

From these two claims and \eqref{b15} it follows that
\begin{equation*}
\label{upperbound}
K \leq 2a \mbox{ and hence } \frac{K}{2^a} \leq \frac{2a}{2^a}= \frac{a}{2^{a-1}}.
\end{equation*}
\end{proof}
\section{Diversity and Coding gain of CUW-SSD codes}
\label{sec5}
We have seen in Theorem \ref{gainnochange} that the coding gain of a UW-SSD does not change when normalized. Hence, for a CUW-SSD code $S$ the expression given by \eqref{dpnuwssd} can be used. Towards this end, since CUW-SSD codes satisfy the sufficient conditions  \eqref{suffcondssd}, we have  
\begin{eqnarray}
\label{2aq1}
\begin{array}{l}
 A_{iI}^{H}A_{iQ}+A_{iQ}^{H}A_{iI} \\
 = A_{iI}^{H}A_{1Q}A_{iI}+A_{iI}^{H}A_{1Q}^{H}A_{iI}  \\
 = A_{iI}^{H}A_{1Q}A_{iI}+A_{iI}^{H}A_{1Q}A_{iI}    \\
 = A_{iI}^{H}A_{iI}A_{1Q}+A_{iI}^{H}{H}A_{iI}A_{1Q}  \\
 = \left(A_{iI}^{H}A_{iI}+A_{iI}^{H}A_{iI}\right)A_{1Q} \\
 = 2I_nA_{1Q} = 2A_{1Q}~~ \forall \quad k \geq i\neq j\geq 1.
\end{array}
\end{eqnarray}

Using \eqref{2aq1} in \eqref{dpnuwssd} we get,

{\footnotesize 
\begin{equation} 
\label{div_prod1}
\mathbf{DP}(S)=\frac{1}{2\sqrt{n}} \min_{\triangle\mathbf{x}\neq \mathbf{0}}  \Bigg\lvert  \mathrm{det} \sum_{i=1}^{k} \Bigg[  {|\triangle x_{i}|}^{2}I_{n} +2\triangle x_{iI}\triangle x_{iQ}A_{1Q} \Bigg]  \Bigg \rvert
\end{equation}
}
         
The above expression shows that the Hermitian matrix $A_{1Q}$ is special among all the weight matrices of the code, in the sense that this alone influences the coding gain. For this reason we give the name {\it the discriminant of $S$} to it. Since the discriminant is unitary, it is diagonalizable,  say, $A_{1Q}=\mathbf{E}\Lambda\mathbf{E^{-1}},$ where $\mathbf{E}$ is the matrix containing the eigenvectors of $A_{1Q}$ and $\Lambda $ is the diagonal matrix containing the eigenvalues of $A_{1Q}$. Now as eigenvalues of unitary matrix lie on the unit circle and eigenvalues of Hermitian matrix are all real, the entries of $\Lambda $ are $\pm 1$ only. Using this information in \eqref{div_prod1} we have

{\footnotesize
\begin{displaymath}
\mathbf{DP}(S)=\frac{1}{2\sqrt{n}}  \min_{\triangle\mathbf{x}\neq \mathbf{0}}  \Bigg\lvert  \mathrm{det} \sum_{i=1}^{k} \Bigg[  {|\triangle x_{i}|}^{2}\mathbf{E}\mathbf{E^{-1}} +2\triangle x_{iI}\triangle x_{iQ}\mathbf{E}\Lambda\mathbf{E^{-1}} \Bigg]  \Bigg \rvert  
\end{displaymath}
}
\begin{displaymath}
=\frac{1}{2\sqrt{n}} \min_{\triangle\mathbf{x}\neq \mathbf{0}}  \Bigg\lvert  \mathrm{det} \sum_{i=1}^{k} \Bigg[  {|\triangle x_{i}|}^{2} I_{n} +2\triangle x_{iI}\triangle x_{iQ}\Lambda \Bigg]  \Bigg \rvert  
\end{displaymath}

\begin{displaymath}
=\frac{1}{2\sqrt{n}}  \min_{\triangle\mathbf{x}\neq \mathbf{0}}  \Bigg\lvert  \mathrm{det}\left[ \begin{array}{ccc}
\lambda_i & \ldots & 0 \\
 \vdots & \ddots & \vdots \\
 0 & \ldots & \lambda_i \end{array} \right]  \Bigg \rvert \\
\end{displaymath}
\hspace{1.5cm} where $\lambda_i  = \sum_{i=1}^{k} {\left(\triangle x_{iI}\pm\triangle x_{iQ}\right)}^{2}$.
\begin{displaymath}
=\frac{1}{2\sqrt{n}}  \min_{\triangle\mathbf{x} \,\neq\, \mathbf{0}}  \Bigg\lvert  \prod_{j=1}^{n}  \Bigg[\sum_{i=1}^{k} {\left(\triangle x_{iI}+{\left(-1\right)}^{s_{j}}\triangle x_{iQ}\right)}^{2} \Bigg]  \Bigg \rvert  
\end{displaymath}
where, $s_{i} \in \{ 0,1 \} $ depending on the eigenvalues of $A_{1Q}$. Now every term in the inner summation is $\geqslant 0$. Hence the minimum of $\mathbf{DP}(S)$ is attained when all $\triangle x_{i}$ except one is zero, leading to

{\footnotesize
\begin{equation}
\label{codgain}
\mathbf{DP}(S)=\frac{1}{2\sqrt{n}}  \min_{\triangle x_{i} \, \neq \, 0} \Bigg \lvert {\left(\triangle x_{iI}+\triangle x_{iQ}\right)}^{2m}{\left(\triangle x_{iI}-\triangle x_{iQ}\right)}^{2n-2m} \Bigg \rvert 
\end{equation} 
}
\noindent
where $A_{1Q}$ has $m$ number of $+1$s and the remaining $n-m$ number of $-1$s as eigenvalues. As can easily be seen, this code is not full diversity in general, for if $\triangle x_{i} \, \neq \, 0 $ but, $\triangle x_{iI}= \pm \triangle x_{iQ}$,  then $\mathbf{DP}(S)=0$. This proves the following theorem giving a  set of necessary and sufficient conditions for a code CUW-SSD code $S$ to have full-diversity.
\begin{theorem}
\label{fullrankcond}
Let $S$ given in \eqref{ldcuwnssd} be a CUW-SSD code with the variables $x_i, ~~ i=1,2,\cdots,K$ taking values from a complex signal set $\mathbb S$. Also, let
\begin{equation*}
\label{diffsigset}
\Delta {\mathbb S} = \left\{ a-b | a,b \in {\mathbb S} \right\}  
\end{equation*}
be the difference signal set of $\mathbb S$. Then, $S$ will have full-diversity if and only if the difference signal set $\Delta {\mathbb S}$ does not have any point on the lines that are at $\pm 45 $ degrees in the complex plane apart from the origin. 
\end{theorem}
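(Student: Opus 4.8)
The plan is to read the characterization straight off the coding-gain expression \eqref{codgain}, which has already been derived in the discussion preceding the statement. Since full diversity of a linear STBC is equivalent to $\det\big[(S-\widetilde{S})^{H}(S-\widetilde{S})\big]\neq 0$ for every pair of distinct codewords, and this determinant is exactly the quantity minimized in $\mathbf{DP}(S)$, the theorem reduces to the clean equivalence: the code has full diversity if and only if $\mathbf{DP}(S)>0$, i.e. if and only if no admissible nonzero difference vector makes the determinant vanish. Because the minimization in \eqref{div_prod1} was shown to be attained at a single active variable (each summand being nonnegative, so switching on extra variables can only enlarge both factors), it suffices to examine a single difference coordinate $\triangle x_i = x_i-\widetilde{x}_i \in \Delta{\mathbb S}$ and the scalar $(\triangle x_{iI}+\triangle x_{iQ})^{2m}(\triangle x_{iI}-\triangle x_{iQ})^{2n-2m}$ of \eqref{codgain}.

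The key technical point I would isolate first is that $0<m<n$, where $m$ is the multiplicity of the eigenvalue $+1$ of the discriminant $A_{1Q}$. Indeed $A_{1Q}$ is both unitary and Hermitian, so its eigenvalues are all $\pm 1$ and $A_{1Q}$ is diagonalizable; if every eigenvalue were $+1$ (resp. $-1$) then $A_{1Q}=I_n$ (resp. $-I_n$), i.e. $A_{1Q}=\pm A_{1I}$, contradicting the no-pathology requirement \eqref{nopathology}. Hence both exponents $2m$ and $2n-2m$ are strictly positive. This is exactly what makes the eventual equivalence symmetric in the two diagonal lines: each of the two factors is then a genuine obstruction to full rank.

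With $0<m<n$ in hand, the product $(\triangle x_{iI}+\triangle x_{iQ})^{2m}(\triangle x_{iI}-\triangle x_{iQ})^{2n-2m}$ vanishes if and only if $\triangle x_{iI}+\triangle x_{iQ}=0$ or $\triangle x_{iI}-\triangle x_{iQ}=0$, that is $\triangle x_{iI}=\mp\triangle x_{iQ}$. Geometrically the complex number $\triangle x_i=\triangle x_{iI}+\mathbf{j}\triangle x_{iQ}$ then lies on one of the two lines through the origin inclined at $\pm 45^{\circ}$, and $\triangle x_i\neq 0$ rules out the origin itself. Collecting the two directions, $\mathbf{DP}(S)=0$ occurs exactly when some nonzero $\triangle x_i\in\Delta{\mathbb S}$ sits on one of these lines, so $\mathbf{DP}(S)>0$ (full diversity) holds precisely when $\Delta{\mathbb S}$ has no point other than the origin on the $\pm 45^{\circ}$ lines, which is the assertion. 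For the necessity direction I would explicitly exhibit the offending codeword difference by placing the bad $\Delta{\mathbb S}$-point in a single coordinate and zero in all others (legitimate since $0\in\Delta{\mathbb S}$), forcing one factor to zero. The only place demanding care, and the main obstacle, is the claim $0<m<n$: without it the characterization would collapse to a single line, so the reduction through \eqref{nopathology} is what keeps the ``if and only if'' tight.
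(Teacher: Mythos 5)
Your proof is correct and takes essentially the same route as the paper: the paper's argument is precisely the derivation of the coding-gain expression \eqref{codgain} preceding the theorem, together with the single-active-variable reduction and the observation that the product $\left(\triangle x_{iI}+\triangle x_{iQ}\right)^{2m}\left(\triangle x_{iI}-\triangle x_{iQ}\right)^{2n-2m}$ vanishes exactly when $\triangle x_{iI}=\pm\triangle x_{iQ}$. Your explicit check that $0<m<n$ (via unitarity and Hermiticity of $A_{1Q}$ and the no-pathology condition \eqref{nopathology}) is a detail the paper leaves implicit, and you are right that it is what makes the ``both lines'' characterization an exact equivalence rather than collapsing to a single line.
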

From the expression for the diversity product \eqref{codgain} we see that the coding gain depends not only on the signal set from which the variables take values, it depends also on the discriminant $A_{1Q}$ of the code $S$ via $m$. So, the problem of maximizing the coding gain involves the proper choice of the discriminant for the code as well as the signal set. If the discriminant is chosen such that it is  traceless (i.e., it has trace equal to zero or equivalently it has the same number of +1 and -1's as eigenvalues), then $m=n/2$ and \eqref{codgain} reduces to 

\begin{equation}
\label{codgaintraceless}
\mathbf{DP}(S)=\frac{1}{2\sqrt{n}}  \min_{\triangle x_{i}  \neq  0} \Bigg \lvert {\left(\triangle x_{iI}^2   -\triangle x_{iQ}^2\right)}^n  \Bigg \rvert
\end{equation}
which does not depend on the discriminant of the code. 

Notice that with the traceless condition, we have the discriminant to be a traceless, unitary, anti-Hermitian matrix commuting with all $A_{iI},~~i=1,2,\cdots,K$. We conjecture the following:

\noindent
{\bf Conjecture:} For a given signal set the diversity product expression \eqref{codgain} is maximum when $2m=n$, i.e., when the discriminant of the code is traceless. 

If this conjecture is true, we are left with only the problem of finding the signal set $\mathbf{S}$ such that the $\mathbf{DP}$ is maximized.
\subsection{Diversity product calculations}
In this subsection we will show that it is possible to achieve the same diversity product as that of MDC-QOD codes described in \cite{WWX1} through our code, for both rectangular and square-derived QAM constellations. Towards this end, let us first consider the rectangular case. Say, $y_i=y_{iI}+jy_{iQ}\in \mathcal{A}_1, \forall i$. Let us form the complex symbols, $x_i=x_{iI}+jx_{iQ}, 1\leq i \leq K$ in the following way,
\begin{equation*}\left[\begin{array}{c}
x_{iI}\\
x_{iQ}
\end{array}\right]
=T^{-1}\left[\begin{array}{c}
y_{iI}\\
y_{iQ}
\end{array}\right] \forall i
\end{equation*}
where,
\begin{equation*}
T=\left[\begin{array}{cc}
\frac{1}{\sqrt{2}} & \frac{1}{\sqrt{2}} \\
\frac{1}{\sqrt{2}} & -\frac{1}{\sqrt{2}}
\end{array}\right]
\end{equation*}
and construct CUW-SSD code with these variables. Then using \eqref{codgaintraceless} it can be shown that the diversity product of our code is dependent on the CPD (Co-ordinate Product Distance) of $\mathcal{A}_1$, i.e.,
\begin{equation*}
DP=\frac{1}{2 \sqrt{n}} \min_{\triangle y_i\neq 0}{\lvert 2\triangle y_{iI}\triangle y_{iQ}\rvert}^{\frac{1}{2}}.
\end{equation*}
It can be further shown that this is exactly equal to the diversity product of a \textit{normalized} (the codeword is multiplied by an appropriate constant so that the total transmitted power is $N_t^2$, where $N_t$ is the number of transmit antennas) CIOD code whose variables takes their values from $\mathcal{A}_1$. Note that $T$ is an unitary matrix. Hence the total transmitted power per codeword is same. From the above discussion we can see that if we are going to use a rectangular QAM constellation say, $\mathcal{A}_0$, then we need to find a linear transformation matrix  $U$, such that the transformed constellation $\mathcal{A}_1$ in Fig \ref{figct} have maximum CPD. Now if we form the constellation $\mathcal{A}_2$ as shown in Fig \ref{figct} and allow our code variables to take value from this constellation then our code will be achieving the same diversity product as a normalized CIOD can achieve using $\mathcal{A}_1$. 
Theorem $6$ in \cite{WWX1} gives the linear transformation matrix $U$ we need. We illustrate the method of obtaining $U$ when one uses the rectangular constellation

\begin{eqnarray*}
\mathcal{A}_0 & = & \Big\{(\frac{n_1 d}{2}+j\frac{n_2 d}{2}):n_i\in N_i~ \textrm{for}~ i=1,2\Big\}~~  \textrm{where} \\
N_i & \triangleq & \Big\{-(2N_i^{\prime}-1),-(2N_i^{prime}-3) \cdots \\
 & &    -1,1, \cdots (2N_i^{\prime}-3),(2N_i^{\prime}-1)\Big\}
\end{eqnarray*}
where $N_i^{\prime}$ are positive integers and $d$ is a real positive constant that is used to adjust total energy. Now if $\varepsilon_1=\frac{2N_1^2-1}{2(2N_1^2+2N_2^2-1)}$, $\varepsilon_2=\frac{2N_2^2-1}{2(2N_1^2+2N_2^2-1)}$, $\alpha=\tan^{-1}\left(\frac{1}{\sqrt{\varepsilon_1 \varepsilon_2}}\right)$, $\theta_1=\tan^{-1}\left(\frac{\sqrt{5}-1}{2}\sqrt{\frac{\varepsilon_1}{\varepsilon_2}}\right)$ and $\theta_2=(\alpha-\theta_1)$ then $U$ is given by,

\begin{equation*}
U=\left(\begin{array}{cc}
\frac{\cos(\theta_1)}{\sqrt{2\varepsilon_1}} & \frac{\sin(\theta_1)}{\sqrt{2\varepsilon_2}}\\
\frac{-\sin(\theta_2)}{\sqrt{2\varepsilon_1}} & \frac{\cos(\theta_2)}{\sqrt{2\varepsilon_2}}
\end{array}\right).
\end{equation*}
\begin{figure}[hbt]
\centering
\includegraphics[width=6.0cm,height=6.0cm]{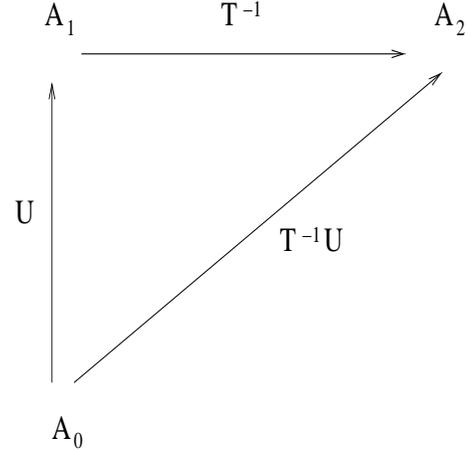}
\caption{Linear transformations of constellations }
\label{figct}
\end{figure}

Following the above mentioned method we have calculated diversity products of our code for $8(4\times 2)$ and $32(8\times 4)$ rectangular constellations. For square or square-derived constellations we follow the same procedure as explained above. The only difference is now we use the linear transformation matrix $U$ given above with $N_1=N_2$, where $N_1^2$ is the nearest even square that is greater than or equal to the size of the constellation. In Table 1 below we compare the diversity product of our code (CUW-SSD) with that of MDC-QOD for various constellations. All these calculations were done assuming total constellation energy equal to 1. \\

TABLE I : Diversity Product comparison.\\
\begin{tabular}[c]{|c|c|c|c|}
\hline
Constellation: & $4$-QAM &  $8$-QAM & $32$-QAM  \\
Square derived  &  &  &    \\
\hline
 MDC-QOD & $.1672$ & $.0757$ & $.0187$  \\
\hline
CUW-SSD & $.1672$ & $.0757$ & $.0187$  \\
\hline \hline
Constellation: & $4$-QAM &  $8$-QAM & $32$-QAM  \\
Rectangular QAM  &  &  &    \\
\hline
 MDC-QOD & $.1672$ & $.0699$ & $.0167$  \\
\hline
CUW-SSD & $.1672$ & $.0699$ & $.0167$  \\
\hline
\end{tabular}
\vspace{1cm}

We see that diversity product of our codes matches exactly with those of comparable MDC-QOD codes. Hence it is expected that the error performance will also be same. This has been verified through simulation results given in the following subsection.

\subsection{Simulation Results for our SSD codes}
\label{subsec6}
In this subsection we provide some simulation results. The simulations have been carried out for one receive antenna only. We have compared the error performance of our code with the best known SSD code in the literature\cite{WWX1}. We performed simulations for 2,3 and 5 bits per channel uses respectively. For 3-bits per channel use and 5-bits per channel uses we have used both rectangular and square derived QAM constellations. We derive a "square derived $q$-QAM" in the following way: We take a nearest square number $p$ which is greater than $q$, and from the $p$-QAM delete the larger energy $p-q$ points and then translate the resulting constellation so that its CG is at the origin. In Fig. \ref{Sq-derived8qam}  and Fig. \ref{Sq-derived32qam} we have shown square derived $8$-QAM and square derived $32$-QAM constellations respectively. In Fig. \ref{ssd_2bpcu} we compared the performance of our code with MDC-QOD at 2 bits per channel use (We used $4$-QAM) and it matches with the theoretical results suggested by the fact that the diversity product is same for both the codes as shown in TABLE I. Now for spectral efficiencies of $3$-bits per channel use and $5$-bits per channel use we see from the Table I that both for rectangular QAM and square derived QAMs the diversity product of our code is same to that of MDC-QOD codes \cite{WWX1}. Hence we expect that the error performance of both the codes should be same. We see in Fig. \ref{ssd_3bpcu} and Fig. \ref{ssd_5bpcu} respectively that this is indeed the case.


\begin{figure}[hbt]
\centering
\includegraphics[width=6.0cm,height=6.0cm]{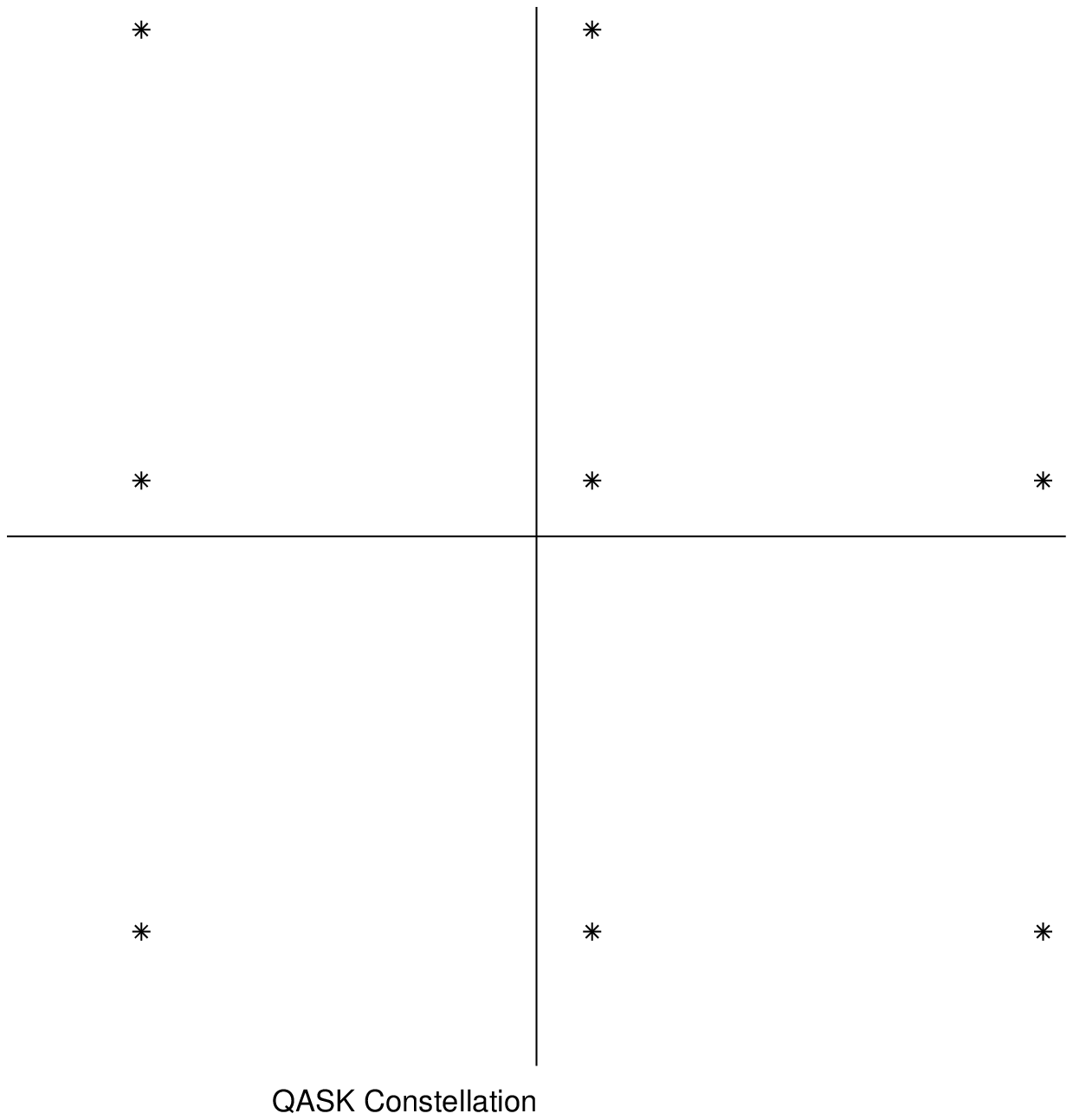}
\caption{Square derived 8-QAM constellation. }
\label{Sq-derived8qam}
\end{figure}

\begin{figure}[hbt]
\centering
\includegraphics[width=7.0cm,height=7.0cm]{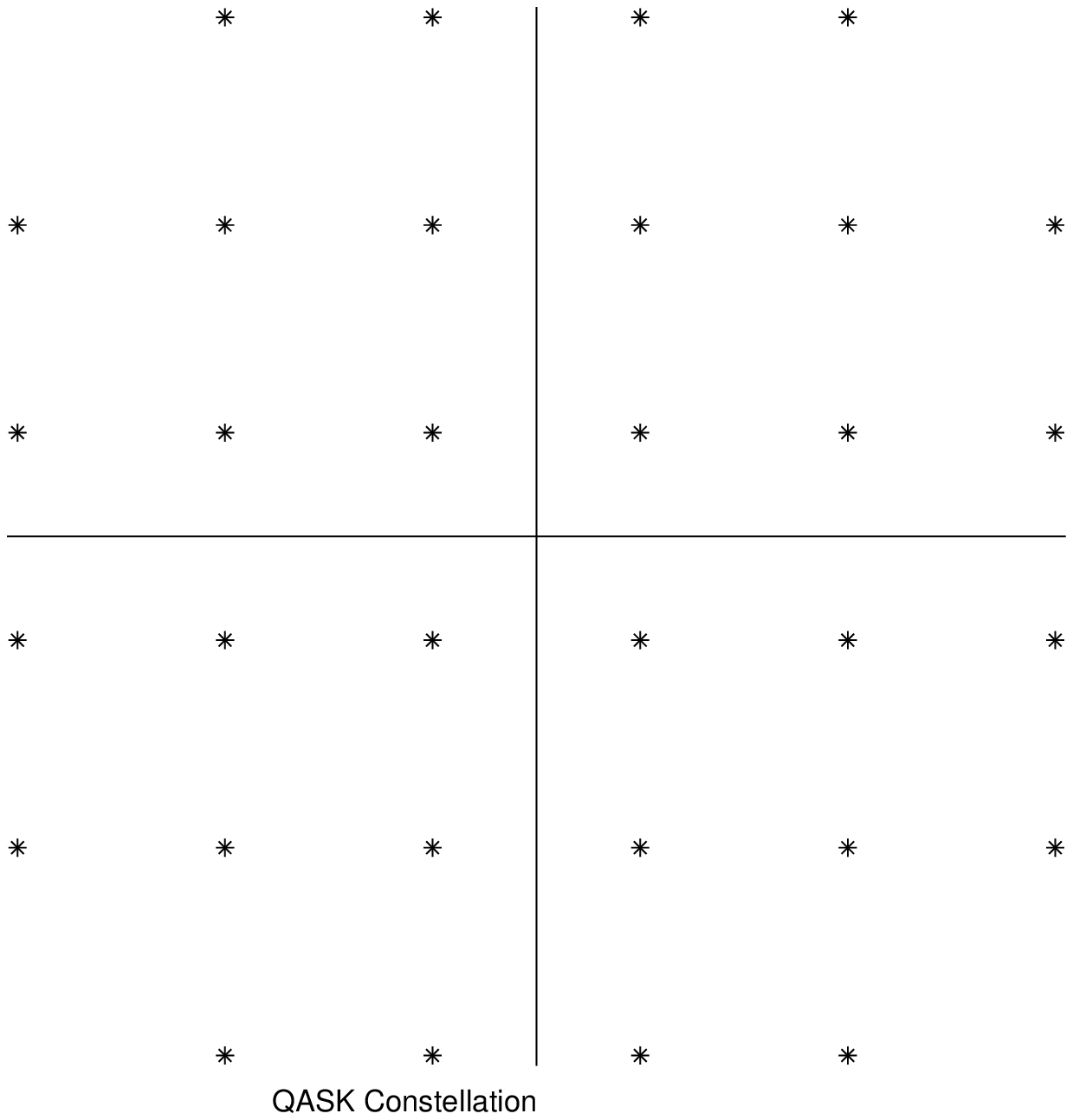}
\caption{Square derived 8-QAM constellation. }
\label{Sq-derived32qam}
\end{figure}

\begin{figure}[hbt]
\centering
\includegraphics[width=8.0cm,height=8.0cm]{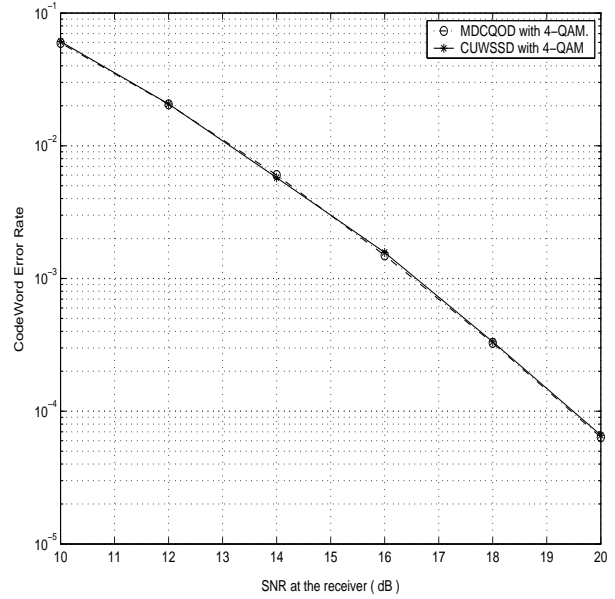}
\caption{Comparison of CUW-SSD code's performance with MDC-QOD code at 2 Bits per Channel use. }
\label{ssd_2bpcu}
\end{figure}

\begin{figure}[[h]
\centering
\includegraphics[width=7.0cm,height=7.0cm]{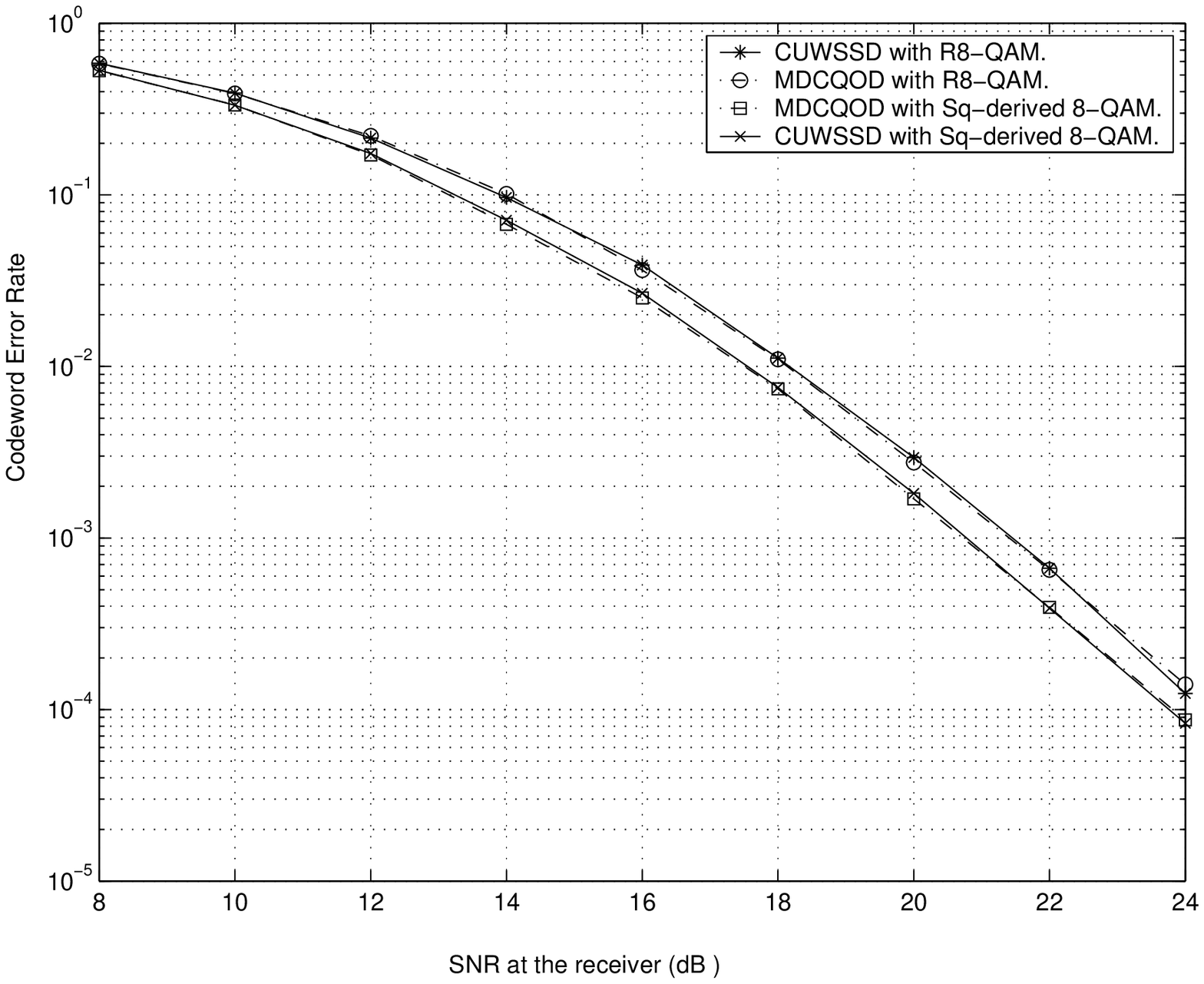}
\caption{Comparison of CUW-SSD code's performance with MDC-QOD code at 3 Bits per Channel use with Square-derived and Rectangular 8-QAM constellation. }
\label{ssd_3bpcu}
\end{figure}

\begin{figure}[hbt]
\centering
\includegraphics[width=8.0cm,height=8.0cm]{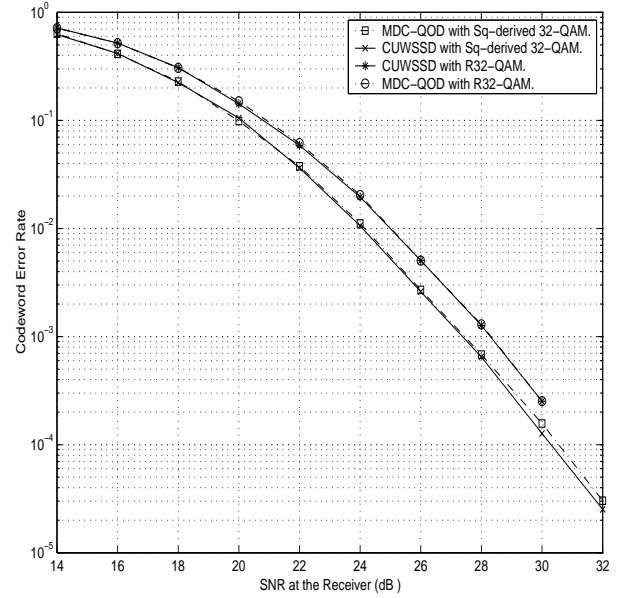}
\caption{Comparison of CUW-SSD code's performance with MDC-QOD code at 5 Bits per Channel use with Square-derived and Rectangular 8-QAM constellation. }
\label{ssd_5bpcu}
\end{figure}
                                                                                
\section{Non-Unitary Weight- SSD codes from Clifford algebras}
\label{sec6}
In this section we obtain a class of non-unitary weight SSD codes from CUW-SSD codes by employing linear transformations on the weight matrices. 

\begin{definition}
For a normalized UW-SSD code 
\begin{equation*} 
\label{uwssd} 
S=\sum_{i=1}^{K}\left(x_{iI}A_{iI}+x_{iQ}A_{iQ}\right)
\end{equation*}
and a pair of non-zero real numbers $\alpha, \beta$, define the {\it Transformed Non-Unitary} code to be 

\begin{equation} 
\label{uwssdT} 
S_T=\sum_{i=1}^{k}\left(x_{iI}T_{iI}+x_{iQ}T_{iQ}\right)
\end{equation}
where  
\begin{eqnarray}
\label{transform}
\begin{array}{rl}
T_{iI} & = \alpha A_{iI}+\beta A_{iQ} \\
T_{iQ} & = \alpha A_{iI}-\beta A_{iQ}.
\end{array}
\end{eqnarray} 
\end{definition}

From
\begin{eqnarray*}
(\alpha I_n +\beta A_{1Q})^H(\alpha I_n + \beta A_{1Q}) \\
=(\alpha I_n +\beta A_{1Q})(\alpha I_n + \beta A_{1Q}) \\
=\alpha^2I_n +\beta^2 A_{1Q}^2+2\alpha \beta A_{1Q} \\
=\alpha^2I_n +\beta^2 I_n+2\alpha \beta A_{1Q} \\
=(\alpha^2 +\beta^2) I_n+2\alpha \beta A_{1Q}
\end{eqnarray*}
it follows that $\alpha A_{1I}+\beta A_{1Q}$ is not unitary unless $\alpha= \beta =0$ which ensures that $S_T$ is not a UW-code. 
\begin{theorem}
\label{nussdthm}
The Transformed Non-unitary code given by \eqref{uwssdT} is SSD.
\end{theorem}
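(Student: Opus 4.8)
The plan is to verify directly that the weight matrices $\{T_{iI}, T_{iQ}\}$ of $S_T$ satisfy the three conditions \eqref{A2}, since by the classification given in the introduction this is exactly what is required for a linear STBC to be SSD. Because the original code $S$ is a normalized UW-SSD code, its weight matrices $\{A_{iI}, A_{iQ}\}$ already satisfy all three equations \eqref{A21}, \eqref{A22}, \eqref{A23} for $1 \le i \neq j \le K$; in addition, interchanging the roles of $i$ and $j$ in \eqref{A21} yields the companion identity $A_{iQ}^{H}A_{jI} + A_{jI}^{H}A_{iQ} = 0$, which I will also invoke.

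First I would substitute $T_{iI} = \alpha A_{iI} + \beta A_{iQ}$ and $T_{jI} = \alpha A_{jI} + \beta A_{jQ}$ into the left-hand side of \eqref{A22} written for the $T$'s, expand the Hermitian products, and collect terms by their coefficients. The $\alpha^2$-group is $A_{iI}^{H}A_{jI} + A_{jI}^{H}A_{iI}$, the $\beta^2$-group is $A_{iQ}^{H}A_{jQ} + A_{jQ}^{H}A_{iQ}$, and the two $\alpha\beta$-groups are $A_{iI}^{H}A_{jQ} + A_{jQ}^{H}A_{iI}$ and $A_{iQ}^{H}A_{jI} + A_{jI}^{H}A_{iQ}$. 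Each of these four groups is one of the SSD identities for the original code and hence vanishes, so $T_{iI}^{H}T_{jI} + T_{jI}^{H}T_{iI} = 0$. The verification of \eqref{A23} for the $T$'s is identical except that the sign of $\beta$ is flipped in $T_{iQ} = \alpha A_{iI} - \beta A_{iQ}$, which only reverses the signs of the two $\alpha\beta$-groups; since those are already zero the identity persists.

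For the mixed condition \eqref{A21}, I would expand $T_{iI}^{H}T_{jQ} + T_{jQ}^{H}T_{iI}$ in the same way. Here the $\alpha^2$- and $\beta^2$-terms are again the diagonal SSD identities \eqref{A22} and \eqref{A23} (the latter with an overall minus sign), while the surviving off-diagonal contributions regroup precisely as $-\alpha\beta(A_{iI}^{H}A_{jQ} + A_{jQ}^{H}A_{iI}) + \alpha\beta(A_{iQ}^{H}A_{jI} + A_{jI}^{H}A_{iQ})$, both parentheses vanishing by \eqref{A21} and its $i\leftrightarrow j$ companion. Thus all three conditions of \eqref{A2} hold for $S_T$, so $S_T$ is SSD.

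The computation is entirely routine; the only point needing a little care is the bookkeeping of the four cross terms and the observation that the two off-diagonal groups arising in the mixed condition are exactly \eqref{A21} and its symmetric variant. The conceptual content worth emphasizing is that each defining identity for the $T$'s is a fixed real-linear combination of the three SSD identities for the $A$'s, so no use whatsoever is made of \eqref{A1} or \eqref{A3}. This is precisely why the transformation \eqref{transform} can destroy unitarity, as already shown in the display preceding the theorem, while leaving single-symbol decodability intact.
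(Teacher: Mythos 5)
Your proof is correct, but it takes a genuinely different route from the paper's. You argue purely from the bilinearity of the SSD conditions: expanding each $T$-identity, the $\alpha^2$-, $\beta^2$- and $\alpha\beta$-groups are precisely the three identities of \eqref{A2} for the original weight matrices (together with the $i\leftrightarrow j$ companion of \eqref{A21}), so the only hypothesis you use is that the original code satisfies \eqref{A2}. The paper instead exploits the Clifford structure \eqref{suffcondssd} that the underlying code carries in the context of Section \ref{sec6}: it observes that $T_{1I},T_{1Q}$ are Hermitian and commute with all the other weight matrices, while $T_{iI},T_{iQ}$ for $i\geq 2$ are anti-Hermitian and mutually anticommuting, and then verifies \eqref{A2} by the case split ($i=1$ or $j=1$) versus ($2\leq i\neq j$), displayed in \eqref{casei} and \eqref{caseii}. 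The trade-off is this: your argument is more elementary and strictly more general --- it applies verbatim to any code whose weight matrices satisfy \eqref{A2}, unitary or not, and in particular covers the literal wording of the definition, which only assumes a normalized UW-SSD code; the paper's proof, by contrast, genuinely needs $A_{1Q}$ to be Hermitian and the commutation/anticommutation relations of \eqref{suffcondssd}, which hold for CUW-SSD codes but not for an arbitrary normalized UW-SSD code. What the paper's structural approach buys is the Hermitian/anti-Hermitian bookkeeping of the $T$'s themselves, which is reused immediately afterwards: in Theorem \ref{nupssdthm} the same-index products $T_{iI}^HT_{iQ}+T_{iQ}^HT_{iI}$ are evaluated to $2(\alpha^2-\beta^2)I_n$ using exactly those commutation facts, an evaluation your expansion cannot produce since \eqref{A2} says nothing about equal indices. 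Your closing observation --- that no use is made of \eqref{A1} or \eqref{A3}, which is why the transform can destroy unitarity while preserving single-symbol decodability --- is accurate and is a cleaner explanation of the phenomenon than the paper offers.
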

\begin{proof}
Observe that $T_{1I}$ and $T_{1Q}$ are Hermitian and $T_{iI},~ T_{iQ}, ~~ i=2,3,\cdots,K$ are anti-Hermitian. By construction $A_{1Q}$ and $A_{1I}=I$ commute with all $A_{iI}, A_{iQ}, ~~i=2,3,\cdots,K$ and hence $T_{1I}$ and $T_{1Q}$ commute with all $T_{iI}, ~T_{iQ}, ~~ i=2,3,\cdots,K$. Now, given
\begin{eqnarray*}
\label{givenA2}
\left. \begin{array}{rl}
A_{iI}^{H}A_{jQ} + A_{jQ}^{H}A_{iI} & =0 \\
A_{iI}^{H}A_{jI} + A_{jI}^{H}A_{iI} & =0 \\
A_{iQ}^{H}A_{jQ} + A_{jQ}^{H}A_{iQ} & =0
\end{array} \right\}  ~~~  1 \leq i \neq j \leq K,
\end{eqnarray*}
we need to prove that
\begin{eqnarray}
\label{toproveA2}
\left. \begin{array}{rl}
T_{iI}^{H}T_{jQ} + T_{jQ}^{H}T_{iI} & =0 \\
T_{iI}^{H}T_{jI} + T_{jI}^{H}T_{iI} & =0 \\
T_{iQ}^{H}T_{jQ} + T_{jQ}^{H}T_{iQ} & =0
\end{array} \right\}  ~~~  1 \leq i \neq j \leq K.
\end{eqnarray}

We prove below only the second equation of \eqref{toproveA2} and the proof for the remaining two equations are similar.\\
{\bf Case(i) $i=1$ or $j=1$:} Let $i=1$ and $1 \leq j \leq K$. Then \eqref{casei} at the top of the next page  shows that $T_{1I}^{H}T_{jI} + T_{jI}^{H}T_{1I} =0$.
\begin{figure*}
\begin{equation}
\label{casei}
\begin{array}{l}
T_{1I}^{H}T_{jI} + T_{jI}^{H}T_{1I} \\
 = (\alpha A_{1I}+\beta A_{1Q})^H(\alpha A_{jI}+\beta A_{jQ})+(\alpha A_{jI}+\beta A_{jQ})^H(\alpha A_{1I}+\beta A_{1Q}) \\ 
  =  (\alpha A_{1I}+\beta A_{1Q})(\alpha A_{jI}+\beta A_{jQ})-(\alpha A_{jI}+\beta A_{jQ})(\alpha A_{1I}+\beta A_{1Q}) \\
  = 0 ~~~ \mbox{ since } (\alpha A_{1I}+\beta A_{1Q}) \mbox { commutes with all } (\alpha A_{jI}-\beta A_{jQ}).
\end{array} 
\end{equation} \hrule
\end{figure*}
\noindent
{\bf Case(ii) $1 \leq i \neq j \leq K$:} For this case \eqref{caseii} at the top of the next page shows that $T_{iI}^{H}T_{jI} + T_{jI}^{H}T_{iI}=0$.
\begin{figure*}
\begin{equation}
\label{caseii}
\begin{array}{l}
T_{iI}^{H}T_{jI} + T_{jI}^{H}T_{iI} \\ 
= (\alpha A_{iI}+\beta A_{iQ})^H(\alpha A_{jI}+\beta A_{jQ})+(\alpha A_{jI}+\beta A_{jQ})^H(\alpha A_{iI}+\beta A_{iQ}) \\
=-\left[ (\alpha A_{iI}+\beta A_{iQ})(\alpha A_{jI}+\beta A_{jQ})+(\alpha A_{jI}+\beta A_{jQ})(\alpha A_{iI}+\beta A_{iQ})\right] \\
=-\left[ (\alpha A_{iI}+\beta A_{iQ})\alpha A_{jI}+ (\alpha A_{iI}+\beta A_{iQ}) \beta A_{jQ}+(\alpha A_{jI}+\beta A_{jQ})(\alpha A_{iI}+\beta A_{iQ})\right] \\

=-\left[-\alpha A_{jI} (\alpha A_{iI}+\beta A_{iQ})- \beta A_{jQ} (\alpha A_{iI}+\beta A_{iQ})+(\alpha A_{jI}+\beta A_{jQ})(\alpha A_{iI}+\beta A_{iQ})\right] \\

=-\left[-(\alpha A_{jI}+ \beta A_{jQ} ) (\alpha A_{iI}+\beta A_{iQ})+(\alpha A_{jI}+\beta A_{jQ})(\alpha A_{iI}+\beta A_{iQ})\right] \\
=0.
\end{array}
\end{equation} \hrule
\end{figure*}
\end{proof}
The following theorem obtains a necessary and sufficient condition for a transformed NU-SSD code to be a PSSD code.
\begin{theorem}
\label{nupssdthm}
The Transformed Non-Unitary code given by \eqref{uwssdT} is a PSSD code iff $\alpha \neq \pm \beta$ in \eqref{transform}. Equivalently, the Transformed Non-Unitary code of \eqref{uwssdT} is NU-COD iff $\alpha = \pm \beta$ in \eqref{transform}.
\end{theorem}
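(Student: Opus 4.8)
The plan is to observe that, by Theorem~\ref{nussdthm}, the transformed code $S_T$ already satisfies \eqref{A2}, and by the unitarity computation displayed just before the theorem statement its weight matrices are non-unitary; hence $S_T$ is automatically an NU-SSD code. The classification into PSSD versus NU-COD therefore reduces to a single question: whether the weight matrices $T_{iI}, T_{iQ}$ satisfy the condition \eqref{A3}, i.e.\ whether $T_{iI}^{H}T_{iQ} + T_{iQ}^{H}T_{iI} = 0$ for every $i$. The whole task is to compute this expression explicitly in terms of $\alpha, \beta$ and the normalized weight matrices $A_{iI}, A_{iQ}$.

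First I would substitute the definitions \eqref{transform} of $T_{iI}$ and $T_{iQ}$ and expand $T_{iI}^{H}T_{iQ} + T_{iQ}^{H}T_{iI}$, keeping careful track of the Hermitian structure. The facts I would exploit are those of the normalized code: $A_{1I}=I_n$ and $A_{1Q}$ are Hermitian with $A_{1Q}^2=I_n$ (it is unitary and Hermitian), whereas for $i \geq 2$ the matrices $A_{iI}$ and $A_{iQ}$ are anti-Hermitian and unitary, so that $A_{iI}^2 = A_{iQ}^2 = -I_n$. I would treat the two cases $i=1$ and $i\geq 2$ separately. In the case $i=1$, both $T_{1I}$ and $T_{1Q}$ are Hermitian, and a direct expansion gives $2\alpha^2 I_n - 2\beta^2 A_{1Q}^2 = 2(\alpha^2-\beta^2)I_n$. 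In the case $i\geq 2$, both $T_{iI}$ and $T_{iQ}$ are anti-Hermitian, so the expression equals $-(T_{iI}T_{iQ}+T_{iQ}T_{iI})$; upon expanding, the cross terms $\pm\alpha\beta A_{iI}A_{iQ}$ and $\pm\alpha\beta A_{iQ}A_{iI}$ cancel, leaving $-\big(2\alpha^2 A_{iI}^2 - 2\beta^2 A_{iQ}^2\big) = -\big(-2\alpha^2 + 2\beta^2\big)I_n = 2(\alpha^2-\beta^2)I_n$ once more.

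The point I would highlight is that both cases collapse to the same scalar multiple of the identity, $2(\alpha^2-\beta^2)I_n$, despite the opposite Hermitian parities; this is the mildly surprising step and the one where the anti-Hermitian squares $A_{iI}^2 = A_{iQ}^2 = -I_n$ must be used to flip a sign so the result agrees with the $i=1$ computation. This uniformity across $i$ is what makes a clean iff statement possible. From $T_{iI}^{H}T_{iQ} + T_{iQ}^{H}T_{iI} = 2(\alpha^2-\beta^2)I_n$ the conclusion is immediate: condition \eqref{A3} holds for all $i$ if and only if $\alpha^2 = \beta^2$, i.e.\ $\alpha = \pm\beta$. Since an NU-SSD code is an NU-COD precisely when \eqref{A3} holds and is a PSSD code precisely when \eqref{A3} fails, this yields both claimed equivalences simultaneously.
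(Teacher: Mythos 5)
Your proposal is correct and follows essentially the same route as the paper: the paper's proof also reduces the question to checking \eqref{A3} for the transformed weight matrices, splits into the cases $i=1$ and $i\geq 2$, and expands $T_{iI}^{H}T_{iQ}+T_{iQ}^{H}T_{iI}$ to obtain $2(\alpha^2-\beta^2)I_n$ in both cases, using $A_{1Q}^2=I_n$ and $A_{iI}^2=A_{iQ}^2=-I_n$ exactly as you do. The only cosmetic difference is that you state the reduction to \eqref{A3} (via Theorem \ref{nussdthm} and the non-unitarity computation) more explicitly than the paper does, which is a welcome clarification rather than a deviation.
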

\begin{proof}
We need to show that $\alpha = \pm \beta$ iff
\begin{equation*}
T_{iI}^HT_{iQ}+T_{iQ}^HT_{iI}=0, ~~~~ i=1,2,\cdots,K.
\end{equation*}
{\bf Case (i)  $i=1$:} In this case,
{\small
\begin{eqnarray*}
\begin{array}{l}
T_{1I}^HT_{1Q}+T_{1Q}^HT_{1I} \\
=(\alpha I+\beta A_{1Q})^H(\alpha I-\beta A_{1Q})+(\alpha I-\beta A_{1Q})^H(\alpha I+\beta A_{1Q}) \\
=(\alpha I+\beta A_{1Q})(\alpha I-\beta A_{1Q})+(\alpha I-\beta A_{1Q})(\alpha I+\beta A_{1Q}) \\
=\alpha^2 I -\beta^2 A_{1Q}^2+\alpha^2 I -\beta^2 A_{1Q}^2 \\
=2(\alpha^2- \beta^2)I
\end{array}
\end{eqnarray*} 
}
which is zero iff $\alpha = \pm \beta$.\\
{\bf Case (ii)  $i \geq 2$:} In this case, $T_{iI}^HT_{iQ}+T_{iQ}^HT_{iI}$
\begin{eqnarray*}
\begin{array}{ll}
= & (\alpha A_{iI}+\beta A_{iQ})^H(\alpha A_{iI}-\beta A_{iQ})+  \\ 
  & (\alpha A_{iI}-\beta A_{iQ})^H(\alpha A_{iI}+\beta A_{iQ}) \\
= & -\left[ (\alpha A_{iI}+\beta A_{iQ})(\alpha A_{iI}-\beta A_{iQ})+\right. \\
  & \left. (\alpha A_{iI}-\beta A_{iQ})(\alpha A_{iI}+\beta A_{iQ}) \right] \\
= & -\left[\alpha^2 A_{iI}^2+\alpha \beta A_{iQ}A_{iI} -\alpha \beta A_{iI}A_{iQ}-\beta^2 A_{iQ}^2 +\right.\\ 
  & \left. \alpha^2 A_{iI}^2-\alpha \beta A_{iQ}A_{iI}+ \alpha \beta A_{iI}A_{iQ}+\beta^2 A_{iQ}^2   \right]\\
= & -2 \left[\alpha^2 A_{iI}^2 - \beta^2 A_{iQ}^2  \right] \\
= & 2(\alpha^2 - \beta^2)I
\end{array}
\end{eqnarray*}

which is zero iff $\alpha = \pm \beta$.
\end{proof}

\begin{definition}
The Non-Unitary SSD codes obtained from the $2^a$-CUW-SSD codes under the transform given by \eqref{transform} are called (i) Clifford NU-CODs and abbreviated as $a$-CNU-CODs, if $\alpha = \pm \beta$ and (ii) Clifford Proper SSD codes, abbreviated as CP-SSD codes, if $\alpha \neq \pm \beta$.
\end{definition}
%
%
\begin{example}
\label{4by4}
Consider the $4$-CUW-SSD code defined by the following weight matrices
\begin{eqnarray*}
\begin{array}{ccc}
A_{1I}=I_{2}\otimes I_{2}, & A_{2I}=I_{2}\otimes j{\sigma}_{3}, & A_{3I}=I_{2}\otimes{\sigma}_{1}, \\
 A_{4I}=I_{2}\otimes{\sigma}_{2} &  A_{1Q}={\sigma}_{3}\otimes I_{2},  & A_{2Q}=A_{1Q}A_{2I} \\
 A_{3Q}=A_{1Q}A_{3I}, & A_{4Q}=A_{1Q}A_{4I} & 
\end{array} 
\end{eqnarray*}
to obtain the code $\mathbf{S}\left(x_{1},x_{2},x_{3},x_{4}\right)$ given by \eqref{ciodfirstcut} given at the top of the next  page.
\begin{figure*}
{\small
\begin{equation}
\label{ciodfirstcut}
\left[\begin{array}{cccc}
x_{1I}+x_{1Q}+ j(x_{2I}+x_{2Q}) & x_{3I}+x_{3Q}+j(x_{4I}+x_{4Q}) & 0 & 0 \\
-x_{3I}-x_{3Q}+  j(x_{4I}+x_{4Q}) &  x_{1I}+x_{1Q}- j(x_{2I}+x_{2Q}) & 0 & 0 \\
0 & 0 & x_{1I}-x_{1Q}+ j(x_{2I}-x_{2Q})&  x_{3I}-x_{3Q}+ j(x_{4I}-x_{4Q}) \\
0 & 0 &  -x_{3I}+x_{3Q}+ j(x_{4I}-x_{4Q})  & x_{1I}-x_{1Q}-  j(x_{2I}-x_{2Q}) 
\end{array} \right]
\end{equation} \hrule
} 
\end{figure*}
Using the transform,
\begin{equation*} \left[\begin{array}{c}
x^{\prime}_{iI}\\x^{\prime}_{iQ} \end{array} \right]  =  \left[\begin{array}{cc} 1 & 1\\ 1 & -1 \end{array} \right] \left[ \begin{array}{c} x_{iI}\\x_{iQ} \end{array} \right] \end{equation*}
\noindent
we get the corresponding $2$-CNU-SSD code given by \eqref{ciod_1} shown at the top of the next page.
\begin{figure*}
\begin{equation} 
\label{ciod_1}
\mathbf{S}\left(x^{\prime}_{1},x^{\prime}_{2},x^{\prime}_{3},x^{\prime}_{4}\right)= \left[\begin{array}{cccc}
x^{\prime}_{1I}+jx^{\prime}_{2I} & x^{\prime}_{3I}+jx^{\prime}_{4I} & 0 & 0 \\
-x^{\prime}_{3I}+jx^{\prime}_{4I} & x^{\prime}_{1I}-j x^{\prime}_{2I} & 0 & 0 \\
0 & 0 & x^{\prime}_{1Q}+jx^{\prime}_{2Q}  & x^{\prime}_{3Q}+jx^{\prime}_{4Q} \\
0 & 0 & -x^{\prime}_{3I}+x^{\prime}_{4Q}  & x^{\prime}_{1Q}-x^{\prime}_{2Q}
\end{array}\right] 
\end{equation} \hrule
\end{figure*}
Note that the code in \eqref{ciod_1} is of the form,
\[\left[\begin{array}{cccc}
\widetilde{x}_{1} & \widetilde{x}_{2} & 0 & 0\\
-{\widetilde{x}}^{\ast}_{2} & {\widetilde{x}}^{\ast}_{1} & 0 & 0 \\
0 & 0 & \widetilde{x}_{3} & \widetilde{x}_{4} \\
0 & 0 & \widetilde{x}^{\ast}_{4} & \widetilde{x}^{\ast}_{3}
\end{array}\right] \]
\noindent
where $\widetilde{x}_{i}=x^{\prime}_{iI}+jx^{\prime}_{\left(i+1\right)I},\, \textrm{ for } \,i=1,2,$   and  $\widetilde{x}_{i}=x^{\prime}_{\left(i-2\right)Q}+jx^{\prime}_{\left(i-1\right)Q}, \textrm{ for } i=2,3$. 
\end{example}                                                                   

\begin{example}
\label{4by4cpssd}
Consider the UW-SSD code defined by the following weight matrices:
                                                                                
\begin{eqnarray*}
\begin{array}{ccc}
A_{1I}=I_{2}\otimes I_{2}, & A_{2I}=I_{2}\otimes j{\sigma}_{3}, & A_{3I}=I_{2}\otimes{\sigma}_{1}, \\
 A_{4I}=I_{2}\otimes{\sigma}_{2} &  A_{1Q}=j{\sigma}_{1}\otimes I_{2},  & A_{2Q}=j{\sigma}_{1}\otimes j{\sigma}_{3} \\
 A_{3Q}=j{\sigma}_{1}\otimes{\sigma}_{1} & A_{4Q}=j{\sigma}_{1}\otimes {\sigma}_{2} &
\end{array}
\end{eqnarray*}
                                                                                
The TNU-SSD code obtained using this UW code is $\mathbf{S}\left(x_{1},x_{2},x_{3},x_{4}\right)$ given by (\ref{ciodfirstcut}) shown at the top of this page.
\begin{figure*}
{\footnotesize
\begin{equation}
\label{ciodsecondcut}
\left[\begin{array}{cccc}
\alpha(x_{1I}+x_{1Q})+j\alpha(x_{2I}+x_{2Q}) & \alpha(x_{3I}+x_{3Q})+\alpha j(x_{4I}+x_{4Q}) & -\beta(x_{2I}-x_{2Q})+j\beta(x_{1I}-x_{1Q}) & -\beta(x_{4I}-x_{4Q})+j\beta(x_{3I}-x_{3Q}) \\
-\alpha(x_{3I}+x_{3Q})+j\alpha(x_{4I}+x_{4Q}) &  \alpha(x_{1I}+x_{1Q})-j\alpha(x_{2I}+x_{2Q}) & -\beta(x_{4I}-x_{4Q})-j\beta(x_{3I}-x_{3Q}) & \beta(x_{2I}-x_{2Q})+j\beta(x_{1I}-x_{1Q}) \\
\beta(x_{2I}-x_{2Q})-j\beta(x_{1I}-x_{1Q}) & \beta(x_{4I}-x_{4Q})-j\beta(x_{3I}-x_{3Q}) & \alpha(x_{1I}+x_{1Q})+j\alpha(x_{2I}+x_{2Q})&  \alpha(x_{3I}+x_{3Q})+j\alpha(x_{4I}+x_{4Q}) \\
\beta(x_{4I}-x_{4Q})+j\beta(x_{3I}-x_{3Q}) & -\beta(x_{2I}-x_{2Q})-j\beta(x_{1I}-x_{1Q}) &  -\alpha(x_{3I}+x_{3Q})+j\alpha(x_{4I}+x_{4Q})  & \alpha(x_{1I}+x_{1Q})-j\alpha(x_{2I}+x_{2Q})
\end{array} \right] 
\end{equation}  \hrule
}
\end{figure*}
\end{example}
\section{CIODs as a special case of TNU-SSD codes}
\label{sec7}
In this section we give a construction for $2^a\times 2^a$ TNU-SSD codes making use of reducible representations of real Clifford algebras $CA_{2a}$ generated by $2a$ generators. Then,  we show that we can obtain the class of CIODs from the TUN-SSD codes of this construction.
                                                                                
\noindent
{\bf Construction of $2^a\times 2^a$ CIODs :} First we find the irreducible representation of $CA_{2a-1}$. We know that the minimum dimension in which we can get such a representation is $2^{a-1}$. The $2a-1$ anti-Hermitian, anti-commuting $2^{a-1}\times 2^{a-1}$ matrices are explicitly shown below:
\begin{equation}
\label{wtmatricesciod}
\begin{array}{rl}
R(\gamma_{1}) &= j\sigma_3^{\otimes^{a-1}} \\
R(\gamma_{2}) &= I_2^{\otimes^{a-2}} \bigotimes \sigma_1 \\
R(\gamma_{3}) &= I_2^{\otimes^{a-2}} \bigotimes \sigma_2 \\
. & . \\
. & . \\
. & . \\
R(\gamma_{(2k)}) &= I_2^{\otimes^{a-k-1}} \bigotimes \sigma_1 \bigotimes \sigma_3^{\otimes^{k-1}} \\
R(\gamma_{(2k+1)}) &= I_2^{\otimes^{a-k-1}} \bigotimes \sigma_2 \bigotimes \sigma_3^{\otimes^{k-1}} \\
. & . \\
. & . \\
. & . \\
R(\gamma_{(2a-1)}) &=  \sigma_2 \bigotimes \sigma_3^{\otimes^{a-2}} \\
\vspace{0.5cm}
\end{array}
\end{equation}
where $\sigma_1, \sigma_2$ and $\sigma_3$ are given by \eqref{paulimatrices}, $A^{\otimes^{m}} = \underbrace{A\otimes A\otimes A \cdots \otimes A }_{m~~times  }, $
$\gamma_i, ~~ i=1,2,\cdots,(2a-1)$ are the generators of $CA_{2a-1}$ with $\gamma_0=1$ being the identity element and $R(\gamma_0)= I_2^{\otimes^{a-1}}$.\\

Now, define
\begin{equation}
\label{reducible}
A_{iI}=I_2 \otimes R(\gamma_{i-1}), ~~~~ 1 \leq i \leq 2a
\end{equation}
\begin{equation*}
A_{iQ}=A_{1Q}A_{iI}, ~~~~ 1 \leq i \leq 2a
\end{equation*}
where
\begin{equation}
\label{special}
A_{1Q}=\sigma \otimes R(\gamma_0)
\end{equation}
with $\sigma$ being an arbitrary $2\times 2$ unitary and Hermitian matrix.
Now, the code
\begin{equation*}
\label{specialnussd}
S=\sum_{i=1}^{2a}x_{iI}T_{iI}+x_{iQ}T_{iQ}
\end{equation*}
where $T_{iI}, T_{iQ}, 1 \leq i \leq 2a$ are given by \eqref{transform}
is a special case of the codes constructed in Section \ref{sec3}
where the UW-SSD on which we are applying the transform is the one given by \eqref{wtmatricesciod}.
                                                                                
A further special case is the one where we choose $\sigma = \sigma_3$ in \eqref{special} which leads to the class of CIODs as described below. First we define a $4\times 4$ permutation matrix as follows

\begin{equation*}
\label{permutation}
\mathbf{P}\,=\, \left[ \begin{array}{cccc}
1 & 0 & 0 & 0\\
0 & 0 & 1 & 0\\
0 & 0 & 0 & 1\\
0 & 1 & 0 & 0 \end{array} \right].
\end{equation*}
\noindent
Then, we take a pair of variables say $\{x_{i}, x_{i+1}\}.$ These two complex symbols have four real components, $\{x_{iI},x_{iQ},x_{\left(i+1\right)I},x_{\left(i+1\right)I} \}.$ We form another set of four variables from them by applying the defined permutation:
\begin{equation}
\label{transformnew}
{\left[p_{iI} \, p_{iQ}\, p_{\left(i+1\right)I}\, p_{\left(i+1\right)Q}\right]}^{T} \,= \,\mathbf{P}{\left[x_{iI} \,x_{iQ}\,x_{\left(i+1\right)I}\,x_{\left(i+1\right)Q}\right]}^{T}
\end{equation}
 Here superscript $T$ stands for transpose of a matrix. Now for $n=2^{a}$ we have $K=2a$. Hence we choose two consecutive complex variables as a pair and following the above procedure construct the set of $2K$ real variables, ${\{p_{iI},p_{iQ}\}}_{i=1}^{K=2a}$. Then we form a linear STBC with these variables.  The resulting code will be a CIOD in terms of the complex variables ${\{x_{i}\}}_{i=1}^{2a}$. The following example illustrates this.                                   

\begin{figure*}
{\small
\begin{equation}
\label{ciod88temp}
\left[\begin{array}{cccccccc}
 p_{1I}+jp_{2I}& p_{3I}+jp_{4I} & p_{5I}+jp_{6I} &       0        &0&0&0&0\\
-p_{3I}+jp_{4I}& p_{1I}-jp_{2I} &       0        &-p_{5I}-jp_{6I} &0&0&0&0\\
-p_{5I}+jp_{6I}&       0        & p_{1I}-jp_{2I} & p_{3I}+jp_{4I} &0&0&0&0\\
       0       & p_{5I}-jp_{6I} &-p_{3I}+jp_{4I} & p_{1I}+jp_{2I} &0&0&0&0\\

0&0&0&0& p_{1Q}+jp_{2Q} & p_{3Q}+jp_{4Q} & p_{5Q}+jp_{6Q} &       0       \\
0&0&0&0&-p_{3Q}+jp_{4Q} & p_{1Q}-jp_{2Q} &       0        &-p_{5Q}-jp_{6Q} \\
0&0&0&0&-p_{5Q}+jp_{6Q} &       0        & p_{1Q}-jp_{2Q} & p_{3Q}+jp_{4Q}\\
0&0&0&0&       0        & p_{5Q}-jp_{6Q} &-p_{3Q}+jp_{4Q} & p_{1Q}+jp_{2Q} \\
                                                                                
\end{array}\right] 
\end{equation} \hrule
}
\end{figure*}
                                                                                
\begin{figure*}
{\small
\begin{equation}
\label{ciod88}
\left[\begin{array}{cccccccc}

 x_{1I}+jx_{2Q}& x_{3I}+jx_{4Q} & x_{5I}+jx_{6Q} &       0        &0&0&0&0\\
-x_{3I}+jx_{4Q}& x_{1I}-jx_{2Q} &       0        &-x_{5I}-jx_{6Q} &0&0&0&0\\
-x_{5I}+jx_{6Q}&       0        & x_{1I}-jx_{2Q} & x_{3I}+jx_{4Q} &0&0&0&0\\
       0       & x_{5I}-jx_{6Q} &-x_{3I}+jx_{4Q} & x_{1I}+jx_{2Q} &0&0&0&0\\

0&0&0&0& x_{2I}+jx_{1Q} & x_{4I}+jx_{3Q} & x_{6I}+jx_{5Q} &       0       \\
0&0&0&0&-x_{4I}+jx_{3Q} & x_{2I}-jx_{1Q} &       0        &-x_{6I}-jx_{5Q} \\
0&0&0&0&-x_{6I}+jx_{5Q} &       0        & x_{2I}-jx_{1Q} & x_{4I}+jx_{3Q}\\
0&0&0&0&       0        & x_{6I}-jx_{5Q} &-x_{4I}+jx_{3Q} & x_{2I}+jx_{1Q} \\
                                                                                
\end{array}\right] 
\end{equation} \hrule
}
\end{figure*}

\begin{example}
Let
\begin{eqnarray*}
\begin{array}{ccc}

A_{1I}=I_{2}\otimes I_{2}, & A_{2I}=I_{2}\otimes j{\sigma}_{3}, & A_{3I}=I_{2}\otimes{\sigma}_{1}, \\
 A_{4I}=I_{2}\otimes{\sigma}_{2} &  A_{1Q}={\sigma}_{3}\otimes I_{2},  & A_{2Q}={\sigma}_{3}\otimes j{\sigma}_{3} \\
 A_{3Q}={\sigma}_{3}\otimes{\sigma}_{1} & A_{4Q}={\sigma}_{3}\otimes {\sigma}_{2}. &
\end{array}
\end{eqnarray*}
We form a TNU-SSD code by setting $\alpha = \beta =\frac{1}{2}$ leading to the following code:
                                                                                
\begin{equation}
\label{specialnussdnew}
S=\sum_{i=1}^{2a}p_{iI}T_{iI}+p_{iQ}T_{iQ}
\end{equation}

\begin{equation*}
=\left[ \begin{array}{cccc}
p_{1I}+jp_{2I} & p_{3I}+jp_{4I} & 0 & 0\\
-p_{3I}+jp_{4I} & p_{1I}-jp_{2I} & 0 & 0 \\
0 & 0 & p_{1Q}+jp_{2Q} & p_{3Q}+jp_{4Q} \\
0 & 0 & -p_{3Q}+jp_{4Q} & p_{1Q}-jp_{2Q}
\end{array} \right]
\end{equation*}
                                                                                
Now using (\ref{transformnew}) the code becomes
\begin{equation*}
\label{ciod44}
\left[ \begin{array}{cccc}
x_{1I}+jx_{2Q} & x_{3I}+jx_{4Q} & 0 & 0\\
-x_{3I}+jx_{4Q} & x_{1I}-jx_{2Q} & 0 & 0 \\
0 & 0 & x_{2I}+jx_{1Q} & x_{4I}+jx_{3Q} \\
0 & 0 & -x_{4I}+jx_{3Q} & x_{2I}-jx_{1Q}
\end{array} \right]
\end{equation*}
which is a $4\times 4$ CIOD.
\end{example}
\begin{example}
Here we form an $8\times 8$ CIOD code following our approach,
\begin{eqnarray*}
\begin{array}{ccc}
A_{1I}=I_{2}\otimes I_{2}\otimes I_{2}, & A_{2I}=jI_{2}\otimes {\sigma}_{3}\otimes {\sigma}_{3} & \\
A_{3I}=I_{2}\otimes I_{2}\otimes{\sigma}_{1}, & A_{4I}=I_{2}\otimes I_{2}\otimes{\sigma}_{2} & \\
A_{5I}=I_{2}\otimes {\sigma}_{1}\otimes {\sigma}_{3} , & A_{6I}=I_{2}\otimes {\sigma}_{2}\otimes {\sigma}_{3}\\
A_{1Q}={\sigma}_{3}\otimes I_{2}\otimes I_{2}, & A_{iQ}=A_{1Q}A_{iI},~~i=2,3,\cdots,6.
\end{array}
\end{eqnarray*}
Now we form the code as in \eqref{specialnussdnew}, with $\alpha=\beta=\frac{1}{2}$, and the transformed variables ${\{p_{iI}, p_{iQ}\}}_{i=1}^{6}$ as shown in \eqref{ciod88temp} at the top of the next page. Now using \eqref{transformnew} we get \eqref{ciod88} as shown at the top of the next page which is the same as the $8\times 8$ CIOD.
\end{example}
                                                                                
\begin{remark}
It is interesting to observe that \eqref{reducible} represents an reducible representation and this construction based on reducible representation leads to NU-SSD codes and CIODs. We are not aware of any other code constructions that make use of reducible representations of groups.
\end{remark}

\section{Discussion}
\label{sec8}
In the most general form a STBC is simply a finite set of matrices with complex entries. One way of obtaining a STBC is by first specifying a design as in \eqref{ldceqn} and then let the variables $\{x_i \}_{i=1}^K$ take values from a finite set of complex numbers like $M$-ary PSK and QAM. Notice that two different designs taking values from two different signal sets may result in the same STBC (finite set of complex matrices). It is important to notice that the attribute of single-symbol decodability is that of the design and not that of the resulting STBC when a signal set is specified for the variables. We will explain this by an example: Note that,

\begin{equation*}
\left[\begin{array}{cc}
x_{1I}-jx_{2Q} & x_{2I}+jx_{1Q}\\
-x_{2I}-jx_{1Q} & x_{1I}-jx_{2Q}\end{array}\right]
\end{equation*}
is a linear design which is  a $2\times 2$ SSD code. If $\mathcal{A}$ is a finite subset of the complex field from which the variables take values from, then the resulting STBC is the set of matrices
\begin{equation}
\label{original}
\{x_{1I}A_{1I}+x_{1Q}A_{1Q}+x_{2I}A_{2I}+x_{2Q}A_{2Q}\} 
\end{equation}
where $ x_{iI}+jx_{iQ} \in \mathcal{A},i=1,2 $ and 
\begin{eqnarray*}
A_{1I}=\left[\begin{array}{cc}
1 & 0\\
0 & 1 \end{array}\right], 
A_{1Q}=\left[\begin{array}{cc}
0 & j\\
-j & 0 \end{array}\right]\\
A_{2I}=\left[\begin{array}{cc}
0 & 1\\
-1 & 0 \end{array}\right], 
A_{2Q}=\left[\begin{array}{cc}
-j & 0\\
0 & -j \end{array}\right].
\end{eqnarray*}
Now consider a $2\times 2$ real non-singular linear transform matrix,
\begin{equation*}
T=\left[\begin{array}{cc}
cos(\theta) & sin(\theta)\\
-sin(\theta) & cos(\theta)\end{array}\right] \textrm{and} 
\left[\begin{array}{c}
y_{iI}\\
y_{iQ}
\end{array}\right]=T^{-1}\left[\begin{array}{c}
x_{iI}\\
x_{iQ}
\end{array}\right]
\end{equation*}
so that corresponding to every point $x_{iI}+jx_{iQ} \in \mathcal{A}$ there is one and only one point $y_{iI}+jy_{iQ} \in \mathcal{\widetilde{A}}.$ Now the set of codeword matrices in \eqref{original} can be also be written as,
\begin{equation}
\label{modified}
\{y_{1I}\widetilde{A}_{1I}+y_{1Q}\widetilde{A}_{1Q}+y_{2I}\widetilde{A}_{2I}+y_{2Q}\widetilde{A}_{2Q} \}
\end{equation}
where $y_{iI}+jy_{iQ} \in \mathcal{\widetilde{A}},i=1,2$ 
and 
\begin{equation*}
\begin{array}{c}
\widetilde{A}_{1I}=cos(\theta) {A}_{1I}-sin(\theta) {A}_{1Q}\\
\widetilde{A}_{1Q}=sin(\theta) {A}_{1I}+cos(\theta) {A}_{1Q}\\
\widetilde{A}_{2I}=cos(\theta) {A}_{2I}-sin(\theta) {A}_{2Q}\\
\widetilde{A}_{2Q}=sin(\theta) {A}_{2I}+cos(\theta) {A}_{2Q}
\end{array}
\end{equation*}
It is obvious that \eqref{original} and \eqref{modified} represent the same STBC but in \eqref{original} the weight matrices are unitary but for any non-zero $\theta$ the weight matrices in \eqref{modified} are not unitary. 

In \cite{WWX1,WWX2}, the authors start from a QOD and taking appropriate transformation of the variables of the design obtain UW-SSD designs which intersect with the YGT codes. Further transformations are employed to maximize the coding gain which result in NUW-SSDs. It is an interesting open problem to identify the transformations which result in the classes of NUW-SSD codes obtained in this paper. 

Another important direction for further research is to settle the conjecture regarding the maximum diversity product of CUW-SSD codes: For a given signal set the diversity product expression \eqref{codgain} is maximum when $2m=n$, i.e., when the discriminant of the code is traceless. 

Another important observation which opens up further investigation is the following. The choice of  $\sigma_3$ in \eqref{special} is responsible for the codes of the construction resulting in CIODs. The construction will continue to work leading to codes with different structures for different choice of a $2\times 2$ matrix as long as it is Hermitian.
\appendices
\section{Quadratic Spaces and Clifford algebras}
\label{Append1}
\subsection{Quadratic Spaces and Clifford algebras}
In this subsection we briefly describe the notion of quadratic forms and Clifford algebras along with their basic structural results needed for our purposes. The proofs and further results concerning quadratic forms can be found in \cite{Lam} and \cite{Ome} and concerning Clifford algebras can be found in \cite{Por}, \cite{Dix} and \cite{Han}.
                                                                                
Let $V$ be a finite-dimensional vector space over the real field $\mathbb{R}$. A {\it quadratic form} (QF) on $V$ is a mapping $Q:V \rightarrow \mathbb{R}$ such that \\
\begin{tabular}{rl}
(i) & $Q(\alpha v)  = \alpha^2Q(v), ~~~ \alpha \in \mathbb{R},~~v \in V$ \\
(ii) & the associated form \\
 & $B(v,w)  =\frac{1}{2}\left\{Q(v)+Q(w)-Q(v-w)  \right\},~ v,w \in V$ \\
& is bilinear.\\
\end{tabular}
When such a QF exists, the pair $(V,Q)$ is said to be a {\it quadratic space}. Note that every vector space over ${\mathbb R}$ becomes a quadratic space with respect to the trivial quadratic form $Q(v)=0,~~ \forall ~~ v\in V$.\\
                                                                                
Let $p,q$ be non-negative integers with $p+q=n > 0$ and define the quadratic form on ${\mathbb R}^{p+q}$ by
\begin{equation*}
\label{gqf}
Q_{p,q}(u)=-( u_1^2+\cdots +u_p^2)+(u_{p+1}^2+\cdots+u_{p+q}^2),
\end{equation*}
for $u=(u_1,\cdots, u_{p+q});$ the resulting read quadratic space is called $(p,q)$-{\it Minkowski space} and we denote it by $({\mathbb R}^{p,q},Q_{p,q})$. Clearly, ${\mathbb R}^{n,0},Q_{n,0}$ reduces to $({\mathbb R}^n, -|.|^2)$ and ${\mathbb R}^{0,n},Q_{0,n}$ reduces to $({\mathbb R}^n, |.|^2)$ where $|.|$ is the Euclidean norm given by
\begin{equation*}
|u|^2= (u_1^2+u_2^2+\cdots+u_n^2)
\end{equation*}
Now, let $(V,Q)$ be an arbitrary quadratic space and $e_i$ a basis of $V$. Then
\begin{equation*}
B(u,v)=Q(v)= \sum_{i,j}B(e_i,e_j)v_iv_j, ~~~~ v=\sum_{i}v_ie_i,
\end{equation*}
and if there a basis which is {\it B-orthogonal} in the sense that
\begin{equation*}
B(e_i,e_j)=0,~~~ i \neq j,
\end{equation*}
the expression for $Q(v)$ reduces to the diagonal form $Q(v)=\sum_{i}Q(e_i)v_i^2$. Such a basis is easily constructable. The subset of $V$ given by
\begin{equation*}
Rad(V,Q)=\left\{w \in V| B(u,w)=0, ~~ \forall ~~ u\in V \right\}= V^\perp
\end{equation*}
is called the {\it radical of} $(V,Q)$. The quadratic space $(V,Q)$ is said to be {\it non-degenerate } if $Rad(V,Q)=\{ 0 \};$ otherwise it is said to be {\it degenerate}. The space $V$ can be written as the $B$-orthogonal direct sum
                                                                                
\begin{equation}
\label{decomposition}
V=Rad(V,Q) \oplus Rad(V,Q)^\perp
\end{equation}
of $Rad(V,Q)$ and its $B$-orthogonal complement.
\begin{lemma}
Let $(V,Q)$ be a quadratic space with $B$-orthogonal decomposition as in \eqref{decomposition}. Then,\\
\begin{tabular}{rl}
(a) & $Q==0$  on $Rad(V,Q)$.\\
(b) & $Rad(V,Q)\perp, Q$ is isomorphic to ${\mathbb R}^{p,q}$ where $p,q$ \\
    & depend only on $Q$.
\end{tabular}
\end{lemma}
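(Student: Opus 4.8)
The plan is to handle the two parts separately, treating (a) as a one-line computation and (b) as a diagonalization argument whose substantive content is Sylvester's law of inertia.

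For part (a), I would first record the identity $Q(w)=B(w,w)$, which follows immediately from the definition of the associated bilinear form upon setting $v=w$ and using $Q(0)=0$ (itself the case $\alpha=0$ of property (i)): indeed $B(w,w)=\frac{1}{2}\{Q(w)+Q(w)-Q(w-w)\}=Q(w)$. Then, for any $w\in\mathrm{Rad}(V,Q)=V^\perp$, the defining property $B(u,w)=0$ for all $u\in V$ specializes at $u=w$ to $B(w,w)=0$, so $Q(w)=0$. This disposes of (a).

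For part (b), write $W=\mathrm{Rad}(V,Q)^\perp$. The first step is to verify that $(W,Q|_W)$ is non-degenerate. Suppose $w\in W$ satisfies $B(w,u)=0$ for every $u\in W$. Using the $B$-orthogonal direct sum $V=\mathrm{Rad}(V,Q)\oplus W$ from \eqref{decomposition} together with the fact that $B$ vanishes identically on $\mathrm{Rad}(V,Q)$ against all of $V$, one obtains $B(w,v)=0$ for every $v\in V$; hence $w\in\mathrm{Rad}(V,Q)\cap W=\{0\}$, establishing non-degeneracy. Next I would invoke the existence of a $B$-orthogonal basis $e_1,\dots,e_r$ of $W$ (noted earlier to be easily constructible). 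In this basis $Q$ is diagonal, $Q(v)=\sum_i Q(e_i)v_i^2$, and non-degeneracy forces every $Q(e_i)\neq 0$. Rescaling $e_i\mapsto e_i/\sqrt{|Q(e_i)|}$ replaces each diagonal entry by $\pm1$, and after reordering the basis so that the $p$ negative entries precede the $q$ positive ones, the form becomes exactly $Q_{p,q}$; this exhibits the required isometry $(W,Q|_W)\cong\mathbb{R}^{p,q}$.

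The main obstacle is the final clause, that $p$ and $q$ depend only on $Q$ and not on the choices of basis and scaling made above; this is precisely Sylvester's law of inertia. I would prove the invariance of $q$ (whence $p=r-q$ is also invariant) by characterizing it intrinsically as the maximal dimension of a subspace of $W$ on which $Q$ is positive definite. Concretely, letting $U_+$ be the span of the basis vectors with $Q(e_i)=+1$ and $U_-$ the complementary span, one has $Q>0$ on $U_+\setminus\{0\}$ while $Q\leq 0$ on $U_-$; any positive-definite subspace must meet $U_-$ trivially, so its dimension is at most $\dim U_+=q$, and $U_+$ itself attains this bound. Since this maximal dimension is defined without reference to a basis, $q$ is an invariant of $Q$, which completes (b).
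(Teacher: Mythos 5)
Your proof is correct, but there is no proof in the paper to compare it against: the lemma appears in Appendix \ref{Append1} bare, with the paper explicitly deferring all proofs concerning quadratic forms to the cited references (Lam and O'Meara). What you have written is the standard self-contained argument those references contain. Part (a), via the identity $Q(w)=B(w,w)$ (using $Q(0)=0$) specialized at a radical vector, is exactly right. Part (b) --- non-degeneracy of $Q$ restricted to the complement $W$, diagonalization in a $B$-orthogonal basis, rescaling the basis vectors to make the diagonal entries $\pm 1$, and Sylvester's law of inertia proved through the intrinsic characterization of $q$ as the maximal dimension of a positive-definite subspace --- is the classical route, carried out completely and with the ordering of signs matching the paper's convention that $Q_{p,q}$ is negative on the first $p$ coordinates. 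One refinement is worth adding: the clause ``depend only on $Q$'' should also cover the choice of the complement itself, because the literal $B$-orthogonal complement of $Rad(V,Q)$ is all of $V$ (every vector is $B$-orthogonal to the radical), so the subspace $W$ in \eqref{decomposition} is a chosen vector-space complement rather than a canonical one; your argument as written establishes independence only of the basis and scaling inside a fixed $W$. Your positive-definite-subspace characterization closes this with one extra sentence: any positive-definite subspace $P\subseteq V$ meets $Rad(V,Q)$ trivially, and the projection of $P$ onto $W$ along $Rad(V,Q)$ is an isometry since $Q(r+w)=Q(w)$ for $r\in Rad(V,Q)$ and $w\in W$; hence the maximal dimension of a positive-definite subspace of $W$ equals that of $V$, which is manifestly an invariant of $(V,Q)$ alone, and likewise for $p$ via $\dim W=\dim V-\dim Rad(V,Q)$.
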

\begin{definition}
Let ${\mathbf A}$ be an associative algebra over the field ${\mathbb R}$ with identity ${\mathbf 1}$ and $\gamma: V \rightarrow {\mathbf A}$ an ${\mathbb R}$-linear embedding of $V$ into ${\mathbf A}$. The pair $({\mathbf A}, \gamma)$ is said to be a real Clifford algebra for $(V,Q)$ when
\begin{enumerate}
\item ${\mathbf A}$ is generated as an algebra by $\{\gamma(v): v \in V \} \cup \{\lambda {\mathbf 1}: \lambda \in {\mathbb R} \}$,
\item $(\gamma(v))^2= -Q(v) {\mathbf 1}, ~~~ \forall ~~ v \in V$.
\end{enumerate}
\end{definition}
The second condition in the above definition ensures that ${\mathbf A}$ is an algebra in which there exists a ``square root'' of the quadratic form $-Q$.
                                                                                
\begin{definition}
The {\it Pauli matrices} in ${\mathbb C}^{2\times 2}$ are
\begin{eqnarray*}
\lambda_0=
\left[
\begin{array}{rr}
1 & 0 \\
0 &1
\end{array}
\right],~~
\lambda_1=
\left[
\begin{array}{rr}
1 & 0 \\
0 & -1
\end{array}
\right],~~ \\ \nonumber
\lambda_2=
\left[
\begin{array}{rr}
0 & -j \\
j &0
\end{array}
\right],~~
\lambda_3=
\left[
\begin{array}{rr}
0 & 1 \\
1 &0
\end{array}
\right],~~
\end{eqnarray*}
and the {\it associated Pauli matrices} are
                                                                                
\begin{eqnarray*}
\mu_0=
\left[
\begin{array}{rr}
1 & 0 \\
0 &1
\end{array}
\right],~~
\mu_1=
\left[
\begin{array}{rr}
j & 0 \\
0 & -j
\end{array}
\right],~~ \\ \nonumber
\mu_2=
\left[
\begin{array}{rr}
0 & 1 \\
-1 &0
\end{array}
\right],~~
\mu_3=
\left[
\begin{array}{rr}
0 & j \\
j &0
\end{array}
\right].
\end{eqnarray*}
\end{definition}
It is easily seen that $\lambda_1^2=\lambda_2^2=\lambda_3^2=I$ and $\mu_1^2=\mu_2^2=\mu_3^2=-I.$ Moreover, $\lambda_j \lambda_k = -j\lambda_l$ and $\mu_j \mu_k =\mu_l$ when $\{j,k,l \}$ is a cyclic permutation of ${1,2,3}$.  Pauli matrices and their associates occur throughout the theory of Clifford algebras. For instance, let,
\begin{eqnarray*}
\begin{array}{rl}
{\mathbb U}_{0,0} &  =\{a \sigma_0: a \in {\mathbb R} \}, \\
{\mathbb U}_{1,0} &  =
\left\{ \left[
\begin{array}{rl}
x & y \\
y & x
\end{array}
\right] : x,y \in {\mathbb R} \right\}, \\
 {\mathbb U}_{0,1} & =
\left\{ \left[
\begin{array}{rl}
x & y \\
-y & x
\end{array}
\right] : x,y \in {\mathbb R} \right\}, \\
{\mathbb U}_{0,2} & =
\left\{ \left[
\begin{array}{rl}
x_0+jx_1 & x_2+jx_3 \\
-x_2+jx_3 & x_0-jx_1
\end{array}
\right] : x_i \in {\mathbb R} \right\}, \\
 & =
\left\{ \left[
\begin{array}{rl}
z_1 & z_2 \\
-z_2^* & z_1^*
\end{array}
\right] : z_i \in {\mathbb C} \right\}.
\end{array}
\end{eqnarray*}
                                                                                
Each of the above is an associative subalgebra (over ${\mathbb R}$) of
${\mathbb C}^{2 \times 2}$ having an identity element, and
\begin{equation*}
{\mathbb U}_{0,0} \equiv {\mathbb R}, ~~
{\mathbb U}_{1,0} \equiv {\mathbb R} \oplus {\mathbb R}, ~~
{\mathbb U}_{0,1} \equiv {\mathbb C}, ~~
{\mathbb U}_{0,2} \equiv {\mathbb H}
\end{equation*}
where ${\mathbb H}$ is the Hamilton's algebra of quaternions. As the notation suggests, ${\mathbb U}_{p,q}$ also is a Clifford algebra for ${\mathbb R}^{p,q}$ with respective embeddings $\gamma$ given by
\begin{equation*}
0 \rightarrow 0,~~~ y \rightarrow y\lambda_3, ~~~ y \rightarrow y \mu_2, ~~~ (x_1,x_2) \rightarrow x_1\mu_1+x_2\mu_2.
\end{equation*}
\begin{definition}
We will call the real Clifford algebras for $({\mathbb R}^{1,n-1}, Q_{1,n-1})$ to be {\it Minkowski Clifford algebras} and those for $({\mathbb R}^{0,n}, Q_{0,n})$ to be {\it Euclidean Clifford algebras}.
\end{definition}

\section{CUW-SSD codes from Minkowski Clifford algebras}
\label{Append2}
In this section we describe another construction of UW-SSD codes based on representations of Minkowski Clifford algebras.
\begin{theorem}
\label{minkowskithm}
The  $n \times n$ UW code given by
\begin{equation}
\label{uwssdalter}
S=\sum_{i=1}^{K}\left(x_{iI}A_{iI}+x_{iQ}A_{iQ}\right) \end{equation}
is UW-SSD  if there exists a matrix ${\hat A}_{1Q}$ satisfying the following interrelationships with the weight matrices:
\begin{eqnarray}
\label{suffcondalter}
\begin{array}{rl}
A_{iI}^{H}  & =-A_{iI}\\
 A_{iI}A_{jI}  & = - A_{jI}A_{iI}, \quad   1\leq i\neq j\leq K \\
{\hat A}_{1Q}^{H}   &={\hat A}_{1Q} \\
{\hat A}_{1Q}A_{iI} & =-  A_{iI}{\hat A}_{1Q} ~~~~ 1 \leq i \leq K \\
A_{iQ} &={\hat A}_{1Q}A_{iI} ~~~~ 1 \leq i \leq K.
\end{array}
\end{eqnarray}
(Note that ${\hat A}_{1Q}$ is only an intermediate matrix using which the set of matrices $A_{iQ},~~i=1,2,\cdots,K$ are defined.) 
\end{theorem}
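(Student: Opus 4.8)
The plan is to prove the statement by direct verification of the three SSD conditions \eqref{A21}, \eqref{A22} and \eqref{A23} for the given weight matrices, in complete analogy with the proof of Theorem~\ref{thm2}. Since $S$ is assumed to be a UW code, all its weight matrices are already unitary; moreover $\hat A_{1Q}=A_{iQ}A_{iI}^{H}$ is a product of two unitaries and hence is itself unitary. Therefore, once \eqref{A2} is established the code is automatically UW-SSD (recall that \eqref{A2} alone characterizes the SSD property). The structural feature that distinguishes this Minkowski construction from that of Theorem~\ref{thm2} is that here $\hat A_{1Q}$ \emph{anti}-commutes with every $A_{iI}$ and the relations are imposed for all $1\le i\le K$ (so that $A_{1I}$ too is anti-Hermitian), rather than $\hat A_{1Q}$ commuting with the $A_{iI}$ and $A_{1I}=I_n$.

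First I would dispose of \eqref{A22}, which is immediate from $A_{iI}^{H}=-A_{iI}$ together with the anticommutativity of the $A_{iI}$:
\[
A_{iI}^{H}A_{jI}+A_{jI}^{H}A_{iI}=-(A_{iI}A_{jI}+A_{jI}A_{iI})=0.
\]
For \eqref{A21}, I substitute $A_{jQ}=\hat A_{1Q}A_{jI}$, so that $A_{jQ}^{H}=A_{jI}^{H}\hat A_{1Q}^{H}=-A_{jI}\hat A_{1Q}$, together with $A_{iI}^{H}=-A_{iI}$, to obtain
\[
A_{iI}^{H}A_{jQ}+A_{jQ}^{H}A_{iI}=-A_{iI}\hat A_{1Q}A_{jI}-A_{jI}\hat A_{1Q}A_{iI}.
\]
Applying $A_{iI}\hat A_{1Q}=-\hat A_{1Q}A_{iI}$ to the first term and $A_{jI}\hat A_{1Q}=-\hat A_{1Q}A_{jI}$ to the second, both terms acquire a common left factor $\hat A_{1Q}$, yielding $\hat A_{1Q}(A_{iI}A_{jI}+A_{jI}A_{iI})=0$ by anticommutativity of the $A_{iI}$.

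The crux is \eqref{A23}. Substituting $A_{iQ}=\hat A_{1Q}A_{iI}$ and $A_{iQ}^{H}=-A_{iI}\hat A_{1Q}$ gives
\[
A_{iQ}^{H}A_{jQ}+A_{jQ}^{H}A_{iQ}=-A_{iI}\hat A_{1Q}^{2}A_{jI}-A_{jI}\hat A_{1Q}^{2}A_{iI},
\]
and the obstacle is the internal factor $\hat A_{1Q}^{2}$. I expect this to be the only non-routine point, and I would clear it in one of two equivalent ways: either observe that the double anticommutation forces $\hat A_{1Q}^{2}A_{iI}=A_{iI}\hat A_{1Q}^{2}$ for every $i$, or note that $\hat A_{1Q}$ being both Hermitian and unitary gives $\hat A_{1Q}^{2}=I_n$ outright. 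Either way $\hat A_{1Q}^{2}$ commutes past the $A_{iI}$, so the right-hand side becomes $-\hat A_{1Q}^{2}(A_{iI}A_{jI}+A_{jI}A_{iI})=0$. This establishes all of \eqref{A2}, hence $S$ is SSD; and since its weight matrices are unitary, it is UW-SSD, as claimed.
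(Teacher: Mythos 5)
Your proposal is correct and follows essentially the same route as the paper: direct verification of the three conditions of \eqref{A2} using the anti-Hermitian, anticommuting structure of the $A_{iI}$ and the substitution $A_{iQ}=\hat A_{1Q}A_{iI}$. The only cosmetic difference is at \eqref{A23}, where the paper removes the middle factor by writing it as $\hat A_{1Q}^{H}\hat A_{1Q}=I_n$ (unitarity of $\hat A_{1Q}$, which as you note follows from the UW hypothesis), whereas you additionally offer the equivalent observation that the double anticommutation makes $\hat A_{1Q}^{2}$ commute with every $A_{iI}$.
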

\begin{proof}
The proof is by direct verification of \eqref{A2} for the weight matrices of the code.
\begin{eqnarray*}
\begin{array}{rl}  A_{iI}^{H}A_{jQ}+A_{jQ}^{H}A_{iI} & = A_{iI}^{H}{\hat A}_{1Q}A_{jI}+A_{jI}^{H}{\hat A}_{1Q}^{H}A_{iI}  \\
&  = A_{iI}{\hat A}_{1Q}A_{jI}+A_{jI}{\hat A}_{1Q}A_{iI} \\
&  = (A_{iI}^HA_{jI}+A_{jI}^HA_{iI}){\hat A}_{1Q}\\
&  = 0({\hat A}_{1Q})=0.
\end{array}
\end{eqnarray*} This shows that \eqref{A21} is satisfied. Next, we show that  \eqref{A22} is also satisfied:
\begin{eqnarray*}
\begin{array}{rl}
A_{iI}^HA_{jI}+A_{jI}^HA_{iI} & = -(A_{iI}A_{jI}+A_{jI}A_{iI})=0.\\
\end{array}
\end{eqnarray*}
To prove that  ({\ref{A23}}):
\begin{eqnarray*}
\begin{array}{rl} A_{iQ}^HA_{jQ}+A_{jQ}^HA_{iQ} & = A_{iI}^H{\hat A}_{1Q}^H{\hat A}_{1Q}A_{jI}+A_{jI}^H{\hat A}_{1Q}^H{\hat A}_{1Q}A_{iI}\\
& =A_{iI}^HA_{jI}+A_{jI}^HA_{iI} \\
& = -(A_{iI}A_{jI}+A_{jI}A_{iI})=0.\\
\end{array}
\end{eqnarray*}
This shows that the code \eqref{minkowskithm} is UW-SSD.
\end{proof}
\begin{theorem}
\label{cuwssdalter}
Consider the following $2^a\times 2^a$  weight matrices
\begin{eqnarray}
\label{wtmatricesalter}
\begin{array}{rl}
A_{1I} & = j \sigma_3^{\otimes^{a}} \\
                                                                                
A_{iI} & = R(\gamma_{i+1}), ~~~ 2 \leq i \leq 2a, \\
\mbox{   and    }~~~A_{iQ} & ={\hat A}_{1Q}A_{iI}, ~~~ 1\leq i \leq K   \\
                                                                                
\mbox{   where  } ~~~ {\hat A}_{1Q} & = I_2^{\otimes^{a-1}} \bigotimes j\sigma_1\end{array}
\end{eqnarray}
and $\sigma_1, \sigma_2$ and $\sigma_3$ are given by \eqref{paulimatrices}.
 With these weight matrices the resulting  $2^a \times 2^a$ code $S(x_1,x_2,\cdots,x_{2a})$ given by \eqref{cuwssdeqalter} at the top of the next page,  where
\begin{figure*}
\begin{equation}
\label{cuwssdeqalter}
\begin{array}{l}
 I_2^{\otimes^{a-1}} \bigotimes \rho_{x_2}+j \sigma_3^{\otimes^{a-1}} \bigotimes \sigma_{x_1}
  + \sum_{i=2}^{a}
\left[
 I_2^{\otimes^{a-i}}  \bigotimes \sigma_1 \bigotimes \sigma_3^{\otimes^{i-2}} \bigotimes \sigma_{x_{2i-1}}+ I_2^{\otimes^{a-i}} \bigotimes \sigma_2 \bigotimes \sigma_3^{\otimes^{i-2}} \bigotimes \sigma_{x_{2i}}
\right] \\
\end{array}
\end{equation} \\ \hrule
\end{figure*}
\[\begin{array}{rl}
x_i &= x_{iI}+jx_{iQ} \\
                                                                                
\sigma_{x_i} &= \left[
\begin{array}{rr}
x_{iI} & -jx_{iQ} \\
-jx_{iQ} & -x_{iI}
\end{array}
\right]  \mbox{  and }\\
                                                                                
\rho_{x_i} &= \left[
\begin{array}{rr}
-x_{iQ} & jx_{iI} \\
jx_{iI} & x_{iQ}
\end{array}
\right]
\end{array}
\]
is a UW-SSD code in $2a$ complex variables with rate ($\frac{a}{2^{a-1}}$).
\end{theorem}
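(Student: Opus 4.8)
The plan is to deduce the result directly from Theorem~\ref{minkowskithm}: it suffices to exhibit the intermediate matrix $\hat A_{1Q}$ of \eqref{wtmatricesalter} and to verify that the chosen weight matrices fulfil the five relations \eqref{suffcondalter}. Since $A_{iQ}=\hat A_{1Q}A_{iI}$ holds by construction, only the relations among the $A_{iI}$, together with the Hermiticity and anti-commutation properties of $\hat A_{1Q}$, require checking; once they are in place the UW-SSD conclusion is immediate, and the rate $\frac{2a}{2^a}=\frac{a}{2^{a-1}}$ follows from counting $2a$ complex variables over the $2^a$ channel uses of a $2^a\times 2^a$ square design.

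First I would record that $A_{1I}=j\sigma_3^{\otimes^{a}}$ is exactly $R(\gamma_1)$ of the list \eqref{repmatrices}, and that $A_{iI}=R(\gamma_{i+1})$ for $2\le i\le 2a$ runs over the generators $\gamma_3,\gamma_4,\dots,\gamma_{2a+1}$. Because each $R(\gamma_k)$ is unitary and squares to $-I_n$, it is anti-Hermitian, so $A_{iI}^H=-A_{iI}$; and distinct generators of the Clifford algebra anti-commute (Definition of $CA_L$), so $A_{iI}A_{jI}=-A_{jI}A_{iI}$ for $i\neq j$, there being no collision since $\gamma_1$ and $\gamma_3,\dots,\gamma_{2a+1}$ are pairwise distinct. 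Hermiticity of $\hat A_{1Q}=I_2^{\otimes^{a-1}}\otimes j\sigma_1$ follows from $(j\sigma_1)^H=j\sigma_1$, using $\sigma_1^H=-\sigma_1$.

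The crux is the anti-commutation $\hat A_{1Q}A_{iI}=-A_{iI}\hat A_{1Q}$. Using $(A\otimes B)(C\otimes D)=AC\otimes BD$, the matrix $\hat A_{1Q}$, being the identity in every tensor slot but the last, anti-commutes with a pure tensor precisely when its last factor anti-commutes with $\sigma_1$; among $\sigma_1,\sigma_2,\sigma_3$ this fails only for $\sigma_1$ itself, while $\sigma_2$ and $\sigma_3$ each anti-commute with $\sigma_1$. The one generator of \eqref{repmatrices} whose last tensor factor is $\sigma_1$ is $R(\gamma_2)=I_2^{\otimes^{a-1}}\otimes\sigma_1$, and this is exactly the generator omitted from the family; every chosen $A_{iI}$ has last factor $\sigma_2$ or $\sigma_3$ and hence anti-commutes with $\hat A_{1Q}$. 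I expect this step to carry the real content of the argument, since it both explains the deliberate exclusion of $\gamma_2$ and supplies the sign that distinguishes the Minkowski construction (an anti-commuting $\hat A_{1Q}$) from the Euclidean one of Theorem~\ref{cuwssd} (a commuting $A_{1Q}$). With all relations of \eqref{suffcondalter} verified, Theorem~\ref{minkowskithm} gives that the code is UW-SSD.

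Finally, to match the explicit form \eqref{cuwssdeqalter} I would rewrite each summand as $x_{iI}A_{iI}+x_{iQ}A_{iQ}=(x_{iI}I_n+x_{iQ}\hat A_{1Q})A_{iI}$ and expand with the tensor multiplication rule, so that the variable-dependent data collect into the $2\times2$ blocks $\sigma_{x_k}$ and $\rho_{x_k}$ sitting in the last tensor slot. This is a routine, purely mechanical computation that contributes only bookkeeping and no further structural insight beyond the anti-commutation already established.
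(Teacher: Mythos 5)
Your proposal is correct and takes essentially the same route as the paper: the paper's own proof is a one-line assertion that the sufficient conditions \eqref{suffcondalter} of Theorem~\ref{minkowskithm} are ``easily checked by direct verification'' from the representation matrices of Lemma~\ref{lemma2}, and your argument simply carries out that verification in full, including the key point the paper leaves implicit --- that ${\hat A}_{1Q}=jR(\gamma_2)$ anti-commutes with every chosen $A_{iI}$ because each of $A_{1I}=R(\gamma_1)$ and $A_{iI}=R(\gamma_{i+1})$, $2\leq i\leq 2a$, has last tensor factor $\sigma_2$ or $\sigma_3$, while the excluded generator $R(\gamma_2)$ is precisely the one ending in $\sigma_1$.
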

\begin{proof}
From the representation matrices of Lemma~\ref{lemma2} and by the construction of weight matrices it is easily checked by direct verification that all the sufficient conditions of Theorem \ref{minkowskithm} for an UW-SSD are satisfied.
\end{proof}
It can be verified by direct computation that the set of weight matrices given by \eqref{wtmatricesalter} constitute a $2^a \times 2^a$ matrix representation of the Clifford algebra ${\mathbb U}_{1,2a-1}$ and the set of weight matrices given by \eqref{wtmatrices} constitute a $2^a\times 2^a$ matrix representation of the Clifford algebra ${\mathbb U}_{0,2a}$. The quadratic space associated with the ${\mathbb U}_{0,2a-1}$ is the Minkowski space and the quadratic space associated with the ${\mathbb U}_{0,2a}$ being the Euclidean space. To highlight this difference we associate the name Minkowski to the codes given by \eqref{cuwssdeqalter} as follows:
\begin{definition}
The $2^a\times 2^a$ STBCs given by \eqref{cuwssdeqalter} are defined to be $a-$Minkowski-Clifford Unitary Weight SSD (MCUW-SSD) codes.
\end{definition}
The $1-$MCUW-SSD code is
\[
\begin{array}{c}
S(x_1,x_2)=j\sigma_{x_1}+\rho_{x_2} = \left[
\begin{array}{rr}
-x_{2Q}-jx_{1I}  & x_{1Q}+jx_{2I} \\
x_{1Q}-jx_{2I} & x_{2Q}-jx_{1I}
\end{array}
\right]
\end{array}
\]
and the $2-$MCUW-SSD code is
\begin{eqnarray*}
\begin{array}{l}
S(x_1,x_2,x_3,x_4) \\
=j\sigma_{x_3}\bigotimes \sigma_{x_1} + I_2 \bigotimes \rho_{x_2} +\sigma_{1} \bigotimes \sigma_{x_3} +\sigma_{2}\bigotimes \sigma_{x_4}\\
=j\sigma_{x_3}\bigotimes
\left[
\begin{array}{rr}
x_{1I}   & -jx_{1Q}\\
-jx_{1Q} & -x_{1I}
\end{array}
\right]
 + I_2 \bigotimes
\left[ \begin{array}{rr}
-x_{2Q}   & -jx_{2I}\\
-jx_{2I} & x_{2Q}
\end{array}
\right] \\ \nonumber
+\sigma_{1} \bigotimes
\left[
\begin{array}{rr}
x_{3I}   & -jx_{3Q}\\
-jx_{3Q} & -x_{3I}
\end{array}
\right]
+\sigma_{2}\bigotimes
\left[
\begin{array}{rr}
x_{4I}   & -jx_{4Q}\\
-jx_{4Q} & -x_{4I}
\end{array}
\right]. \nonumber
\end{array}
\end{eqnarray*}
                                                                                
which is
                                                                                
{\small
                                                                                
\begin{eqnarray*}
\left[
\begin{array}{rrrr}
jx_{1I}-x_{2Q}  & x_{1Q}+jx_{2I} & x_{3I}+jx_{4I}& x_{4Q}-jx_{3Q}  \\
x_{1Q}+jx_{2I} & x_{2Q}-jx_{1I} & x_{4Q}-jx_{3Q} & -x_{3I}-jx_{4I} \\ -x_{3I}+jx_{4I} & x_{4Q}+jx_{3Q} & -x_{2Q}-jx_{1I} & -x_{1Q}+jx_{2I} \\
x_{4Q}+jx_{3Q} & x_{3I}-jx_{4I} & -x_{1Q}+jx_{2I} & x_{2Q}+jx_{1I}
\end{array}
\right].
\end{eqnarray*}
}
\subsection{Normalized MCUW-SSD codes}
In this subsection we show that if normalization is carried out on the MCUW-SSD codes then it turns out to be the same as the CUW-SSD codes.\\
\begin{theorem}
\label{normalMisC}
The normalized version of the code \eqref{uwssdalter} satisfying the conditions of \eqref{suffcondalter} satisfy the conditions given by \eqref{suffcondssd}.
\end{theorem}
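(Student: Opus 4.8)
The plan is to apply the normalization \eqref{normalizessd} to the Minkowski weight matrices of \eqref{uwssdalter} and then verify the five relations of \eqref{suffcondssd} by direct computation. Putting the Minkowski matrices $A_{iI},A_{iQ}$ (which satisfy \eqref{suffcondalter}) into the role of the unnormalized matrices $A_{iI}^\prime,A_{iQ}^\prime$, the normalized matrices are $\tilde{A}_{iI}=A_{1I}^{H}A_{iI}$ and $\tilde{A}_{iQ}=A_{1I}^{H}A_{iQ}$. The first fact I would record is that, since $A_{1I}^{H}=-A_{1I}$ and $A_{1I}$ is unitary, we have $A_{1I}^{2}=-I_n$ and $A_{1I}^{-1}=-A_{1I}=A_{1I}^{H}$; in particular $\tilde{A}_{1I}=A_{1I}^{H}A_{1I}=I_n$, which is the first requirement of \eqref{suffcondssd}. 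This identity $A_{1I}^{2}=-I_n$, and the map $X\mapsto A_{1I}^{H}XA_{1I}$ (conjugation by $A_{1I}$), are the two tools used throughout.

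Next I would verify the two relations among the $\tilde{A}_{iI}$. For $\tilde{A}_{iI}^{H}=-\tilde{A}_{iI}$ with $i\ge 2$, I would expand $\tilde{A}_{iI}^{H}=A_{iI}^{H}A_{1I}=-A_{iI}A_{1I}$ and use the anticommutation $A_{iI}A_{1I}=-A_{1I}A_{iI}$ from \eqref{suffcondalter} to rewrite this as $A_{1I}A_{iI}=-A_{1I}^{H}A_{iI}=-\tilde{A}_{iI}$. For the anticommutation $\tilde{A}_{iI}\tilde{A}_{jI}=-\tilde{A}_{jI}\tilde{A}_{iI}$, the key observation is that the inner conjugation collapses: using $A_{iI}A_{1I}^{H}=A_{1I}A_{iI}$ one finds $\tilde{A}_{iI}\tilde{A}_{jI}=A_{1I}^{H}A_{iI}A_{1I}^{H}A_{jI}=A_{iI}A_{jI}$, so the anticommutation of the normalized matrices follows immediately from that of the $A$'s.

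The heart of the argument is the treatment of the discriminant. I would compute $\tilde{A}_{1Q}=A_{1I}^{H}A_{1Q}=A_{1I}^{H}\hat{A}_{1Q}A_{1I}$ (using $A_{1Q}=\hat{A}_{1Q}A_{1I}$), and then exploit the Minkowski anticommutation $\hat{A}_{1Q}A_{1I}=-A_{1I}\hat{A}_{1Q}$ together with $A_{1I}^{2}=-I_n$ to reduce this to the clean identity $\tilde{A}_{1Q}=-\hat{A}_{1Q}$. This is precisely where the passage from Minkowski to Euclidean happens: Hermiticity $\tilde{A}_{1Q}^{H}=\tilde{A}_{1Q}$ is then immediate from $\hat{A}_{1Q}^{H}=\hat{A}_{1Q}$, and, crucially, the \emph{anticommuting} $\hat{A}_{1Q}$ becomes a \emph{commuting} discriminant after normalization. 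To establish $\tilde{A}_{1Q}\tilde{A}_{jI}=\tilde{A}_{jI}\tilde{A}_{1Q}$ I would push $\hat{A}_{1Q}$ through $\tilde{A}_{jI}=A_{1I}^{H}A_{jI}$: moving it past $A_{1I}^{H}$ produces one sign flip and past $A_{jI}$ (via $\hat{A}_{1Q}A_{jI}=-A_{jI}\hat{A}_{1Q}$) produces a second, so the two flips cancel and $\hat{A}_{1Q}\tilde{A}_{jI}=\tilde{A}_{jI}\hat{A}_{1Q}$. The last relation $\tilde{A}_{iQ}=\tilde{A}_{1Q}\tilde{A}_{iI}$ follows by substituting $A_{iQ}=\hat{A}_{1Q}A_{iI}$ and applying the same pushing identity, both sides reducing to $-A_{1I}\hat{A}_{1Q}A_{iI}$.

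I expect no conceptual obstacle, since the statement is a direct verification; the one point requiring care is the sign book-keeping in establishing $\tilde{A}_{1Q}=-\hat{A}_{1Q}$ and the commutation relation, because it is exactly the double sign flip (conjugation by $A_{1I}$ turning an anticommutation into a commutation) that converts the Minkowski discriminant, which anticommutes with the $A_{iI}$, into the Euclidean discriminant demanded by \eqref{suffcondssd}.
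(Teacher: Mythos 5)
Your proposal is correct and follows essentially the same route as the paper's own proof: normalize by left-multiplying with $A_{1I}^{H}$, derive the key identity $\tilde{A}_{1Q}=-\hat{A}_{1Q}$, and then verify each condition of \eqref{suffcondssd} by direct sign-tracking using $A_{1I}^{2}=-I_n$ and the anticommutation relations, with the double sign flip converting the anticommuting Minkowski discriminant into a commuting Hermitian one. Your write-up is in fact slightly cleaner than the paper's (which derives Hermiticity of $\tilde{A}_{1Q}$ by a separate, typo-laden computation rather than reading it off from $\tilde{A}_{1Q}=-\hat{A}_{1Q}$), but the substance is identical.
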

\begin{proof}
Let
\begin{equation*}
\label{uwssdaltern} S=\sum_{i=1}^{k}\left(x_{iI}{\tilde A}_{iI}+x_{iQ}{\tilde A}_{iQ}\right)
\end{equation*}
be the normalized version of \eqref{uwssdalter} obtained by the substitution
\begin{eqnarray*}
{\tilde A}_{iI}=A_{1I}^HA_{iI} \\
{\tilde A}_{iQ}=A_{1I}^HA_{iQ}.
\end{eqnarray*}
Notice that ${\tilde A}_{1I}=I$ and
\begin{eqnarray*}
{\tilde A}_{1Q}=A_{1I}^HA_{1Q}=A_{1I}^H{\hat A}_{1Q}A_{1I}=-A_{1I}^HA_{1I}{\hat A}_{1Q}=-{\hat A}_{1Q}.
\end{eqnarray*}

For $2 \leq i \leq K$, We have
\begin{eqnarray}
\label{cond1}
\begin{array}{rl}
{\tilde A}_{iI}^H & = (A_{1I}^HA_{iI})^H= A_{iI}^HA_{1H} = -A_{iI}A_{1I}\\
 & =A_{1I}A_{iI}=-A_{1I}^HA_{iI} = -{\tilde A}_{iI}.
\end{array}
\end{eqnarray}
                                                                                
For $2 \leq i \neq j \leq K$, we have
\begin{eqnarray}
\label{cond2}
\begin{array}{rl}
{\tilde A}_{iI} {\tilde A}_{jI} & = A_{1I}^HA_{iI}A_{1I}^HA_{jI}= A_{1I}A_{iI}A_{1I}A_{jI} \\
 & = -A_{1I}A_{jI}A_{1I}A_{iI} = -{\tilde A}_{jI}{\tilde A}_{iI}.
\end{array}
\end{eqnarray}
                                                                                
Also, we have,
\begin{eqnarray}
\label{cond3}
\begin{array}{rl}
{\tilde A}_{1Q}^H & = (A_{1I}^HA_{1Q})^H = A_{1Q}^HA_{1I} = A_{1I}^HA_{1Q}^HA_{1I} \\
 & = A_{1I}^HA_{1Q}^HA_{1I} = A_{1I}^HA_{1Q}= {\tilde A}_{1Q}^H.
\end{array}
\end{eqnarray}

For $2 \leq i \leq K$, we have
\begin{eqnarray}
\label{cond4}
\begin{array}{rl}
{\tilde A}_{iQ} & = A_{1I}^HA_{iQ} = A_{1I}^H{\hat A}_{1Q}A_{iI} \\
 & = -A{1I}{\hat A}_{1Q}A_{iI} = {\hat A}_{1Q}A_{1I}A_{iI} = -{\hat A}_{1Q}A_{1I}^HA_{iI} \\
 & = -{\hat A}_{1Q}{\tilde A}_{iI} = {\tilde A}_{1Q}{\tilde A}_{iI}.
\end{array}
\end{eqnarray}
                                                                                
We proceed to show that ${\tilde A}_{1Q}$ commutes with all ${\tilde A}_{iI}$ for $2 \leq i$:
\begin{eqnarray}
\label{cond5}
\begin{array}{rl}
{\tilde A}_{1Q}{\tilde A}_{iI} & = -{\hat A}^{1Q}A_{1I}^HA_{iI}= {\hat A}_{1Q}A_{1I}A_{iI} \\
 & = -A_{1I}{\hat A}_{1Q}A_{iI} = A_{1I}A_{iI}{\hat A}_{1Q} = -A_{1I}^HA_{iI}{\hat A}_{1Q} \\
 & = (A_{1I}^HA_{iI})(-{\hat A}_{1Q}) = {\tilde A}_{iI}{\tilde A}_{1Q}.
\end{array}
\end{eqnarray}
                                                                                
The equations \eqref{cond1}-\eqref{cond5} show that all the conditions of \eqref{suffcondssd} are fulfilled.
\end{proof}
\section{Proof for the claim $k \neq (2a+1)$ in Theorem \ref{ubthm}}
\label{Append3}
\begin{proof}
The proof is by contradiction- suppose $K=2a+1$ in \eqref{doublenormal}.

By putting $j=1$ and $i\geq 2$ in \eqref{A21} we get,
\begin{equation*}
\begin{array}{c}
A_{iI}^{H}A_{1Q}+A_{1Q}^{H}A_{iI}=0, \quad  2\leq i\leq 2a+1 \\
\Rightarrow \quad A_{iI}A_{1Q} = A_{1Q}^{H}A_{iI}, \quad  2 \leq i\leq 2a+1
\end{array}
\end{equation*}
\begin{equation}
\label{a15}
\Rightarrow \quad A_{1Q}A_{iI} = A_{iI}A_{1Q}^{H} \quad \textrm{for} \, 2\leq i\leq 2a+1.
\end{equation}
Define
\begin{equation}
\label{a17}
H=\frac{A_{1Q}^{H}+A_{1Q}}{2}, \quad   S=\frac{A_{1Q}-A_{1Q}^{H}}{2}
\end{equation}
so that 
\begin{equation*}
\label{a18}
 A_{1Q}= H+S, \quad H=H^{H} \quad \textrm{ and }   S^{H}=-S.
\end{equation*}
Now using \eqref{a15} we have,
\begin{equation}
\label{a19}
\left(H+S\right)A_{iI} = A_{iI}\left(H-S\right), \quad 2 \leq i \leq 2a+1.
\end{equation}
Also, let us define $P_i$ and $Q_i$ by
\begin{equation}
\label{a20}
HA_{iI} = A_{iI}P_{i}; \quad SA_{iI} = A_{iI}Q_{i} \quad  2 \leq i \leq 2a+1.
\end{equation}
Note that  $P_{i}$ is Hermitian and $Q_{i}$ is anti-Hermitian. Now from \eqref{a19} and \eqref{a20} we get,
\begin{equation*}
A_{iI}\left(P_{i}+Q_{i}\right)=A_{iI}\left(H-S\right)
\end{equation*}
which implies 
\begin{equation}
\label{a21}
 P_{i}+Q_{i}=H-S \quad 2 \leq i \leq  2a+1
\end{equation}
taking Hermitian of both sides of which we get,
\begin{equation}
\label{a22}
P_{i}-Q_{i}=H+S  \quad 2 \leq i \leq  2a+1.
\end{equation}
Adding and subtracting \eqref{a21} and \eqref{a22} we get,
\begin{equation*}
\label{a23}
P_{i}=H, \quad Q_{i}=-S  \quad 2 \leq i \leq 2a+1
\end{equation*}
which implies
\begin{equation}
\label{a24}
HA_{iI}=A_{iI}H , \quad SA_{iI}=-A_{iI}S  \quad 2 \leq i \leq 2a+1.
\end{equation}
Now from \eqref{a24}, we see that the Hermitian matrix $H$ commutes with all $A_{iI}$'s for $i= 2 \cdots 2a+1$. But this set of $A_{iI}'s$ represents an irreducible representation of $CA_{2a}$, and hence from Lemma~\eqref{cliff_group} and Schur's lemma, $H=\alpha I_n$.
Now from \eqref{a17} we have $A_{1Q}=\alpha I_n+S$ and since  $A_{1Q}$ is unitary, we have
\begin{eqnarray*} 
(\alpha I_n+S) (\alpha I_n+S)^H = I_n \\
\Rightarrow \, \left(\alpha I_{n\times n}+S\right) \left(\alpha I_{n\times n}-S\right)=I_n \\
\Rightarrow \, \left({\alpha}^{2} I_{n\times n}-S^{2}\right)=I_n \\
\Rightarrow \, S^{2}=-\left(1-{\alpha}^2\right)I_n \\
 \Rightarrow \, S= {\beta}_{1} M,
\end{eqnarray*}
where $M$ is an unitary skew-Hermitian matrix with $\beta_1 \in \mathbb{R}$ and from \eqref{a24}
\begin{equation*}
\label{a25}
MA_{iI}=-A_{iI}M, \quad  2 \leq i \leq 2a+1.
\end{equation*}
Note that $\alpha \leq 1$, because otherwise $S$ can not be a skew-Hermitian matrix.\\

A nondegenerate irreducible representation of $CA_{2a+1}$ can be found from an irreducible representation of $CA_{2a}$ in the following way (See Proposition A.6 of \cite{TiH}) : If $ {\cal R}_{2a}(\gamma_i), ~~i=2,3, \cdots 2a+1$, is a representation of the generators of $CA_{2a}$, then a representation of the generators of $CA_{2a+1}$ can be generated as the set $ {\cal R}_{2a+1}(\gamma_i)={\cal R}_{2a}(\gamma_i),~ i=2,3, \cdots 2a+1$ and ${\cal R}_{2a+1}(\gamma_{2a+1})= \pm j {\prod}_{i=2}^{2a}{\cal R}_{2a+1}(\gamma_{i})$. Using this fact, if the set ${\{A_{iI}\}}_{i=2}^{2a+1}$ is an irreducible representation of $CA_{2a}$ then the set $\{\pm j {\prod}_{i=2}^{2a+1}A_{iI}\}\bigcup{\{A_{iI}\}}_{i=2}^{2a+1}$ is an irreducible representation of $CA_{2a+1}$. Now, we have the two sets $ \{M\}\bigcup {\{A_{iI}\}}_{i=2}^{2a+1}$ and $ \{A_{2a+2}\}\bigcup {\{A_{iI}\}}_{i=2}^{2a+1}$ where, 
\begin{equation*}
\label{a26}
A_{2a+2}=j{\prod}_{i=2}^{2a+1}A_{iI}
\end{equation*}
constituting two irreducible representations of $CA_{2a+1}$. Hence from Lemma~ \ref{unique_last} we have $M=\pm A_{2a+2} $ and  also,
\begin{equation*}
\label{a29}
A_{1Q}=(\alpha I_n+\beta A_{2a+2}), \, \textrm{where} \,\beta=\pm {\beta}_{1} \, \in \, \mathbb{R}.
\end{equation*}
Now, $A_{1Q}A_{1Q}^{H}=I_n 
\Rightarrow \, \left(\alpha I_n+\beta A_{2a+2}\right)\left(\alpha I_n-\beta A_{2a+2}\right)=I_n 
\Rightarrow \, \left({\alpha}^{2}+{\beta}^{2}\right)I_n=I_n $ leads to
\begin{equation*}
\label{a30}
{\alpha}^{2}+{\beta}^{2}=1.
\end{equation*}

Putting $i=1$ and $j\geq 2$ in  \eqref{A21}  we get,
\begin{eqnarray}
\begin{array}{l}
A_{jQ}+A_{jQ}^{H}=0 \quad \textrm{  for  }  2\leq j\leq 2a+1\\ 
\label{a16}
\Rightarrow  A_{1Q}{A^{\prime}}_{jQ}=-{A^{\prime}}_{jQ}^{H}A_{1Q}^{H}\quad \mbox{ since }  A_{jQ}=A_{1Q}{A^{\prime}}_{jQ}.
\end{array}
\end{eqnarray}

Now, from \eqref{a16}, for    $2 \leq j \leq 2a+1 $
\begin{eqnarray*}
\begin{array}{l}
A_{1Q}{A^{\prime}}_{jQ}=-{A^{\prime}}_{jQ}^{H}A_{1Q}^{H}\\
\Rightarrow (\alpha I_n+\beta A_{2a+2}){A^{\prime}}_{jQ}=\alpha {A^{\prime}}_{jQ}-\beta {A^{\prime}}_{jQ}A_{2a+2}.
\end{array}
\end{eqnarray*}
Therefore,
\begin{equation*}
\label{a31}
A_{2a+2}{A^{\prime}}_{jQ}=-{A^{\prime}}_{jQ}A_{2a+2}, \quad  2 \leq j \leq 2a+1.
\end{equation*}

Now the set $\{A_{2a+2}\}\bigcup{\{{A^{\prime}}_{iQ}\}}_{i=2}^{2a+1} $ satisfies all the conditions required for it to be a faithful representation of the generators of the $CA_{2a+1}$.By the similar arguments as given above  if we assume ${A^{\prime}}_{2a+2}=j{\prod}_{i=2}^{2a+1}{A^{\prime}}_{iQ} $, then $\{{A^{\prime}}_{2a+2}\}\bigcup{\{{A^{\prime}}_{iQ}\}}_{i=2}^{2a+1} $ is another irreducible representation of the $CA_{2a+1}$. Therefore by Lemma~ \ref{unique_last}. we get
\begin{equation}
\label{a32}
A_{2a+2}=\pm {A^{\prime}}_{2a+2}=\pm j{\prod}_{i=2}^{2a+1}{A^{\prime}}_{iQ}.
\end{equation}
Now ${\{A_{iI}\}}_{i=2}^{2a+1} \, \textrm{and} \,{\{{A^{\prime}}_{iQ}\}}_{i=2}^{2a+1} $ are two different irreducible representations of $CA_{2a}$. But from  Proposition A.5 of \cite{TiH} we know there is only one $2^{a}$ dimensional irreducible representations of $CA_{2a}$. Hence these two are equivalent representations and there  exists a \emph{special unitary} matrix $V$, i.e, $\det V =1;  \, V^{H}V=I_n$, such that,
\begin{equation*}
\label{a33}
{A^{\prime}}_{iQ}=V^{H}A_{p\left(i\right)I}V,  \, 2 \leq i \leq 2a+1
\end{equation*}
where $p(.)$ is a permutation of the set $\left\{2,3, \cdots 2a+1\right\}$. Now,
\begin{eqnarray}
\label{a34}
A_{2a+2}=\pm {A^{\prime}}_{2a+2} =V^{H}\{\pm j{\prod}_{i=2}^{2a+1}A_{iI}\}V \\
  =V^{H}\{\gamma A_{2a+1}\}V,  \textrm{  where  } \gamma \in \{+1,-1\}. \nonumber
\end{eqnarray}
Using this equation we see that
\begin{equation*}
\label{a35}
A_{1Q}=(\alpha I_n+\beta A_{2a+2}) =V^{H}(\alpha I_n+\beta \gamma A_{2a+2})V
\end{equation*}

and 

\begin{equation*}
\label{a36}
\begin{array}{l}
A_{iQ}=A_{1Q}{A^{\prime}}_{iQ}=V^{H}\left\{\alpha I_n+\beta \gamma A_{2a+2}\right\}VV^{H}A_{p(i)I}V  \\
=V^{H}\left\{\alpha A_{p\left(i\right)I}+\beta \gamma A_{2a+2}A_{p\left(i\right)I}\right\}V, ~~~~ 2 \leq i \leq 2a+1. 
\end{array}
\end{equation*}
Now, defining 
\begin{equation*}
\label{a37}
M_0=I_n;  \quad  N_0=\left(\alpha I_n+\beta \gamma A_{2a+2}\right);
\end{equation*}
\begin{equation}
\label{a38}
N_i=\left(\alpha A_{p\left(i\right)I}+\beta \gamma A_{2a+2}A_{p(i)I}\right); \quad  M_i=VA_{iI}V^{H};
\end{equation}
we have
\begin{equation}
\label{a39}
A_{iI}=V^HM_iV; \quad A_{iQ}=V^HN_iV;  \quad 1 \leq i \leq  2a+1.
\end{equation}
Notice that the matrices $M_i$ and $N_i$ are skew-Hermitian.  Now for  $2 \leq i \neq j  \leq  2a+1$, the following requirement for SSD code,
\[ A_{iI}^{H}A_{iQ}+A_{iQ}^{H}A_{iI}=0 \]
translates to, in view of \eqref{a39}, 
\begin{eqnarray*}
\label{a40}
\begin{array}{c}
V^{H}\{M_{i}^{H}N_{j}+N_{j}^{H}M_{i}\}V=0\\
\mbox{ i.e., } \quad M_{i}N_{j}=-N_{j}M_{i}, \quad \, 2 \leq i \neq j \leq  2a+1
\end{array}
\end{eqnarray*}
Now, for a specific  value of $i=k$, 
\begin{equation*}
\label{a41}
M_{k}N_{j}=-N_{j}M_{k} \quad \,  2\leq j\neq k \leq 2a+1.
\end{equation*}
On the other hand,
\begin{eqnarray*}
\begin{array}{l}
M_{i}\gamma A_{2a+2} \\
=VA_{iI}V^{H}\gamma A_{2a+2}\\
=VA_{iI}A_{2a+2}V^{H} \quad \textrm{using \eqref{a34}}\\
= -VA_{2a+2}A_{iI}V^{H}\\
= -\gamma A_{2a+2}VA_{iI}V^{H}\quad \textrm{using \eqref{a34}}\\
= -\gamma A_{2a+2}M_{i}, \textrm{  for  } \, 1\leq i \leq 2a.
\end{array}
\end{eqnarray*}
In particular, 
\begin{equation*}
\label{a42}
M_{k}\gamma A_{2a+2}=-\gamma A_{2a+2}M_{k}.
\end{equation*}
From \eqref{a32}, we see that $\{A_{2a+2}^\prime \} \bigcup\{A_{iQ}^\prime \}$ is an irreducible representation of $CA_{2a+1}$. Therefore, $\{VA_{2a+2}^\prime V^H\} \bigcup\{VA_{iQ}^\prime V^H\}$ is also an irreducible representation of $CA_{2a+1}$ since $V$ is an special unitary matrix. But,
\begin{eqnarray*}
VA_{2a+2}^\prime V^H = \pm \gamma A_{2a+2} \quad (\mbox{using  }\eqref{a34}) \\
VA_{iQ}^\prime V^H=N_i, \quad 2 \leq i \leq 2a+1, \quad (\mbox{using  } \eqref{a38}).
\end{eqnarray*}
Therefore $\{\gamma A_{2a+2}\}\bigcup {\{N_{i}\}}_{i=2}^{2a+1}$ is  representation of $CA_{2a+1}$. 

Since $\{\gamma A_{2a+2}\}\bigcup {\{N_{i}\}}_{i=2}^{2a+1}$ and  $\{\gamma A_{2a+2}, M_k \}\bigcup \{{\{N_{i}\}}_{i=2}^{2a+1}\setminus \{N_{k}\}\}$ are two irreducible representations of the generators of the $Cliff_{2a+1}$, from Lemma \ref{unique_last} we have, 
\begin{eqnarray*}
M_{k}=cN_{k}, \quad c\, \in \{+1, -1\}
\end{eqnarray*}
which leads to 
\begin{eqnarray*}
A_{kI}= V^{H}M_{k}V=c V^{H}N_{k}V=cA_{kQ} \quad \textrm{from \eqref{a38}.}
\end{eqnarray*}
contradicting  the requirement \eqref{nopathology}. Hence,  
\[ 
K \neq (2a+1).
\]
\end{proof}

\section*{Acknowledgement}
This work was partly supported by
the DRDO-IISc Program on Advanced Research in Mathematical
Engineering and by the Council of Scientific \&
Industrial Research (CSIR), India, through Research Grant (22(0365)/04/EMR-II) to B.S.~Rajan.\\
We thank X.-G.Xia for sending the preprint of \cite{WWX1}.




\end{document}